\algnewcommand\algorithmicinput{\textbf{Input:}}
\algnewcommand\algorithmicoutput{\textbf{Output:}}
\Crefname{figure}{Figure}{Figures}
\newcommand\defeq{\ensuremath{\stackrel{\rm def}{=}}} 
\theoremstyle{plain}
\newtheorem{thm}{Theorem}
\newtheorem{assumption}{Assumption}
\newtheorem{lemma}[thm]{Lemma}
\newtheorem{cor}[thm]{Corollary}
\newtheorem*{cor*}{Corollary}
\theoremstyle{definition}
\newtheorem{defn}[thm]{Definition}
\newtheorem*{rmk}{Remark}
\theoremstyle{remark}
\newcommand\LCS{\textup{LCS}}
\newcommand{\N}{\mathbb{N}}
\newcommand{\F}{\mathbb{F}}
\newcommand{\cO}{\mathcal{O}}
\newcommand{\eps}{\varepsilon}
\newcommand{\range}{\mathrm{Set}}
\newcommand{\var}{\mathrm{Var}}
\newcommand{\floor}[1]{\left\lfloor #1 \right\rfloor}
\newcommand{\ceil}[1]{\left\lceil #1 \right\rceil}
\newcommand{\poly}{\mathsf{poly}}
\newcommand{\rank}{\mathrm{rank}}
\renewcommand{\epsilon}{\varepsilon}
\newcommand{\ed}[2]{\textup{ED}(#1, #2)}
\newcommand{\Fq}{\mathbb{F}_q}
\newif\ifmynotes
\begin{document}
\date{}
 \title{Random Reed--Solomon Codes Achieve the Half-Singleton Bound
 for Insertions and Deletions over Linear-Sized Alphabets}
\author{
Roni Con\thanks{Department of Computer Science, Technion--Israel Institute of Technology. \href{mailto:roni.con93@gmail.com}{\texttt{roni.con93@gmail.com}}  }
\and
Zeyu Guo\thanks{Department of Computer Science and Engineering, The Ohio State University. \href{mailto:zguotcs@gmail.com}{\texttt{zguotcs@gmail.com}}}  
\and
Ray Li\thanks{Mathematics and Computer Science Department, Santa Clara University. \href{mailto:rli6@scu.edu}{\texttt{rli6@scu.edu}}}  
\and
Zihan Zhang\thanks{Department of Computer Science and Engineering, The Ohio State University. \href{mailto:zhang.13691@buckeyemail.osu.edu}{\texttt{zhang.13691@osu.edu}} 
}}
\maketitle
\begin{abstract}
In this paper, we prove that with high probability, random Reed--Solomon codes approach the half-Singleton bound --- the optimal rate versus error tradeoff for linear insdel codes ---  with linear-sized alphabets.
More precisely, we prove that, for any $\epsilon>0$ and positive integers $n$ and $k$, with high probability, random Reed--Solomon codes of length $n$ and dimension $k$ can correct $(1-\varepsilon)n-2k+1$ adversarial insdel errors over alphabets of size $n+2^{\poly(1/\varepsilon)}k$.
This significantly improves upon the alphabet size demonstrated in the work of Con, Shpilka, and Tamo (IEEE TIT, 2023), who showed the existence of Reed--Solomon codes with exponential alphabet size $\widetilde O\left(\binom{n}{2k-1}^2\right)$ precisely achieving the half-Singleton bound.

Our methods are inspired by recent works on list-decoding Reed--Solomon codes. Brakensiek--Gopi--Makam (STOC 2023) showed that random Reed--Solomon codes are list-decodable up to capacity with exponential-sized alphabets, and Guo--Zhang (FOCS 2023) and Alrabiah--Guruswami--Li (STOC 2024) improved the alphabet-size to linear. We achieve a similar alphabet-size reduction by similarly establishing strong bounds on the probability that certain random rectangular matrices are full rank.
To accomplish this in our insdel context, our proof combines the random matrix techniques from list-decoding with structural properties of Longest Common Subsequences.
\end{abstract}
\newpage
\section{Introduction}

Error-correcting codes (hereafter referred to as codes) are constructs designed to enable the recovery of original information from data that has been corrupted. 
The primary corruption model explored in the literature involves errors or erasures. In this model, each symbol in the transmitted word is either replaced with a different symbol from the alphabet (an error) or with a '?' (an erasure).

The theory of coding theory originated with the influential works of Shannon and Hamming. Shannon \cite{shannon1948mathematical} studied random errors and erasures, whereas Hamming \cite{hamming1950error} studied the adversarial model for errors and erasures. 
These models are well understood, and today, we have efficiently encodable and decodable codes that are optimal for Shannon's model of random errors. For adversarial errors, optimal and efficient codes exist over large alphabets, and there are good codes (codes with constant relative rate and relative distance) for every constant-sized alphabet.

Another important model that has been considered ever since Shannon's work is that of \emph{synchronization} errors. The most common model for studying synchronization errors is the insertion-deletion model (insdel for short): an insertion error is when a new symbol is inserted between two symbols of the transmitted word and a deletion is when a symbol is removed from the transmitted word. For example, over the binary alphabet, when $100110$ is transmitted, we may receive the word $1101100$, which is obtained from two insertions ($1$ at the beginning and $0$ at the end) and one deletion (one of the $0$'s at the beginning of the transmitted word). These errors are related to the more common \emph{substitution} and \emph{erasure} errors: a deletion is like an erasure, but the position of the deletion is unknown, and a substitution can be imitated by a deletion followed by an insertion. 
Despite their apparent similarity to well studied error models, insdel errors are much more challenging to handle.

Coding schemes that correct insdel errors are not only an intriguing theoretical concept but also have implications in real-world scenarios. Such codes find applications in magnetic and optical recording, semiconductor devices, integrated circuits, and synchronous digital communication networks (for detailed applications, refer to the survey by Mercier \cite{mercier2010survey}).
This natural theoretical model together coupled with its 
relevance across various domains, has led many researches in recent years to study and design codes that can correct from insdel errors \cite{bukh2016improved,haeupler2021synchronization-org,brakensiek2017efficient,guruswami2017deletion,schoeny2017codes,cheng2018deterministic,haeupler2019optimal,cheng2020efficient,guruswami2021explicit,guruswami2022zero}, just to name a few. 
Although there has been significant progress in recent years on understanding this model of insdel errors (both on limitation and constructing efficient codes), our comprehension of this model lags far behind our understanding of codes that correct erasures and substitution errors (we refer the reader to the following excellent surveys \cite{mitzenmacher2009survey,mercier2010survey,cheraghchi2020overview,haeupler2021synchronization}).
	
    Codes that can correct insdel errors also attract a lot of attention due to their possible application in designing DNA-based storage systems \cite{goldman2013towards}. This recent increased interest was paved by substantial progress in synthesis and sequencing technologies. The main advantages of DNA-based storage over classical
storage technologies are very high data densities and long-term reliability without an electrical supply. It is thus natural that designing codes for DNA storage and studying the limitations of this model received a lot of attention recently \cite{heckel2017fundamental,lenz2019coding,heckel2019characterization,shomorony2022information}.

    While linear codes are highly desirable, most of the works constructing codes for the insdel model are not linear codes, in contrast to the predominant use of linear codes for Hamming errors. Notable examples of linear codes for Hamming errors include Hamming codes, Reed–Solomon codes, Reed–Muller codes, algebraic-geometry codes, polar codes, Turbo codes, expander codes, and LDPC codes.
    The reason for the absence of linear codes in the insdel model can be found in works such as \cite{abdel2007linear, cheng2020efficient}, which show that the maximal rate of a linear code capable of correcting insdel errors is significantly worse than that of a nonlinear code correcting the same number of insdel errors.
    However, linear codes are desirable for many reasons: they offer compact representation, efficient encoding, easier analysis, and in many cases, efficient decoding algorithms.
    Moreover, linear codes are a central mathematical object that has found widespread applications across various research areas. As such, studying them in the context of insdel errors is important and was the subject of recent works \cite{cheng2020efficient,chen2022coordinate,con2022explicit,con2023reed,ji2023strict,cheng2023linear,liu2024optimal}.

    This work concerns the performance of Reed--Solomon codes in the insdel model. As Reed--Solomon codes are ubiquitous in theory and practice, 
    it is important to understand whether they can also decode from insdel errors. This question received a lot of attention recently \cite{safavi2002traitor,wang2004deletion,tonien2007construction,duc2019explicit,liu20212,chen2021improved,con2023reed,con2023optimal,liu2024optimal}. 
    Our result makes a substantial improvement in this line of research. Specifically, we show that ``most'' RS codes over linear-sized fields have almost optimal capabilities correcting insdel errors. 
Our methods are inspired by recent works on list-decoding Reed--Solomon codes \cite{GZ23,alrabiah2023randomly}, which showed that `most' RS codes are list-decodable up to capacity over linear-sized alphabets. Specifically, we achieve our result by 
    adapting the random matrix techniques from \cite{GZ23,alrabiah2023randomly} to the insdel setting which required the development of several chain-based structural results of longest common subsequences.

    \subsection{Previous works}
    \paragraph{Insdel codes.} To offer a wider context, we now discuss some results related to the broader study of codes correcting adversarial insdel errors. 

    Codes correcting adversarial insdel errors were first considered in the seminal works of Levenshtein \cite{levenshtein1966binary} and Varshamov and Tenengolts \cite{varshamov1965codes}, the latter of which constructed binary codes that can correct a single insdel error with optimal redundancy. 
    The first efficient construction of asymptotically good binary insdel codes --- codes with asymptotically positive rate and tolerating an asymptotically positive insdel fraction --- are, as far as we know, given in \cite{schulman1999asymptotically}. 

    Despite these early successes and much recent interest, there are still significant gaps in our understanding of insdel codes, particularly for binary codes.
    For binary codes, gaps remain for several important questions, including (i) determining the optimal redundancy-error tradeoff and finding explicit binary codes for correcting $t$ insdels, for constant $t$ \cite{guruswami2021explicit,sima2020optimal,sima2020systematic}; (ii) finding explicit binary codes for correcting $\varepsilon n$ insdels for small constant $\varepsilon>0$ with optimal rate \cite{cheng2018deterministic,haeupler2019optimal}, (iii) determining the \emph{zero-rate threshold}, the maximum fraction of errors correctable by an asymptotically positive rate code (we know the answer to be in $[\sqrt{2}-1,1/2-10^{-40}]$) \cite{guruswami2016efficiently, guruswami2022zero}, and (iv) determining the optimal rate-versus distance tradeoff for deletion codes \cite{levenshtein1966binary,levenshtein2002bounds,yasunaga2024improved}, among others. 
    
    On the other hand, when the alphabet can be larger, as is the case in this work, the picture is more complete.
    Haeupler and Shahrasbi \cite{haeupler2021synchronization-org}, using a novel primitive called \emph{synchronization strings}, constructed codes of rate $1 - \delta - \varepsilon$ that can correct $\delta$ fraction of insdel errors over an alphabet of size $\cO_{\epsilon}(1)$. 
    This work was the first one to show explicit and efficient constructions of codes that can achieve the Singleton bound in the edit distance setting over constant alphabets.
    This work also inspired several follow-up works that improved the secondary code parameters and applied synchronization strings to related problems; we refer the reader to \cite{haeupler2021synchronization} for a discussion.

    \paragraph{Linear insdel codes.}

    As far as we know, the first study about the performance of linear codes against insdel errors is due to \cite{abdel2007linear}. Specifically, they showed that \emph{any} linear code that can correct one deletion must have rate at most $1/2$. 
    Note that this result shows that linear codes are provably worse than nonlinear codes in the context of insdel errors. Indeed, nonlinear can achieve rate close to $1$ whereas linear codes have rate $\leq 1/2$. 
    
    Nevertheless, as described in the introduction, studying the performance of linear codes against insdel errors is an important question that indeed has been studied in recent years. 
    The basic question of whether there exist good linear codes for the insdel model was first addressed in the work of Cheng, Guruswami, Haeupler, and Li \cite{cheng2020efficient}. They showed that there are linear codes of rate $R = (1-\delta)/2 - h(\delta)/\log_2(q)$ that can correct a $\delta$ fraction of insdel errors. 
    They also established two upper bounds on the rate of linear codes that can correct a $\delta$ fraction of deletions. First, they proved the half-Plotkin bound (\cite[Theorem 5.1]{cheng2020efficient}), which states that every linear code over a finite field $\Fq$ capable of correcting a $\delta$-fraction of insdel errors has rate at most $\frac{1}{2} \left( 1 - \frac{q}{q-1} \delta \right) + o(1)$.
    Then, they established the following alphabet-independent bound, 
    \begin{thm}[{Half-Singleton bound \cite[Corollary 5.2]{cheng2020efficient}}]
		Every linear insdel code which is capable of correcting a $\delta$ fraction of deletions has rate at most $(1-\delta)/2 + o(1)$.
\end{thm}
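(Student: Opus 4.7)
The plan is to establish the half-Singleton bound by contradiction. Suppose a linear code $C \subseteq \Fq^n$ of dimension $k$ with $k > (n-t+1)/2$ (where $t = \delta n$) corrects $t$ deletions; I will derive a contradiction by exhibiting two distinct codewords whose longest common subsequence has length at least $n - t$, contradicting the deletion-correction property.

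The first step is to translate deletion correction into a longest common subsequence condition: $C$ corrects $t$ deletions if and only if every pair of distinct codewords $c_1, c_2 \in C$ satisfies $\LCS(c_1, c_2) \le n - t - 1$. This equivalence holds because the set of strings obtainable from a codeword by at most $t$ deletions is precisely the set of its subsequences of length at least $n - t$, and unique decoding amounts to these sets being disjoint across distinct codewords.

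Next, a dimension-counting argument produces a candidate pair. Consider the subspaces $U = \{c \in C : c_i = 0 \text{ for all } i \in [n-k+2, n]\}$ and $V = \{c \in C : c_i = 0 \text{ for all } i \in [1, k-1]\}$; each is cut out by $k - 1$ linear conditions on the $k$-dimensional space $C$, so each has dimension at least $1$. Pick any nonzero $c_1 \in U$ (supported on $[1, n-k+1]$) and any nonzero $c_2 \in V$ (supported on $[k, n]$). For $k \ge 3$, any common element would have support inside $[k, n-k+1]$ and be forced to zero by the linear constraints, so $c_1 \ne c_2$ automatically.

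The main obstacle will be to lower-bound $\LCS(c_1, c_2)$ up to the threshold $n - t$. A natural zero-matching aligns the guaranteed zero suffix of $c_1$ at positions $[n-k+2, n]$ with the guaranteed zero prefix of $c_2$ at positions $[1, k-1]$ via the monotone pairing $i \mapsto i - (n-k+1)$, producing $k - 1$ zero-zero matches. The remaining matches must come from positional agreements in the overlapping support region $[k, n-k+1]$, supplemented by any additional zero-positions occurring inside these supports. Extracting these extra matches while preserving monotonicity is delicate: the naive zero-zero matching and the natural shift-based matching of overlap values have incompatible orderings on the $j$-indices and cannot be concatenated. The resolution requires additional linear-algebraic work: one must enlarge the subspaces $U, V$ or exploit the forced linear dependence of the columns of the generator matrix restricted to the overlap region (since $n - 2k + 2$ columns in $\Fq^k$ become dependent when the overlap is long) to choose $c_1, c_2$ so that their middle values align. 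Once the bound $\LCS(c_1, c_2) \ge n - t$ is established, combining it with the equivalence above yields the desired contradiction, and hence $k \le (n - t + 1)/2$; substituting $t = \delta n$ gives the claimed rate bound $R = k/n \le (1 - \delta)/2 + o(1)$.
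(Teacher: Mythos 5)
The paper does not prove this statement: it is a cited result from Cheng, Guruswami, Haeupler, and Li \cite[Corollary 5.2]{cheng2020efficient}, quoted for background. So there is no ``paper's proof'' to compare against; I can only assess your argument on its own terms.

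Your proposal has a genuine gap, and it is more serious than the ``delicate monotonicity issue'' you describe. After choosing nonzero $c_1 \in U$ (zero on $[n-k+2,n]$) and $c_2 \in V$ (zero on $[1,k-1]$), your ``natural zero-matching'' pairs $c_1$'s position $i$ with $c_2$'s position $i-(n-k+1)$ for $i \in [n-k+2,n]$. That matching is monotone and gives $k-1$ matches, but it uses the \emph{tail} of $c_1$ against the \emph{head} of $c_2$. As a common subsequence it is therefore already maximal: any additional pair $(i,j)$ with $i < n-k+2$ would need $j < 1$, and any pair with $j > k-1$ would need $i > n$, both impossible. So this alignment cannot be ``supplemented'' by middle matches under any circumstances --- the obstruction is not a subtle ordering clash between two candidate matchings, it is that the matching you started with leaves no room. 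To get $\LCS(c_1,c_2) > k-1$ you would need an entirely different alignment, and for \emph{arbitrary} nonzero $c_1 \in U$ and $c_2 \in V$ there is simply no reason the LCS exceeds $k-1$: the supports of $c_1$ and $c_2$ overlap on $[k,n-k+1]$, but nothing forces the two codewords to agree there.

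The target you need, when $k > (n-t+1)/2$, is $\LCS(c_1,c_2) \ge n-t$, which can be as large as $2k-2$. That is roughly double what the zero-matching provides, so the missing $k-1$ matches are the entire second half of the argument, not a technical detail. Your closing sentence correctly intuits that the fix requires choosing $c_1, c_2$ with an engineered coincidence --- essentially, finding two codewords one of which ``looks like a shift'' of the other, via a rank/dimension argument on truncation maps --- but you do not carry this out, and without it the claimed lower bound $\LCS(c_1,c_2)\ge n-t$ is unsupported. As written, the proposal does not constitute a proof.
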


    \begin{rmk}
        \label{main-remark}
        The following  non-asymptotic version of the half-Singleton bound can be derived from the proof of \cite[Corollary 5.2]{cheng2020efficient}:  
        An $[n,k]$ linear code can correct at most  $n-2k+1$ insdel errors. 
    \end{rmk}
The question of constructing explicit linear codes that are efficiently decodable has also been studied and there are explicit and efficient constructions of linear codes over \emph{small alphabets} (such as binary) that are asymptotically good \cite{con2022explicit,cheng2023linear}. However, we note that over \emph{small alphabets}, determining the optimal rate-error tradeoff and finding optimal explicit constructions remains an open problem.

\paragraph{Reed--Solomon codes against insdel errors.}
In this work, we focus on \emph{Reed--Solomon} codes, which are among the most well-known codes. These codes are defined as follows.
\begin{defn}[Reed--Solomon code]
    Let $\alpha_1, \alpha_2, \ldots, \alpha_n$ be distinct elements in the finite field $\mathbb{F}_q$ of order  $q$. For $k<n$, the $[n,k]_q$ \emph{Reed--Solomon} (RS) code of dimension $k$ and block length $n$ associated with the evaluation vector $(\alpha_1, \ldots, \alpha_n)\in\F_q^n$ is defined to be the set of codewords 
    \[
    \text{RS}_{n,k}(\alpha_1, \ldots, \alpha_n) := \left \lbrace \left( f(\alpha_1), \ldots, f(\alpha_n) \right) \mid f\in \mathbb{F}_q[x],\text{ }\deg f < k \right \rbrace.
    \]	
\end{defn}

The performance of RS codes against insdel errors was first considered in \cite{safavi2002traitor} in the context of traitor tracing. In \cite{wang2004deletion}, the authors constructed a $[5,2]_q$ RS code correcting a single deletion. In \cite{tonien2007construction}, an $[n,k]$ generalized RS code that is capable of correcting $\log_{k+1}n - 1$ deletions was constructed. 
Several constructions of two-dimensional RS codes correcting $n-3$ insdel errors are given in \cite{duc2019explicit,liu20212}. Note that the half-Singleton bound states that an $[n,2]$ code can correct at most $n-3$ deletions. In \cite{con2023reed}, it was shown that for an $[n,2]_q$ RS code to decode from $n-3$ insdel errors, it must be that $q=\Omega(n^3)$. On the other hand, \cite{con2023optimal} gave an explicit construction with $q=O(n^3)$. Thus, for the special case of two-dimensional RS codes, there is a complete characterization of RS codes that achieve the half-Singleton bound.

For $k>2$, much less is known. It was shown in \cite{con2023reed} that over large enough fields, there exist RS codes that exactly achieve the half-Singleton bound. 
Specifically, 
\begin{thm}[{\cite[Theorem 16]{con2023reed}}] \label{thm:con-rs-res}
    Let $n$ and $k$ be positive integers such that $2k - 1 \leq n$. For $q = 2^{\Theta(n)}$, there exists an $[n,k]_q$ RS code that can decode from $n-2k+1$ insdel errors.\footnote{The alphabet size in \cite[proof of Theorem 16]{con2023reed} is actually $q=\binom{n}{2k-1}^2 \cdot k^2 + n^2$, which is better, especially for sublinear $k=o(n)$.
    However, given that the primary parameter regime of interest in this paper is $k=\Theta(n)$, we state this simplified version for brevity.}
\end{thm}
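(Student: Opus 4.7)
The plan is to establish the theorem by a probabilistic argument: draw the evaluation points $\alpha_1,\ldots,\alpha_n$ uniformly and independently from $\F_q$ for $q = 2^{\Theta(n)}$, and show that the resulting Reed--Solomon code corrects $n-2k+1$ insdel errors with positive probability.

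First I would translate the decoding condition into a purely algebraic condition. A length-$n$ code corrects $n-2k+1$ insertions and deletions iff no two distinct codewords share a common subsequence of length $2k-1$; for a Reed--Solomon code this says that for every pair of strictly increasing index sequences $I=(i_1<\cdots<i_{2k-1})$ and $J=(j_1<\cdots<j_{2k-1})$ in $[n]$, the only pairs $(f,g)\in(\F_q[x]_{<k})^2$ satisfying $f(\alpha_{i_t})=g(\alpha_{j_t})$ for all $t$ have $f=g$. This is a homogeneous linear system in the $2k$ coefficients of $f,g$, with coefficient matrix $M_{I,J}(\alpha)\in\F_q^{(2k-1)\times 2k}$ whose $t$-th row is $(V_{i_t}\mid -V_{j_t})$, where $V_i=(1,\alpha_i,\ldots,\alpha_i^{k-1})$. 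The constants vector $(1,0,\ldots,0,1,0,\ldots,0)$ always lies in $\ker M_{I,J}(\alpha)$, so it suffices to verify $\rank(M_{I,J}(\alpha)) = 2k-1$: then the kernel is one-dimensional and contained in the diagonal subspace $D=\{(f,f):f\in\F_q^k\}$. This is a nonvanishing condition on some fixed $(2k-1)\times(2k-1)$ minor $\Delta_{I,J}(\alpha)\in\F_q[\alpha_1,\ldots,\alpha_n]$ of total degree at most $(2k-1)(k-1)=O(k^2)$.

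The critical claim is that $\Delta_{I,J}(\alpha)$ is a nonzero polynomial whenever $I\ne J$. I would prove this by viewing the $\alpha_i$ as formal indeterminates and showing that the kernel of $M_{I,J}$ over $\F_q(\alpha_1,\ldots,\alpha_n)$ is exactly the one-dimensional subspace spanned by the constants vector: any rational-function solution $(f,g)$ to $f(\alpha_{i_t})=g(\alpha_{j_t})$ for all $t$ forces $f,g$ to be the same constant, since whenever $i_t\ne j_t$ the indeterminates $\alpha_{i_t}$ and $\alpha_{j_t}$ are algebraically independent and the equation cannot hold generically unless the relevant non-constant coefficients vanish. Once non-vanishing is established, the Schwartz--Zippel lemma gives $\Pr_\alpha[\Delta_{I,J}(\alpha)=0]\le (2k-1)(k-1)/q$. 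A union bound over all $\binom{n}{2k-1}^2$ ordered pairs $(I,J)$ with $I\ne J$, plus a $\binom{n}{2}/q$ bound on the probability that two $\alpha_i$ coincide, yields a total failure probability of at most $O\bigl(k^2\binom{n}{2k-1}^2+n^2\bigr)/q$. Choosing $q$ to be a large constant times this quantity --- which is $2^{\Theta(n)}$ since $\binom{n}{2k-1}\le 2^n$ --- makes the failure probability strictly less than $1$, so a good choice of $\alpha$ exists.

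The main obstacle I expect is pinning down the non-vanishing of $\Delta_{I,J}$ when $I$ and $J$ overlap heavily. The generic-kernel argument is intuitively clear, but to make it rigorous one should identify a specific $(2k-1)\times(2k-1)$ submatrix whose determinantal expansion contains a monomial in the $\alpha_i$ that survives cancellation; this requires some combinatorial bookkeeping on the overlap pattern of $I$ and $J$, and is likely handled either by induction on $k$ or by a direct construction of "good" evaluation points. Given the non-vanishing, the Schwartz--Zippel step, the degree bound, and the union bound are routine.
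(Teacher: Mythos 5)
Your overall architecture matches the paper's: translate decoding failure to rank deficiency of a Vandermonde-like matrix $M_{I,J}$, show a $(2k-1)\times(2k-1)$ minor $\Delta_{I,J}$ is a nonzero polynomial in the $\alpha_i$, then apply Schwartz--Zippel and a union bound over $(I,J)$. However, the ``critical claim'' --- that $\Delta_{I,J}\ne 0$ for \emph{every} $I\ne J$ --- is false, and this is a genuine gap, not merely a bookkeeping difficulty. Take $k=3$, $I=(1,2,3,4,5)$, $J=(1,2,3,4,6)$: in the $5\times 5$ matrix whose $t$-th row is $(1,X_{I_t},X_{I_t}^2,X_{J_t},X_{J_t}^2)$, subtracting column $2$ from column $4$ and column $3$ from column $5$ leaves the last two columns supported only on row $5$, with surviving entries $X_6-X_5$ and $X_6^2-X_5^2$; these differ by the scalar factor $X_5+X_6$, so the two columns are proportional and the determinant vanishes identically. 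This does not witness a failure of the code: when $I$ and $J$ agree on at least $k$ coordinates, any degree-$<k$ pair $(f,g)$ satisfying $f(\alpha_{I_t})=g(\alpha_{J_t})$ for all $t$ must have $f=g$ (the difference $f-g$ vanishes at $\ge k$ distinct points), so the enlarged kernel of $M_{I,J}$ still lies inside the diagonal subspace and produces no pair of distinct codewords. You sense that heavy overlap between $I$ and $J$ is the danger zone, but the resolution is not a cleverer determinant calculation --- there the identity you want is simply untrue.

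The missing ingredient is precisely the observation recorded as \cref{bad}: a decoding failure forces the existence of increasing $I,J\in[n]^{2k-1}$ that \emph{agree on at most $k-1$ coordinates} with the associated $V$-matrix rank-deficient. Under that extra hypothesis the symbolic non-vanishing of $\det V_{k,2k-1,I,J}$ is exactly \cref{nonzero} (Proposition~2.4 of \cite{con2023reed}), and your degree bound, Schwartz--Zippel step, and the $q=2^{\Theta(n)}$ estimate then go through as you wrote them, with the union bound restricted to (or trivially bounded by) pairs with small agreement. So the fix is a single combinatorial filter applied \emph{before} the algebraic argument, not a strengthening of the algebraic argument itself.
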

In both theoretical and practical scenarios, codes over smaller alphabets tend to be more valuable, but the field size above is very large. Whether Reed--Solomon codes over significantly smaller fields are capable of correcting insdel errors up to the half-Singleton bound remained an interesting and important open problem, which we address in this work.

\paragraph{List-decodable (Hamming) codes.}
As we  mentioned, we port some crucial ideas from a different context (List-decoding under Hamming distance) to build our main results in the insertion-deletion world. 
To facilitate a better understanding, we provide a concise summary of recent and significant advancements in the field of list decoding of codes under the Hamming metric. 

The notion of \emph{list decoding} was introduced by Elias and Wozencraft \cite{elias1957list,wozencraft1958list}
in 1950s as a natural generalization of unique decoding. 
Briefly, a code exhibits satisfactory list decodability if its codewords are widely dispersed throughout the entire space, that is, there are not too many codewords located within a single ball under the Hamming metric. After the seminal work of Sudan \cite{sudan1997decoding} and Guruswami--Sudan \cite{guruswami1998improved}, which provided efficient algorithms for list decoding RS codes up to the Johnson bound \cite{johnson1962new}, the issue of understanding the list decodability of RS codes beyond the Johnson bound became very important. A long line 
of works \cite{rudra2014every, shangguan2020combinatorial, guo2022improved, ferber2022list, goldberg2022list,brakensiek2023generic,GZ23, alrabiah2023randomly, brakensiek2024ag, ron2024efficient, guo2024random} 
have made significant advancements in the understanding of list decodability for RS and related codes. Specifically, recent works \cite{brakensiek2023generic, GZ23, alrabiah2023randomly} have shown that ``most'' RS codes over exponential-sized alphabets (in terms of the code length) are optimally list decodable, and ``most'' RS codes over linear-sized alphabets are in fact almost optimally list decodable.
\subsection{Our results}

When $(\alpha_1, \ldots, \alpha_n)$ is uniformly distributed over the set of all $n$-tuples of distinct elements in $\F_q$, we say the code $\text{RS}_{n,k}(\alpha_1, \ldots, \alpha_n)$ over $\Fq$ is a \emph{random RS code} of dimension $k$ and length $n$ over $\Fq$.
In this work, we 
show that random RS codes over alphabets of size $O_{\varepsilon}(n)$, with high probability, approach the half-Singleton bound for insdel errors.
Specifically, 
\begin{thm}[Informal, Details in Theorem~\ref{main}]\label{main2}
Let $\varepsilon\in (0,1)$, and let $n$ and $k$ be positive integers. 
For a prime power $q$ satisfying $q\ge n+ 2^{\mathrm{poly}(1/\epsilon)}k$, with high probability, a random RS code of dimension $k$ and length $n$ over $\Fq$ corrects at least 
$ (1-\epsilon)n-2k+1$ adversarial insdel errors.
\end{thm}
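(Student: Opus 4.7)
My plan is to adapt the random-matrix-plus-union-bound strategy developed in the recent Hamming list-decoding works \cite{GZ23, alrabiah2023randomly} to the insertion-deletion setting, combined with a new combinatorial reduction based on structural properties of longest common subsequences.

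First, I reduce the insdel-correction requirement to a rank condition on matrices indexed by pairs of increasing sequences. Setting $t := \lceil \epsilon n \rceil + 2k - 1$, the code $\mathrm{RS}_{n,k}(\alpha_1,\ldots,\alpha_n)$ corrects $(1-\epsilon)n - 2k + 1$ insdel errors if and only if, for every pair of increasing sequences $I = (i_1 < \cdots < i_t)$, $J = (j_1 < \cdots < j_t)$ in $[n]$, no distinct pair of polynomials $f, g \in \F_q[x]_{<k}$ satisfies $f(\alpha_{i_s}) = g(\alpha_{j_s})$ for all $s \in [t]$. For each fixed $(I,J)$, these constraints define a $t \times 2k$ matrix $M(I,J) := [V_I \mid -V_J]$ with $(V_I)_{s,p} = \alpha_{i_s}^p$ and $(V_J)_{s,p} = \alpha_{j_s}^p$ for $s \in [t]$, $p \in \{0, \ldots, k-1\}$. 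The \emph{bad event} for $(I,J)$ is that $\ker(M(I,J))$ strictly contains the diagonal subspace $\{(\bm{h}, \bm{h}) : (V_I - V_J)\bm{h} = 0\}$; equivalently, $\rank(M(I,J)) < k + \rank(V_I - V_J)$. A simple preprocessing observation is that if $|\{s : i_s = j_s\}| \geq k$, then $f - g$ would have $\geq k$ roots but degree $< k$, forcing $f = g$, so we may restrict to pairs with at least $\epsilon n + k$ misaligned positions.

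Second, I will prove a per-pair random-matrix lemma in the spirit of \cite{GZ23, alrabiah2023randomly}: for each $(I,J)$ in the restricted class, with high probability over the random distinct $\alpha$'s, a suitably chosen square submatrix of $M(I,J)$ has nonvanishing determinant, yielding the desired rank. Viewed as a polynomial in the $\alpha$'s, this determinant admits a structured form amenable to a Schwartz--Zippel-style analysis, and the target is an exponentially small failure probability per pair. The adaptation is nontrivial because $M(I,J)$ is not a single-block Vandermonde matrix but a coupled two-block form with shared Vandermonde symmetry; identifying the right submatrix whose determinant is a ``generic'' nonzero polynomial will require carefully exploiting the fact that many positions are misaligned.

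Third, because the naive union bound over $\binom{n}{t}^2 \approx 2^{\Theta(n)}$ pairs $(I,J)$ is too loose to be offset by the per-pair bound, I will develop new structural results about longest common subsequences. Working off the monotonicity structure inherent in LCS matchings (each pair $(I,J)$ can be viewed as a monotone lattice path), the plan is to identify canonical representatives or equivalence classes among $(I,J)$ pairs so that the bad event for an entire class is controlled by a single matrix argument. This combinatorial reduction plays the role analogous to the ``size of the list'' combinatorics in Hamming list-decoding. The main obstacle will be the joint design of this structural classification and the random-matrix lemma: the per-pair probability bound must be exponentially small in a parameter that meshes with the reduced count, and the classification must simultaneously be small in number, cover all potentially bad pairs, and admit a clean determinantal certificate. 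Balancing these three constraints --- finding the right combinatorial invariant that is both structurally constrained and probabilistically tractable --- is where the main technical difficulty lies.
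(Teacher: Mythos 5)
Your first two steps are sound and match the paper's own reduction: the insdel-correction requirement does come down to full column rank of an $\ell\times(2k-1)$ matrix built from the pair $(I,J)$ (the paper merges the shared $\alpha^0=1$ column, so the matrix is $(2k-1)$-wide rather than $2k$-wide, which makes the bad event simply ``not full column rank'' rather than your rank inequality involving $\mathrm{rank}(V_I-V_J)$), and the ``at most $k-1$ agreeing coordinates'' restriction is used exactly as you describe. The per-pair target must indeed be an exponentially small failure probability via Schwartz--Zippel amplified over a sequence of ``faulty'' events.

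Your third step contains a genuine gap. You diagnose the bottleneck as the $\binom{n}{t}^2\approx 2^{\Theta(n)}$ count of pairs $(I,J)$ and propose to collapse it via equivalence classes. The paper does not do this; it unions over all $2^{\Theta(n)}$ pairs and instead drives the per-pair failure probability down to $2^{-3n}$. Moreover, even if you could collapse the pair count, a naive per-pair argument cannot give you what you need at linear alphabet size: with $q=n+O_\epsilon(k)$, the standard certificate argument (a length-$r$ sequence of faulty indices in $[n]$, contributing a union-bound factor of $n^r$ against a per-certificate probability of roughly $(k/q)^r$) yields a per-pair bound of order $(nk/q)^r\approx n^r$, which is far larger than $1$. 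The crucial missing idea is to \emph{compress the certificate space}. The paper decomposes $(I,J)$ into short, variable-disjoint ``chains,'' holds back a bank of $\Omega(\epsilon n)$ chains, and whenever a faulty index is found it replaces the offending chain with a fresh chain from the bank (``re-indeterminating'' the partially assigned matrix). This forces the certificate to be a weakly increasing sequence taking values in $\{0,\dots,2k-2\}$, so the number of certificates is only $\binom{2k+r-2}{r}\le 2^{2k+r-2}$ rather than $n^r$, and then $2^{2k+r-2}\cdot p^r \le 2^{-3n}$ does close with $q=n+2^{\mathrm{poly}(1/\epsilon)}k$. In short, the LCS chain structure is used \emph{inside} the per-pair analysis to shrink the certificate count, not to shrink the pair count; your proposal as stated attacks the wrong quantity and would not yield an exponentially small per-pair bound at linear alphabet size.
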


    For the constant rate regime, $R=\Theta(1)$, our result exponentially improves the alphabet size of Con, Shpilka, and Tamo \cite{con2023reed}, where they have $q=2^{\Theta(n)}$.
    As a warmup to this result, we prove a weaker but more straightforward result (\cref{m1}), which establishes Theorem~\ref{main2} for $q=2^{O(1/\varepsilon)}n^2$.

\subsection{Proof overview}
\label{sec:proof-overview}
We outline the proof of our main theorem in this section. 
First, we review the proof of Theorem~\ref{thm:con-rs-res} \cite{con2023reed} that achieves the half-Singleton bound with exponential-sized alphabets. 
We slightly modify the proof's presentation to parallel our proofs.
Second, we show how to prove a weaker version of Theorem~\ref{main2} (Theorem~\ref{m1}) with quadratic-sized alphabets.
Lastly, we describe how to prove Theorem~\ref{main2} that achieves linear-sized alphabets.
Throughout this overview, let $C$ be a random Reed--Solomon code of length $n$ and dimension $k$, where the tuple of evaluation points $(\alpha_1,\dots,\alpha_n)$ is sampled uniformly from all $n$-tuples of pairwise-distinct elements from $\mathbb{F}_q$.

\paragraph{Warmup: exponential alphabet size.} 
We start with the argument from \cite{con2023reed} that proves that Reed--Solomon codes achieve the half-Singleton bound over exponential-sized alphabets $q=2^{\Theta(n)}$.
Let $\ell\defeq 2k-1$.
We want (with high probability over $\alpha_1,\dots,\alpha_n$) that our code $C$ corrects $n-2k+1=n-\ell$ insdel errors, or equivalently, has pairwise small LCS: $\LCS(c,c') < \ell$ for any two distinct codewords $c,c'\in C$.

The key observation is that, if our code $C$ fails to correct $n-2k+1$ insdels, then there exist indices $I_1<\cdots<I_\ell$ and $J_1<\cdots<J_\ell$ such that the following matrix, which we call the \emph{$V$-matrix}, is \emph{bad}, meaning it does not have full column rank.
\begin{align}
    V_{k,\ell,I,J}(\alpha_1,\alpha_2,\cdots,\alpha_n):=\left(\begin{array}{ccccccc}1 & \alpha_{I_1} & \cdots & \alpha_{I_1}^{k-1} & \alpha_{J_1} & \cdots & \alpha_{J_1}^{k-1} \\1 & \alpha_{I_2} & \cdots & \alpha_{I_2}^{k-1} & \alpha_{J_2} & \cdots & \alpha_{J_2}^{k-1} \\ \vdots& \vdots & \ddots& \vdots & \vdots & \ddots & \vdots \\1 & \alpha_{I_\ell} & \cdots & \alpha_{I_\ell}^{k-1} & \alpha_{J_\ell} & \cdots & \alpha_{J_\ell}^{k-1}\end{array}\right),
\end{align}
Indeed, if $C$ fails to correct $n-2k+1$ insdels, there exist two distinct polynomials $f(X)=f_0+f_1X+\cdots+f_{k-1}X^{k-1}$ and $f'(X) = f_0'+f_1'X+\cdots+f'_{k-1}X^{k-1}$, such that the $(I_1,\dots,I_\ell)$-indexed subsequence of the codeword for $f$ equals the $(J_1,\dots,J_\ell)$-indexed subsequence of the codeword for $f'$. In that case, 
\begin{align}
\left(\begin{array}{ccccccc}1 & \alpha_{I_1} & \cdots & \alpha_{I_1}^{k-1} & \alpha_{J_1} & \cdots & \alpha_{J_1}^{k-1} \\1 & \alpha_{I_2} & \cdots & \alpha_{I_2}^{k-1} & \alpha_{J_2} & \cdots & \alpha_{J_2}^{k-1} \\ \vdots& \vdots & \ddots& \vdots & \vdots & \ddots & \vdots \\1 & \alpha_{I_\ell} & \cdots & \alpha_{I_\ell}^{k-1} & \alpha_{J_\ell} & \cdots & \alpha_{J_\ell}^{k-1}\end{array}\right)
    \cdot\begin{pmatrix}
        f_0-f_0'\\
        f_1\\
        \vdots\\
        f_{k-1}\\
        -f_1'\\
        \vdots\\
        -f_{k-1}'\\
    \end{pmatrix}
    =0,
\end{align}
so the $V$-matrix is bad.

Now, it suffices to show, with high probability, that all $V$-matrices are good (have full collumn rank).
However, by considering the determinant of the $V$-matrix (which is square as $\ell=2k-1$), the probability that one $V$-matrix is bad is at most $\frac{k(k-1)}{q-n}$ by the Schwartz--Zippel lemma.\footnote{An important detail here, which \cite{con2023reed} proves, is that (to apply the Schwartz-Zippel lemma) the determinant needs to be symbolically nonzero.} 
A $V$-matrix is defined by the indices of the subsequences $I_1<\cdots<I_\ell$ and $J_1<\cdots<J_\ell$, so there are at most $2^{2n}$ possible $V$-matrices.
Hence, by the union bound, the probability that some $V$-matrix is bad is at most $2^{2n}\cdot \frac{k(k-1)}{q-n}$.
Hence, for sufficiently large exponential alphabet sizes $q\ge 2^{\Theta(n)}$, our code corrects $n-2k+1$ insdel errors with high probability, as desired.

\paragraph{Quadratic alphabet  size.} 
We now discuss how to improve the field size bound, first to quadratic, and then to linear.

Our main idea, inspired by \cite{GZ23,alrabiah2023randomly}, is to use ``slackness'' in the coding parameters to amplify the probability that a $V$-matrix is bad.
The above warmup gives random RS codes that correct $n-(2k-1)$ errors, \emph{exactly} achieving the half-Singleton bound.
We relax the guarantee, and now ask for a random RS code to correct $n-(2k-1)-\varepsilon n$ errors, \emph{approaching} the half-Singleton bound.
Now, the corresponding $V$-matrix is a $\ell\times (2k-1)$ matrix, for $\ell\defeq (2k-1)+\varepsilon n$.
For this relaxation, we show the probability $V$-matrix is bad is at most $\left(\frac{kn}{q-n}\right)^{\Theta(\varepsilon n)}$, rather than $\frac{k(k-1)}{q-n}$. 

First, we discuss a toy problem that illustrates this probability amplification.
Consider the toy problem of independently picking $\ell$ \emph{uniformly} random row vectors $v_1,\cdots,v_\ell\in\mathbb{F}_q^{2k-1}$ to form an $\ell\times (2k-1)$ matrix $M$, which we want to have full column rank. If we choose $\ell=2k-1$, then the probability that $M$ has full column rank is bounded by a function that is $\Theta(1/q)$, and this happens only if each $v_i$ is not in the span of $v_1,\dots, v_{i-1}$. However, suppose we choose $\ell = (2k-1)+\varepsilon n$ for some small $\varepsilon > 0$. In this case, we could afford $\varepsilon n$ ``faulty'' vectors $v_i$ , i.e., $v_i$ may be in the span of previous vectors, in which case we just skip it and consider the next vector. The probability that the matrix $M$ has full column rank is then exponentially small, $1/q^{\Omega(\varepsilon n)}$.

Now we outline the formal proof of this probability amplification, captured in Corollary~\ref{cor:fullrankprob}.
\begin{cor*}[Corollary~\ref{cor:fullrankprob}, informal]
Let $\varepsilon\in[0,1]$, $\ell=(2k-1)+\varepsilon n$, and $r = \varepsilon n/2$. Let $I,J\in[n]^{\ell}$ be two increasing subsequences that agree on at most $k-1$ coordinates\footnote{More precisely, $I_i=J_i$ for at most $k-1$ values of $i$. This technical condition ensures that the $V$-matrix is symbolically full column rank.}. Then,
\[
\Pr\bigg[\text{Matrix $V_{k,\ell,I,J}(\alpha_1,\dots,\alpha_n)$ is bad}\bigg]\leq 
\left(\frac{2n(k-1)}{q-n+1}\right)^r.
\]
\end{cor*}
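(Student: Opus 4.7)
The plan is to mirror the toy problem sketched above: I view the $\ell$ rows $v_1, \ldots, v_\ell$ of $V = V_{k,\ell,I,J}$ as being revealed sequentially and exploit the $\varepsilon n$ ``slack'' rows (beyond the $2k-1$ needed for full column rank) to amplify the rank-deficiency probability. The matrix $V$ fails to have full column rank precisely when at least $\ell - (2k-2) = \varepsilon n + 1 \geq 2r$ of the rows are ``redundant'', i.e., each lies in the span of its predecessors. I will bound the probability of this event via a single-step conditional estimate showing that each redundancy event holds with probability at most $(k-1)/(q-n+1)$, combined with a union bound over the choice of an $r$-subset of redundant rows.

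The enabling structural observation is a monotonicity: setting $t_i := \max(I_i, J_i)$, the sequence $t_1 < t_2 < \cdots < t_\ell$ is strictly increasing, because $I$ and $J$ are each strictly increasing and $t_i \geq I_i > I_{i-1}$, $t_i \geq J_i > J_{i-1}$. Hence $\alpha_{t_i}$ is ``fresh'' at step $i$: it appears in none of $v_1, \ldots, v_{i-1}$. Revealing the $\alpha_j$'s in index order and letting $\mathcal{F}_{t_i - 1} := \sigma(\alpha_1, \ldots, \alpha_{t_i - 1})$, the fresh variable $\alpha_{t_i}$ is uniform on a set of size $\geq q - n + 1$ given $\mathcal{F}_{t_i - 1}$. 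Setting $W_i := \mathrm{span}(v_1, \ldots, v_{i-1})$ and $A_i := \{v_i \in W_i\}$, the event $A_i$ is equivalent to $\phi(v_i) = 0$ for every $\phi$ in the annihilator of $W_i$. Writing any such $\phi$ as $\phi(v_i) = p_\phi(\alpha_{I_i}) + q_\phi(\alpha_{J_i})$ with $\deg p_\phi, \deg q_\phi \leq k - 1$, the condition $\phi(v_i) = 0$ becomes a polynomial equation of degree $\leq k - 1$ in the fresh variable $\alpha_{t_i}$. Provided $\phi$ can be chosen so that this polynomial is not identically zero, Schwartz--Zippel yields $\Pr[A_i \mid \mathcal{F}_{t_i - 1}] \leq (k - 1)/(q - n + 1)$.

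Once the single-step bound is in hand for enough rows, I fix any size-$r$ subset $S \subseteq [\ell]$ of good rows and chain the conditional bounds across $i \in S$ in the order dictated by $t_i$ (using that $A_j$ for $j < i$ is $\mathcal{F}_{t_i-1}$-measurable), obtaining $\Pr[\bigcap_{i \in S} A_i] \leq \bigl(\tfrac{k-1}{q-n+1}\bigr)^r$. Since $V$ being rank-deficient forces at least $2r$ redundant rows, hence some size-$r$ subset of all-redundant good rows, a union bound over the $\binom{\ell}{r} \leq \ell^r \leq (2n)^r$ choices of $S$ (using $\ell = 2k-1+\varepsilon n \leq 2n$) yields the claimed $\left(\frac{2n(k-1)}{q-n+1}\right)^r$.

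The main obstacle is verifying the non-vanishing hypothesis required for Schwartz--Zippel, i.e., finding $\phi \in \mathrm{Ann}(W_i)$ whose polynomial in $\alpha_{t_i}$ is not identically zero. When $t_i = I_i > J_i$, failure of non-vanishing means that every annihilating $\phi$ has $p_\phi \equiv 0$, which forces $\mathrm{Ann}(W_i)$ into a $(k-1)$-dimensional coordinate subspace and hence $\dim W_i \geq k$; analogous statements hold for $t_i = J_i$ and for $I_i = J_i$. To handle such ``degenerate'' rows, I will invoke the hypothesis that $I$ and $J$ agree on at most $k-1$ coordinates together with the symbolic full-column-rank property of $V$ proved in \cite{con2023reed}. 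The factor-of-$2$ slack in $r = \varepsilon n / 2$ is precisely what lets a fraction of the redundant rows be dismissed as degenerate while still leaving $\geq r$ good rows for the chain-and-union bound. I expect the technical core of the proof to lie in this case analysis, tracking the interplay between the $(I_i, J_i)$-type of each row and the growth of $W_i$.
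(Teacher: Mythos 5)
Your proposal shares the high-level strategy of the paper (find a certificate of size $r$, bound the probability of any fixed certificate by a per-step conditional Schwartz--Zippel estimate, and take a union bound over certificates), but the certificate itself and the way you build the conditioning are genuinely different: the paper works at the level of \emph{variables} $X_i$, tracking ``faulty indices'' of symbolic $(2k-1)\times(2k-1)$ submatrices that are guaranteed nonsingular by \cref{nonzero}, whereas you work at the level of \emph{rows}, tracking when a realized row $v_i$ falls into the realized span $W_i$ of its predecessors. Several of your structural observations are correct and clean: the monotonicity $t_1<\cdots<t_\ell$ and the freshness of $\alpha_{t_i}$ do make $W_i$ measurable with respect to $\mathcal{F}_{t_i-1}$, and the single-step estimate $\Pr[A_i\mid\mathcal{F}_{t_i-1}]\leq (k-1)/(q-n+1)$ (when the non-vanishing hypothesis holds) is valid and is in fact a factor of $2$ better than the paper's per-step bound of $2(k-1)/(q-n+1)$, since $\alpha_{t_i}$ appears in only one block of the row $v_i$.

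However, the proposal has a genuine gap that is not a routine verification: the claim, stated only as an expectation, that the factor-of-$2$ slack from taking $r=\varepsilon n/2$ (when there are $\geq \varepsilon n+1$ redundant rows) is enough to dismiss degenerate rows. Carrying out the analysis you sketch, a redundant row $i$ with $t_i=I_i>J_i$ is degenerate exactly when $W_i$ contains $\mathrm{span}(e_1,\dots,e_{k-1},w^{(i)})$ where $e_1,\dots,e_{k-1}$ are the standard basis vectors of the $X_{I_i}$-block and $w^{(i)}=e_0+\alpha_{J_i}e_k+\cdots+\alpha_{J_i}^{k-1}e_{2k-2}$ (and symmetrically for the other two types of rows). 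One can check that all degenerate rows of a rank-deficient instance must be of the same type and that the $w^{(i)}$ are Vandermonde-independent, which gives an upper bound of $k-1$ on the number of degenerate redundant rows --- but \emph{only} an upper bound. Since the corollary is supposed to hold whenever $\ell=2k-1+\varepsilon n\leq n$, in particular when $\varepsilon n \ll k$ (e.g.\ $k=\Theta(n)$ and $\varepsilon$ small, which is exactly the regime of interest in the paper), the slack $\varepsilon n+1$ minus $k-1$ degenerate rows can drop below $r=\varepsilon n/2$, so you cannot conclude that $r$ good redundant rows exist. The paper avoids this entirely: the factor-of-$2$ slack in its $r$ is spent on the fact that each variable $X_{i_j}$ removed appears in at most two rows (so the truncated matrix still has $\geq 2k-1$ rows after $r\leq\lceil\varepsilon n/2\rceil$ removals, per \cref{lem:twocases}), and \cref{nonzero} guarantees that the symbolic top square submatrix is always nonsingular, so a ``faulty index'' always exists and no degeneracy case analysis is needed. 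To rescue your approach you would need either a substantially sharper bound on degenerate rows (perhaps bounding the \emph{probability} that many degenerate rows appear, which is itself a nontrivial sub-problem) or a fallback certificate for the degenerate regime; neither is supplied.

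A second, smaller issue: your events ``$i$ is good'' are random and $\mathcal{F}_{t_i-1}$-measurable, so when you fix a set $S$ and chain, the object you are bounding is $\Pr[\bigcap_{i\in S}(A_i\cap\{i\text{ good}\})]$, and you then need that rank-deficiency implies the \emph{existence} of some $S$ of size $r$ on which all these events hold simultaneously. That is exactly the combinatorial claim above, so this is not a separate error, but it does mean the gap is load-bearing rather than cosmetic.
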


At the highest level, our proof of Corollary~\ref{cor:fullrankprob}  is a union bound over ``certificates.''
For all evaluation points $(\alpha_1,\dots,\alpha_n) \in \mathbb{F}_q^n$ where $V_{k,\ell,I,J}$ is bad, we show that there is a \emph{certificate} $(i_1,\dots,i_r) \in [n]^r$ of distinct indices in $[n]$ (\Cref{lem:twocases}) that intuitively ``attests'' that the matrix $V_{k,\ell,I,J}$ is bad.

We generate the certificate $(i_1,\dots,i_r)$ deterministically from the evaluations $\alpha_1,\dots,\alpha_n$ using Algorithm~\ref{alg:algorithm}.
We compute the certificate one index $i_j$ at a time.
Given indices $i_1,\dots,i_{j-1}$, define index $i_j$ as follows: let $A_j$ be the top $(2k-1)\times (2k-1)$ square submatrix of $V_{k,\ell,I,J}^{\{i_1,\dots,i_{j-1}\}}$ --- the $V$-matrix $V_{k,\ell,I,J}$ after removing rows containing any of variables $X_{i_1},X_{i_2},\dots,X_{i_{j-1}}$ --- and let $i_j$ be the smallest index such that $A_j|_{X_1=\alpha_1,\dots,X_{i_j}=\alpha_{i_j}}$ is not full column rank (we call $i_j$ a \emph{faulty index}, Definition~\ref{defn:faulty}).
Since $A_j$ is a full rank submatrix of a bad $V$-matrix\footnote{The full-rank part needs to be checked, but follows from \cite{con2023reed}.}, $A_j|_{X_1=\alpha_1,\dots,X_n=\alpha_n}$ is not full rank, so index $i_j$ always exists. 
Hence, we can keep generating indices $i_j$ as long as the truncated $V$-matrix, $V_{k,\ell,I,J}^{\{i_1,\dots,i_{j-1}\}}$, has at least $2k-1$ rows.
By definition, each $X_i$ participates in at most 2 rows of a $V$-matrix, so we get a certificate of length at least $\floor{\frac{\ell-(2k-1)}{2}}+1 \ge \varepsilon n / 2 = r$.

We then show (Lemma~\ref{lem:certprob}), for any certificate $(i_1,\dots,i_r)$, the probability that the $V$-matrix has certificate $(i_1,\dots,i_r)$ is exponentially small.
Conditioned on $A_j$ being full rank with $X_1=\alpha_1,\dots,X_{i_j-1}=\alpha_{i_j-1}$, the probability that $A_j$ becomes not-full-rank when setting $X_{i_j}=\alpha_{i_j}$ is at most $\frac{2(k-1)}{q-n}$: $\alpha_{i_j}$ is uniformly random over at least $q-n$ field elements, and the degree of $X_{i_j}$ in the determinant of $A_j$ is at most $2(k-1)$.
This event needs to happen $r$ times, and it is possible to run the conditional probabilities in the correct order to conclude that the probability of generating a particular certificate $(i_1,\dots,i_r)$ is at most $\left(\frac{2(k-1)}{q-n}\right)^r$.

Since there are at most $n^r$ certificates, the total probability that that a particular $V$-matrix is bad is at most $n^r\cdot \left(\frac{2(k-1)}{q-n}\right)^r$. This is at most $2^{-3n}$ for sufficiently large quadratic alphabet sizes $q=2^{\Theta(1/\varepsilon)}\cdot n^2$. 
For such $q$, by a union bound over the at-most-$2^{2n}$ $V$-matrices, with probability at least $1-2^{-n}$, the code $C$ corrects $n-\ell$ deletions, thus approaching the half-Singleton bound with quadratic alphabet size.

\paragraph{Linear alphabet size.} 
To improve the alphabet size to linear, we modify the certificate argument so that the number of certificates is only $\binom{n}{r}$, rather than $n^r$.
The idea is to force the certificates to have increasing coordinates $i_1<i_2<\cdots<i_r$ (this does not hold automatically).\footnote{For convenience, in the actual proof, our certificate is slightly different. Instead of $1\le i_1<i_2<\cdots<i_r\le n$, we take certificates $0\le i_1\le i_2\le\cdots\le i_r\le 2k-2$. The $i_j$'s have slightly different  meaning, but the idea is the same.} 

First, we preprocess the $V$-matrix by removing some of its rows (equivalently, we remove elements from $I$ and $J$), so that the remaining matrix can be partitioned into ``blocks'' of length at most $O(1/\varepsilon)$.
Crucially, the variables in a block appear \emph{only} in that block, so that the blocks partition the variables $X_1,\dots,X_n$ (in the proof, these blocks are given by \emph{chains}, Definition~\ref{defn:chain}).
Proving this step requires establishing structural properties of longest common subsequences.

We then generate our certificates in a more careful way.
We remove the largest $\Omega_\varepsilon(n)$ of these blocks from our $V$-matrix to create a \emph{bank} of blocks, and, we reindex the variables so that the banked blocks have the highest-indexed variables.\footnote{This is acceptable, since we don't use the original ordering on the variables after this point.}
As in Algorithm~\ref{alg:algorithm}, we choose $A_1$ to be the top $(2k-1)\times (2k-1)$ submatrix of the $V$-matrix --- this time, after removing the blocks in the bank ---, and for all $j$, we again choose $i_j$ as the smallest index such that setting $X_{i_j}=\alpha_{i_j}$ makes $A_j$ not full rank.
However, we choose the matrices $A_2,A_3,\dots$ more carefully.
After choosing $i_j$, we let $A_{j+1}$ be a submatrix of $V_{k,\ell,I,J}$ that ``re-indeterminates'' the matrix $A_j$: we remove from $A_j$ the block containing variable $X_{i_j}$, and replace it with an ``equivalent'' new block from our bank --- possibly truncating the new block, if the new block is longer than the old block --- to get $A_{j+1}$. This results in a matrix $A_{j+1}$ ``equivalent'' to $A_j$; it is the same polynomial matrix up to permuting the rows and relabeling the indeterminates.
Since this matrix $A_{j+1}$ is an equivalent, ``re-indeterminated'' version of $A_j$, we must have $A_{j+1}|_{X_1=\alpha_1,\dots,X_{i_j}=\alpha_{i_j}}$ is full column rank, so we have $i_j < i_{j+1}$, which is our desired property for our certificates. Further, since our bank has at least $\Omega_\varepsilon(n)$ blocks, we can ``re-indeterminate'' at least $\Omega_\varepsilon(n)$ times, giving a certificate of length $r=\Omega_\varepsilon(n)$.

Since our certificates now satisfy $i_1<\cdots<i_r$, the number of certificates is at most $\binom{n}{r}$, the probability $C$ fails to correct $n-\ell$ deletions is only $\binom{n}{r}\left(\frac{2(k-1)}{q-n}\right)^r$, which is exponentially small $2^{-3n}$ for sufficiently large linear alphabet sizes $q=n+\Theta_\varepsilon(k)$.
Again, a union bound over (at most $2^{2n}$) $V$-matrices gives that, with high probability, our code $C$ corrects $n-\ell$ deletions, thus approaching the half-Singleton bound with linear-sized alphabets.

\subsection{Future research directions}

We conclude the introduction by outlining several open questions for future research:

\begin{enumerate}
    \item  \emph{Lower bounds on the field size.} In \cite{con2023reed}, it was demonstrated that there exist RS codes over fields of size $n^{O(k)}$ that exactly attain the half-Singleton bound. This paper shows that the field size can be significantly reduced if we only need to get $\varepsilon$-close to the half-Singleton bound. A remaining open question is to prove a lower bound on the field size of linear codes, not just RS codes, that achieve (or get close to) the half-Singleton bound.

    \item \emph{Explicit constructions.} The results presented in this paper are existential; specifically, we demonstrate the existence of RS codes capable of correcting insdel errors with an almost optimal rate-distance tradeoff. An important question remains: How to provide explicit constructions of RS codes that achieve these parameters?
    
    \item \emph{Decoding algorithms.} One of the primary advantages of RS codes in the Hamming metric is their efficient decoding algorithms. Currently, as far as we know, there are no efficient algorithms for RS codes that correct insdel errors. Therefore, a crucial open question is how to design efficient decoding algorithms for RS codes handling insdel errors.

     \item \emph{Affine codes.} In \cite[Theorem 1.5]{cheng2020efficient}, it was demonstrated that affine codes outperform linear codes in correcting insdel errors. Specifically, efficient affine binary codes of rate $1-\varepsilon$ were constructed to correct $\Omega(\varepsilon^3)$ insdel errors. By efficient codes, we mean explicit construction of codes with efficient encoding and decoding algorithms. An immediate open question is to construct efficient affine code with improved rate-distance trade-offs. Moreover, considering that RS codes achieve the half-Singleton bound, a natural question arises: can affine RS codes perform even better? In particular, can they approach the Singleton bound? 
\end{enumerate}
\subsection*{Acknowledgments}
This work was conducted while all four authors were visiting the Simons Institute for the Theory of Computing at UC Berkeley. The authors would like to thank the institute for its support and hospitality. Additionally, Z. Zhang wishes to thank Prof. Xin Li for the helpful discussions and suggestions. R. Li is supported by NSF grant CCF-2347371.

\section{Notation and Preliminaries}
Define $\N=\{0,1,2,\dots,\}$, $\N^+=\{1,2,\dots\}$, and $[n]=\{1,2,\ldots, n\}$ for $n\in\N$. Throughout the paper, $\Fq$ denotes the finite field of order $q$.

Denote by $\Fq[X_1, \ldots, X_n]$ the ring of polynomials in $X_1,\dots,X_n$ over $\Fq$ 
and by $\Fq(X_1, \ldots, X_n)$ the field of fractions of that ring.
More generally, for a collection of variables $X_i$ indexed by a set $I$, denote by $\Fq[X_i: i\in I]$ the ring of polynomials in these variables over $\Fq$.
The value of $f\in \Fq[X_i: i\in I]$ at $(\alpha_i)_{i\in I}$ is denoted by $f(\alpha_i: i\in I)$, or $f(\alpha_1,\dots,\alpha_n)$ when $I=[n]$.

The degree $\deg(f)$ of a nonzero multivariate polynomial $f\in \Fq[X_i: i\in I]$ refers to its total degree. And for $i\in [n]$, $\deg_{X_i}(f)$ denotes the degree of $f$ in $X_i$, where the other variables are viewed as constants. 

We also need the following notation, adapted from \cite{GZ23}, regarding the partial assignment of a symbolic matrix.

\begin{defn}[Partial assignment]
Let $A$ be a matrix over $\mathbb{K}:=\F_q(X_1,\dots,X_n)$ such that the entries of $A$ are in $\F_q[X_1,\dots,X_n]$.
For $i\in\{0,1,\dots,n\}$ and $\alpha_1,\dots,\alpha_i\in \mathbb{K}$, denote by $A|_{X_{1}=\alpha_{1},\dots,X_{i}=\alpha_{i}}$ the matrix obtained from $A$ by substituting $\alpha_{j}$ for $X_{j}$ for $j=1,\dots,i$. 
More generally, for $I\subseteq [n]$ and a tuple $(\alpha_i)_{i\in I}\in\mathbb{K}^I$, denote by $A|_{X_{i}=\alpha_{i} \text{ for } i\in I}$ the matrix obtained from $A$ by substituting $\alpha_{i}$ for $X_{i}$ for $i\in I$. 
\end{defn}

We need the following variation of the Schwarz-Zippel Lemma.

\begin{lemma}\label{lem:zero-prob}
Let $Q(X_1,\dots,X_n)\in\Fq[X_1,\dots,X_n]$ be a nonzero polynomial such that $\deg_{X_i}(Q)\leq d$ for $i\in [n]$.
Let $T\subseteq\Fq$ be a set of size at least $n$, and let $\alpha=(\alpha_1,\dots,\alpha_n)$ be uniformly distributed over the set of $n$-tuples with distinct coordinates in $T$.
Then $\Pr[Q(\alpha_1,\dots,\alpha_n)=0]\leq \frac{nd}{|T|-n+1}$.
\end{lemma}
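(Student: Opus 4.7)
My plan is to prove the lemma by induction on $n$, mimicking the standard Schwartz--Zippel proof but being careful with the distinctness constraint on the coordinates. The key sampling observation I would use is that sampling $\alpha=(\alpha_1,\dots,\alpha_n)$ uniformly from $n$-tuples with distinct coordinates in $T$ is equivalent to first sampling $(\alpha_1,\dots,\alpha_{n-1})$ uniformly from $(n-1)$-tuples with distinct coordinates in $T$ and then sampling $\alpha_n$ uniformly from $T\setminus\{\alpha_1,\dots,\alpha_{n-1}\}$, a set of size exactly $|T|-n+1$. This is what produces the $-n+1$ in the denominator of the stated bound, as opposed to the $|T|$ of the classical Schwartz--Zippel inequality.

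For the base case $n=1$, the nonzero univariate polynomial $Q(X_1)$ has at most $d$ roots, so $\Pr[Q(\alpha_1)=0]\le d/|T|$, which matches the claimed bound. For the inductive step, I would write $Q(X_1,\dots,X_n)=\sum_{j=0}^{d'} X_n^j\, Q_j(X_1,\dots,X_{n-1})$, where $d'=\deg_{X_n}(Q)\le d$ and the ``leading'' coefficient $Q_{d'}$ is a nonzero polynomial in the first $n-1$ variables with $\deg_{X_i}(Q_{d'})\le d$ for each $i$. Then I would decompose
\[
\Pr[Q(\alpha)=0] \;\le\; \Pr[Q_{d'}(\alpha_1,\dots,\alpha_{n-1})=0] \;+\; \Pr\bigl[Q(\alpha)=0 \,\big|\, Q_{d'}(\alpha_1,\dots,\alpha_{n-1})\ne 0\bigr].
\]
The first probability is bounded by $\frac{(n-1)d}{|T|-n+2}$ by the induction hypothesis applied to $Q_{d'}$ over $n-1$ variables. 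For the second, once we condition on $Q_{d'}(\alpha_1,\dots,\alpha_{n-1})\ne 0$, the univariate polynomial $Q(\alpha_1,\dots,\alpha_{n-1},X_n)\in\F_q[X_n]$ has degree exactly $d'\le d$, hence at most $d$ roots in $T$, and $\alpha_n$ is uniform over a set of size $|T|-n+1$, yielding the bound $\frac{d}{|T|-n+1}$. Adding the two terms and using $\frac{1}{|T|-n+2}\le \frac{1}{|T|-n+1}$ gives $\frac{(n-1)d+d}{|T|-n+1}=\frac{nd}{|T|-n+1}$, closing the induction.

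\textbf{Main obstacle.} There is no deep difficulty here; the one place to be careful is the conditional sampling step, where one must verify both that the marginal of $(\alpha_1,\dots,\alpha_{n-1})$ is itself uniform over distinct $(n-1)$-tuples (so that the induction hypothesis applies as stated) and that the conditional distribution of $\alpha_n$ is uniform over a set of size $|T|-n+1$ (to handle the distinctness-induced loss cleanly). The degree-in-$X_n$ part of the hypothesis, $\deg_{X_n}(Q)\le d$, is precisely what is needed so that the univariate polynomial obtained after partial substitution still has at most $d$ roots; the total degree is never used.
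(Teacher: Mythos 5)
Your proposal is correct. It takes a genuinely different (though closely related) route from the paper's proof. The paper does \emph{not} induct; instead it partitions the failure event $\{Q(\alpha)=0\}$ directly by the index of the first variable whose substitution kills the polynomial. That is, it defines $E_i$ to be the event that $Q(\alpha_1,\dots,\alpha_{i-1},X_i,\dots,X_n)\ne 0$ but $Q(\alpha_1,\dots,\alpha_i,X_{i+1},\dots,X_n)=0$, observes that $\{Q(\alpha)=0\}$ is exactly the disjoint union of $E_1,\dots,E_n$, bounds each $\Pr[E_i]\le d/(|T|-n+1)$ by extracting a nonzero coefficient of $Q(\alpha_1,\dots,\alpha_{i-1},X_i,\dots,X_n)$ viewed as a polynomial in $X_{i+1},\dots,X_n$ over $\Fq[X_i]$, and takes a union bound. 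Your proof is the classical inductive Schwartz--Zippel argument adapted to sampling without replacement: you isolate the last variable $X_n$, use the leading coefficient $Q_{d'}$ as the inductive handle, and recurse. Both proofs rely on the same distributional facts (the marginal of a prefix is uniform over distinct tuples; conditioned on a prefix, the next coordinate is uniform over the $|T|-n+1$ or more remaining elements of $T$), and both give exactly the $nd/(|T|-n+1)$ bound. The paper's union-bound formulation avoids having to formally set up an induction hypothesis, while your inductive formulation is the more familiar textbook argument and makes the role of the leading coefficient explicit; neither is shorter or substantively stronger than the other.
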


\begin{proof}
For $i\in [n]$, let $E_i$ be the event that 
$Q(\alpha_1,\dots,\alpha_{i-1},X_i,\dots,X_n)\neq 0$ but
$Q(\alpha_1,\dots,\alpha_i,X_{i+1},\dots,X_n)=0$.
As $Q\neq 0$, the event $Q(\alpha_1,\dots,\alpha_n)=0$ occurs precisely when one of $E_1,\dots,E_n$ occurs. By the union bound, it suffices to prove that $\Pr[E_i]\leq \frac{d}{|T|-n+1}$.

Fix $i\in [n]$. Further fix $\alpha_1,\dots,\alpha_{i-1}\in\Fq$ such that $Q(\alpha_1,\dots,\alpha_{i-1},X_i,\dots,X_n)\neq 0$.
It suffices to prove that $E_i$ occurs with probability at most $\frac{d}{|T|-n+1}$ after fixing such $\alpha_1,\dots,\alpha_{i-1}$.
(If such $\alpha_1,\dots,\alpha_{i-1}$ do not exist, then $\Pr[E_i]=0$ and we are done.)

Now $\alpha_i$ is uniformly distributed over the set $T':=T\setminus\{\alpha_1,\dots,\alpha_{i-1}\}$.
Let $Q^*:=Q(\alpha_1,\dots,\alpha_{i-1},X_i,\dots,X_n)\neq 0$.
View $Q^*$ as a polynomial in $X_{i+1},\dots,X_n$ over the ring $\Fq[X_i]$. Let $Q^*_0\in\Fq[X_i]$ be the coefficient of a nonzero term of $Q^*$.
Then $Q^*_0\neq 0$, and $\deg(Q^*)\leq \deg_{X_i}(Q)\leq d$.
Therefore,
\begin{equation}\label{eq:prob-bound}
\Pr[Q^*_0(\alpha_i)=0]\leq \frac{d}{|T'|}\leq \frac{d}{|T|-n+1}.
\end{equation}
Note that $Q^*_0(\alpha_i)$ is the coefficient of a term of $Q^*(\alpha_i,X_{i+1},\dots,X_n)=Q(\alpha_1,\dots,\alpha_i,X_{i+1},\dots,X_n)$.
Therefore, the latter is zero only if $Q_0^*(\alpha_i)=0$. So by \eqref{eq:prob-bound}, $E_i$ occurs with probability at most $\frac{d}{|T|-n+1}$, as desired.
\end{proof}

We recall the notion of a subsequence and a longest common subsequence. 
\begin{defn}
        A \emph{subsequence} of a string $s$ is a string obtained by removing some (possibly none) of the symbols in $s$. 
\end{defn}
	\begin{defn}
	
		Let $s,s'$ be strings over an alphabet $\Sigma$. 
		A \emph{longest common subsequence} between $s$ and $s'$, is a subsequence of both $s$ and $s'$, of maximal length. We denote by $ \textup{LCS}(s,s')$ the length of a longest common subsequence.
		
	The \emph{edit distance} between $s$ and $s'$, denoted by $\ed{s}{s'}$, is the minimal number of insertions and deletions needed in order to turn $s$ into $s'$. 
\end{defn}

It is well known that the insdel correction capability of a code is determined by the LCS of its codewords. Specifically,
\begin{lemma} \label{lem:code-lcs}
    A code $C$ can correct $\delta n$ insdel errors if and only $\textup{LCS}(c,c')\leq n - \delta n - 1$ for any distinct $c,c'\in C$.
\end{lemma}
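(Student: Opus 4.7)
The plan is to prove this equivalence through the well-known relationship between edit distance and the longest common subsequence for equal-length strings. I would first reduce the statement to a claim about pairwise edit distances of codewords: a code $C$ corrects $t$ insdel errors if and only if the edit-distance balls of radius $t$ around codewords are pairwise disjoint, which by the triangle inequality for edit distance is equivalent to $\ed{c}{c'}\geq 2t+1$ for any distinct $c,c'\in C$. One direction of this reduction uses the triangle inequality directly: if a received word $w$ is within $t$ edits of two codewords, then $\ed{c}{c'}\leq 2t$. The reverse direction is the standard ``splitting'' argument: given an optimal edit sequence of length at most $2t$ transforming $c$ into $c'$, one can stop partway through to produce a word $w$ within $t$ edits of each, so the decoder cannot distinguish $c$ from $c'$.

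Next I would establish the identity $\ed{c}{c'}=2(n-\LCS(c,c'))$ for any two strings $c,c'$ of common length $n$. The upper bound is given by an explicit strategy: fix a longest common subsequence, delete the $n-\LCS(c,c')$ symbols of $c$ that are not in it, and then insert the $n-\LCS(c,c')$ missing symbols of $c'$, for a total of $2(n-\LCS(c,c'))$ edits. For the matching lower bound, any edit sequence between two equal-length strings must use the same number of insertions as deletions, since the length difference is zero; the string obtained after performing all deletions is a common subsequence, so it has length at most $\LCS(c,c')$, forcing at least $n-\LCS(c,c')$ deletions and equally many insertions.

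Combining the two steps, the code $C$ corrects $\delta n$ insdel errors if and only if $2(n-\LCS(c,c'))\geq 2\delta n+1$ for every distinct pair $c,c'\in C$. The left-hand side is always an even integer, so this inequality is equivalent to $2(n-\LCS(c,c'))\geq 2\delta n+2$, which rearranges to $\LCS(c,c')\leq n-\delta n-1$, as claimed. The only subtlety, rather than a real obstacle, is this parity step: the naive rearrangement of $\ed{c}{c'}\geq 2\delta n+1$ would give $\LCS(c,c')\leq n-\delta n-\tfrac{1}{2}$, but evenness of edit distance on equal-length strings upgrades this to the desired integer bound. Everything else is a bookkeeping exercise with definitions, so I do not expect any step to pose a genuine difficulty.
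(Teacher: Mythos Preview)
Your proposal is correct and follows essentially the same approach as the paper: reduce to the condition $\ed{c}{c'}\geq 2\delta n+1$ and then invoke the identity $\ed{s}{s'}=|s|+|s'|-2\LCS(s,s')$ (which the paper simply cites from \cite{crochemore2003jewels}). You supply more detail than the paper does, including the parity observation that tightens $\LCS(c,c')\leq n-\delta n-\tfrac12$ to $\LCS(c,c')\leq n-\delta n-1$; the paper leaves this implicit.
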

We give the proof for completeness.
\begin{proof}
    A code can correct $\delta n$ insdel errors if and only if
    for any two distinct codewords $c,c'$, it holds $\ed{c}{c'} \geq 2\delta n + 1$. 
    The claim follow by the well-known relation between edit distance and LCS which states that
    $\ed{s}{s'} = |s| + |s'| - 2 \textup{LCS}(s,s')$ \cite[Lemma 12.1]{crochemore2003jewels}.
\end{proof}

We now adopt two definitions and a lemma from \cite{con2023reed}. These will establish the algebraic conditions ensuring that an RS code can correct insdel errors.

\begin{defn}[Increasing subsequence]
    We call $I=(I_1,\dots,I_\ell)\in[n]^\ell$ an \emph{increasing subsequence} if it holds that $ I_1<I_2<\cdots<I_\ell$, where $\ell$ is called the \emph{length} of $I$.
\end{defn}

\begin{defn}[$V$-matrix]\label{matrix} 
For positive integers $\ell, k$ and increasing subsequences $I=(I_1,\dots,I_\ell),J=(J_1,\dots,J_\ell)\in [n]^{\ell}$ of length $\ell$, define the $\ell\times(2k-1)$ matrix
\[
V_{k,\ell,I,J}(X_1,X_2,\cdots,X_n):=\left(\begin{array}{ccccccc}1 & X_{I_1} & \cdots & X_{I_1}^{k-1} & X_{J_1} & \cdots & X_{J_1}^{k-1} \\1 & X_{I_2} & \cdots & X_{I_2}^{k-1} & X_{J_2} & \cdots & X_{J_2}^{k-1} \\ \vdots& \vdots & \ddots& \vdots & \vdots & \ddots & \vdots \\1 & X_{I_\ell} & \cdots & X_{I_\ell}^{k-1} & X_{J_\ell} & \cdots & X_{J_\ell}^{k-1}\end{array}\right),
\]
over the field 
$\F_q(X_1,\dots,X_n)$. We call $V_{k,\ell,I,J}(X_1,X_2,\cdots,X_n)$
a \emph{$V$-matrix}. 
\end{defn}

The following lemma states that if a Reed--Solomon code $\mathsf{RS}_{n,k}(\alpha_1, \ldots, \alpha_n)\subseteq\F_q^n$ cannot correct $\ell$ insdel errors, then we can identify a specific $V$-matrix that does not have full column rank. 
The proof of this lemma is identical to the proof of Proposition 2.1 in \cite{con2023reed}. However, we include it here for completeness.

\begin{lemma}\label{bad}
Let $\ell\ge 2k-1$. 
Consider the $[n, k]$ Reed--Solomon code $\mathsf{RS}_{n,k}(\alpha_1, \ldots, \alpha_n)\subseteq\F_q^n$ associated with an evaluation vector $\vec{\alpha}:=\left(\alpha_1, \ldots, \alpha_n\right)\in\F_q^n$. If 
the code cannot correct arbitrary $n-\ell$ insdel errors, then there exist two increasing subsequences $I=(I_1,\dots,I_\ell), J=(J_1,\dots,I_\ell) \in[n]^{\ell}$ that agree on at most $k-1$ coordinates
such that matrix $V_{k,\ell,I, J}|_{X_{1}=\alpha_{1},\dots,X_{n}=\alpha_{n}}$ does not have full column rank.
\end{lemma}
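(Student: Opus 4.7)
The plan is to convert the insdel-failure assumption into an algebraic statement about the coefficient vectors of two distinct polynomials, and then recognize that the associated relation is precisely the statement that the $V$-matrix is singular on its columns.

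First, I would apply Lemma~\ref{lem:code-lcs} (in its contrapositive form) to the hypothesis that the code cannot correct $n-\ell$ insdel errors. This yields two distinct codewords $c=(f(\alpha_1),\dots,f(\alpha_n))$ and $c'=(f'(\alpha_1),\dots,f'(\alpha_n))$ with $\LCS(c,c')\ge \ell$. By definition of LCS, there exist increasing subsequences $I=(I_1,\dots,I_\ell)$ and $J=(J_1,\dots,J_\ell)$ in $[n]^\ell$ such that $f(\alpha_{I_i})=f'(\alpha_{J_i})$ for every $i\in[\ell]$.

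Next, writing $f=\sum_{j=0}^{k-1}f_j X^j$ and $f'=\sum_{j=0}^{k-1}f'_j X^j$, I would bundle the coefficients into the vector
\[
v\;=\;(f_0-f_0',\;f_1,\dots,f_{k-1},\;-f'_1,\dots,-f'_{k-1})^{\top}\in\F_q^{2k-1}.
\]
The $\ell$ scalar equations $f(\alpha_{I_i})-f'(\alpha_{J_i})=0$ are exactly the components of $V_{k,\ell,I,J}|_{X_1=\alpha_1,\dots,X_n=\alpha_n}\cdot v=0$. So it remains to show $v\neq 0$, which will certify that the evaluated $V$-matrix fails to have full column rank. Since $f\neq f'$, either some nonconstant coefficient $f_j$ or $f'_j$ (for $j\ge 1$) is nonzero, in which case $v$ contains a nonzero entry, or else $f,f'$ are both constant and $f_0-f'_0\neq 0$ so the first coordinate of $v$ is nonzero; either way $v\neq 0$.

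Finally, I would verify that $I$ and $J$ automatically agree on at most $k-1$ coordinates. Let $S=\{i:I_i=J_i\}$. For every $i\in S$ we have $f(\alpha_{I_i})=f'(\alpha_{I_i})$, so $\alpha_{I_i}$ is a root of the polynomial $f-f'$. Because $I$ is strictly increasing, the indices $I_i$ for $i\in S$ are distinct, and because the $\alpha_j$ are distinct elements of $\F_q$, so are the values $\alpha_{I_i}$. Since $f-f'$ is a nonzero polynomial of degree strictly less than $k$, it has at most $k-1$ roots, yielding $|S|\le k-1$. I do not anticipate a genuine obstacle here; the only care needed is to check the nonvanishing of $v$ and to separate the degenerate case where $f$ and $f'$ are both constants.
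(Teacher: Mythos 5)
Your proof is correct and takes essentially the same approach as the paper's: invoke the LCS characterization of insdel-correction, extract matching index sequences from a common subsequence of length $\ell$, package the coefficient differences into the vector $v$, recognize that the $\ell$ equations say $V\cdot v=0$, and bound the number of coordinates on which $I$ and $J$ agree via the root count of $f-f'$. The only differences are cosmetic: you spell out why $v\neq0$ (which the paper simply asserts) and place the agreement bound at the end rather than in the middle.
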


\begin{proof}
Suppose the code $\mathsf{RS}_{n,k}(\alpha_1, \ldots, \alpha_n)\subseteq\F_q^n$ can not recover from $n-\ell$ insdel errors, then by \Cref{lem:code-lcs}, there exist two distinct codewords $c_i:=(f_i(\alpha_1),\dots,f_i(\alpha_n))$ ($i=1,2$) such that $\mathsf{LCS}(c_1,c_2)\ge \ell$. Therefore, by definition, we have two increasing subsequences $I,J\in[n]^{\ell}$ such that
\begin{equation}\label{f1f2}
    f_1(\alpha_{I_s})=f_2(\alpha_{J_s}) \quad \text{for } s=1,2,\dots,\ell,
\end{equation}
where $f_i:=\sum_{j=0}^{k-1}f_i^{(j)}X^j\in\F_q[X]$ ($i=1,2$) are two distinct polynomials of degree at most $k-1$. To verify that $I$ and $J$ agree on at most $k-1$ coordinates, assume that this is not true, i.e., there exist distinct $i_1,\dots,i_k\in [\ell]$ such that $I_{i_j}=J_{i_j}$ for $j\in [k]$. Then by \eqref{f1f2}, $f_1-f_2$ vanishes at the $k$ distinct points $\alpha_{I_{i_1}},\dots,\alpha_{I_{i_k}}$.
But this contradicts the fact that $f_1$ and $f_2$ are distinct polynomials of degree at most $k-1$.

Consider the vector $\vec{v}:=\left(f_1^{(0)}-f_2^{(0)},f_1^{(1)},f_1^{(2)},\dots,f_{1}^{(k-1)},-f_2^{(1)},-f_2^{(2)},\dots,-f_{2}^{(k-1)}\right)\in\F_q^{2k-1}$, which is non-zero. 
By the algebraic relations stated in $(\ref{f1f2})$, we have $V_{k,\ell,I, J}|_{X_{1}=\alpha_{1},\dots,X_{n}=\alpha_{n}}\cdot \vec{v}=0$. 
Therefore, the matrix $V_{k,\ell,I, J}|_{X_{1}=\alpha_{1},\dots,X_{n}=\alpha_{n}}$ does not have full column rank.
\end{proof}

A key technical lemma regarding the $V$-matrices that was proved in \cite{con2023reed} is the following.
\begin{lemma}[Proposition 2.4, \cite{con2023reed}]\label{nonzero}
    Let $I, J \in[n]^{2k-1}$ be two increasing subsequences that agree on at most $k-1$ coordinates.  
		Then, we have $\det\left(V_{k,2k-1,I,J}(X_1,X_2,\ldots,X_n)\right)\neq 0$ as a multivariate polynomial in $\F_q[X_1,X_2,\ldots,X_n]$.
\end{lemma}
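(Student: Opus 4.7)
My plan is to show $\det V_{k,2k-1,I,J}\neq 0$ by exhibiting a specific monomial in its expansion with nonzero coefficient. Write $A := \{s\in[2k-1] : I_s=J_s\}$, so that the hypothesis reads $|A|\le k-1$. The first step is to split the columns of $V$ into the ``$I$-block'' (columns $1,\ldots,k$, a Vandermonde in $\{X_{I_s}\}$) and the ``$J$-block'' (columns $k+1,\ldots,2k-1$, a Vandermonde-like matrix in $\{X_{J_s}\}$ with an extra factor $\prod_s X_{J_s}$). Cauchy--Binet then gives
\[
\det V \;=\; \sum_{\substack{S\subseteq[2k-1]\\|S|=k}} \epsilon_S \prod_{s<t\in S}(X_{I_t}-X_{I_s}) \cdot \prod_{s\in\bar S}X_{J_s} \prod_{s<t\in\bar S}(X_{J_t}-X_{J_s}),
\]
with signs $\epsilon_S\in\{\pm 1\}$. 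The natural candidate monomial is
\[
M := \prod_{s=1}^{k}X_{I_s}^{s-1}\cdot\prod_{s=k+1}^{2k-1}X_{J_s}^{s-k},
\]
which is the lex-leading term (under $X_1<X_2<\cdots<X_n$) of the $S=[k]$ summand. When all variables appearing in $M$ are distinct (in particular when $I\cap J=\emptyset$), $M$ is attained only by the $S=[k]$ summand, with coefficient $\pm 1$, so $\det V\neq 0$.

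The main challenge is to handle coincidences of the form $I_s=J_{s'}$ in $[n]$, which may allow additional Cauchy--Binet summands to share the monomial $M$. To handle this, I would use column operations on $V$: subtract column $j+1$ from column $k+j$ for each $j=1,\ldots,k-1$. Since $s\in A$ forces $X_{I_s}^j = X_{J_s}^j$, this zeros out the $J$-block entries in every row $s\in A$, producing (after reordering rows to put $A$ on top) a block structure $\bigl(\begin{smallmatrix}P & 0 \\ Q & R\end{smallmatrix}\bigr)$. Here $P$ is an $|A|\times k$ Vandermonde block of full row-rank $|A|$ (since the $X_{I_s}$ for $s\in A$ are distinct by strict monotonicity of $I$), and $R$ is a $(2k-1-|A|)\times(k-1)$ matrix with entries $X_{J_s}^j-X_{I_s}^j$ for $s\notin A$.

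A Laplace expansion along the top $|A|$ rows then expresses $\det V$ as a signed sum $\sum_{T\subseteq[k],\,|T|=|A|}\epsilon_T\,\det P[T]\cdot \det M_T$, where $M_T$ is the $(2k-1-|A|)\times(2k-1-|A|)$ complementary minor. The key final step is to exhibit a specific $T$ (for instance $T=\{1,\ldots,|A|\}$) whose contribution yields a distinguished leading monomial not produced by any other $T$ in the sum. The hypothesis $|A|\le k-1$ is critical here: the strict inequality $|A|<k$ leaves at least one ``spare'' column in the $I$-block after removing the $|A|$ columns of $T$, so the complementary minor $M_T$ retains a nondegenerate Vandermonde-like structure in the remaining $I$- and transformed $J$-columns. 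The main obstacle I anticipate is rigorously tracking the contributions in the Laplace expansion and confirming that the chosen leading monomial is not cancelled; one natural backup is to instead directly construct an explicit specialization $X_i\mapsto\alpha_i$ in an extension field, again exploiting $|A|\le k-1$ to guarantee nonsingularity of the resulting numerical matrix via the same interpolation viewpoint that underlies Lemma~\ref{bad}.
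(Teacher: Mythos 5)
The paper does not reproduce a proof of this lemma; it cites it as Proposition~2.4 of Con--Shpilka--Tamo \cite{con2023reed}, so there is no in-paper argument to compare against. Evaluating your proposal on its own terms: the setup is fine --- the expansion of $\det V$ along the first $k$ columns (this is a generalized Laplace column expansion rather than Cauchy--Binet, but the resulting formula is correct), and the observation that when $X_{I_1},\dots,X_{I_k},X_{J_{k+1}},\dots,X_{J_{2k-1}}$ are pairwise distinct variables, the monomial $M$ arises only from $S=[k]$, so $\det V\neq 0$. But the case where $I$ and $J$ share values is the entire content of the lemma, and that is precisely the step you acknowledge you have not carried out.

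Two concrete problems with the route you sketch. First, subtracting column $j+1$ from column $k+j$ annihilates the $J$-block entries only in rows $s\in A$, i.e.\ where $I_s=J_s$; it does nothing for the off-diagonal coincidences $I_s=J_{s'}$ with $s\neq s'$, which can be plentiful and are not constrained by the hypothesis $|A|\le k-1$. After the column operations, the complementary minor $M_T$ still has rows $s\notin A$ in which $X_{I_s}$ or $X_{J_s}$ can equal $X_{J_{s'}}$ or $X_{I_{s'}}$ for another $s'\notin A$, so the claim that $M_T$ ``retains a nondegenerate Vandermonde-like structure'' is an unproved assertion --- it is the original problem in miniature, not a reduction of it. Second, even granting $\det P[T]\cdot\det M_T\neq 0$ for each $T$, the Laplace sum over $T\subseteq[k]$ with $|T|=|A|$ is signed, and you give no monomial order under which exactly one $T$ contributes the leading term; cancellation across $T$ is not ruled out. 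You identify this yourself as the ``main obstacle,'' which is another way of saying the central step is missing. Your backup plan --- explicitly constructing an assignment over an extension of $\Fq$ making $V|_{X_i=\alpha_i}$ nonsingular, using the interpolation viewpoint and $|A|\le k-1$ --- is closer in spirit to an argument that can be completed, but as written it is a single sentence of intent with no construction behind it. As it stands, neither branch establishes the lemma.
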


Informally speaking, by using this lemma, and by considering all the $V$-matrices, the authors of \cite{con2023reed} showed by using the Schwarz-Zippel lemma that there exists an assignment $(\alpha_1, \ldots, \alpha_n)$ over a large enough field for which $\det (V_{k,\ell,I, J}|_{X_{1}=\alpha_{1},\dots,X_{n}=\alpha_{n}})$ is nonzero for all pairs $I,J$ of length $2k-1$ that agree on at most $k-1$ coordinates.

For convenience, we introduce the following definition. 
\begin{defn}[Selecting a subset of coordinates] \label{defn:restriction}
For an increasing subsequence $I\in [n]^{\ell}$ and $P=\{i_1,\dots,i_{\ell'}\}\subseteq [\ell]$ with $i_1<\dots<i_{\ell'}$, denote by $I^P$ the increasing subsequence $(I_{i_1},\dots,I_{i_{\ell'}})$.
\end{defn}

The following is a corollary of Lemma~\ref{nonzero}.
\begin{cor}
\label{roub}
    Let $I, J \in[n]^{\ell}$ with $\ell\ge 2k-1$ be two increasing subsequences that agree on at most $k-1$ coordinates. Let $P\subseteq [\ell]$ be a set of size $2k-1$. Then $\det (V_{k,2k-1,I^P,J^P})\neq 0$. 
    Additionally, for $i\in[n]$, we have $\deg_{X_i}(\det (V_{k,2k-1,I^P,J^P}))\leq 2(k-1)$. 
\end{cor}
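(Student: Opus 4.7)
The plan is to reduce this corollary to Lemma~\ref{nonzero} for the nonvanishing claim, and to analyze the combinatorial structure of the $V$-matrix for the degree bound. Both parts should be essentially routine; I do not expect any real obstacle, though one has to make sure the hypotheses of Lemma~\ref{nonzero} transfer to the subsequences $I^P, J^P$.

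For the nonvanishing part, I would first verify that $I^P$ and $J^P$ (as defined in Definition~\ref{defn:restriction}) are themselves increasing subsequences in $[n]^{2k-1}$, which is immediate since $P$ is ordered and $I, J$ are strictly increasing. Next, I would check that $I^P$ and $J^P$ agree on at most $k-1$ coordinates. This follows because any coordinate $j\in[2k-1]$ on which $I^P$ and $J^P$ agree corresponds, under the embedding $P \hookrightarrow [\ell]$, to a coordinate on which the original sequences $I$ and $J$ agree; so the number of agreements of $I^P, J^P$ is at most the number of agreements of $I, J$, which is $\leq k-1$ by hypothesis. Having verified these two conditions, the $2k-1$ by $2k-1$ matrix $V_{k,2k-1,I^P,J^P}$ satisfies the hypotheses of Lemma~\ref{nonzero}, and so its determinant is a nonzero polynomial in $\F_q[X_1,\dots,X_n]$.

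For the degree bound, I would exploit the fact that each variable $X_i$ appears in at most two rows of the $V$-matrix: since $I^P$ is strictly increasing, there is at most one row $s$ with $I^P_s = i$, and similarly for $J^P$. In any such row, $X_i$ appears in the entries with exponents $0, 1, \dots, k-1$, so the per-row degree in $X_i$ is at most $k-1$. Expanding the determinant as a signed sum over permutations $\sigma \in S_{2k-1}$, each monomial picks one entry from each row, and thus contains the variable $X_i$ with total exponent at most $2(k-1)$ (contributed by the at most two rows in which $X_i$ appears). Taking the maximum over all terms gives $\deg_{X_i}\bigl(\det(V_{k,2k-1,I^P,J^P})\bigr)\leq 2(k-1)$, completing the proof.
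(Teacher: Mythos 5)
Your proof is correct and follows exactly the same route as the paper: reduce the nonvanishing claim to Lemma~\ref{nonzero} after checking that $I^P, J^P$ inherit the required properties, and obtain the degree bound from the observation that each $X_i$ appears in at most two rows, each contributing degree at most $k-1$. Your write-up is just a more explicit version of the paper's two-sentence argument.
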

\begin{proof}
Note that $I^P,J^P \in[n]^{2k-1}$ are two increasing subsequences that agree on at most $k-1$ coordinates. So the first claim follows immediately from \cref{nonzero}.
The second claim follows from the facts that every entry of $ V_{k,2k-1,I^P,J^P}$ has the form $X_i^j$ with $j\leq k-1$ and that each $X_i$ appears in at most two rows. 
\end{proof}
\section{Achieving Quadratic Alphabet Size}
In this section, we present a warm-up result, \cref{m1}, which achieves the alphabet size $O_{\eps}(n^2)$. Our proof resembles the proof by Guo and Zhang \cite{GZ23} showing that random RS codes achieve list decodability over fields of quadratic size.

\subsection{Full rankness of $V_{k,\ell,I,J}$ under a random assignment}
Firstly, we introduce the following definition, $V_{k,\ell,I,J}^{B}$, which is a submatrix of $V_{k,\ell,I,J}$ obtained by deleting certain rows. 
\begin{defn}\label{delete_B}
Under the notation in Definition \ref{matrix}, for a subset $B\subseteq [n]$, define the matrix $V_{k,\ell,I,J}^{B}$ to be the submatrix of $V_{k,\ell,I,J}$ obtained by deleting the $i$-th row for all $i\in [\ell]$ such that $I_i\in B$ or $J_i\in B$. 
In other words, $V_{k,\ell,I,J}^{B}$ is obtained from $V_{k,\ell,I,J}$ by deleting all the rows containing powers of $X_j$ for every index $j\in B$.
\end{defn}

We also need to define the notion of \emph{faulty indices}. Our definition is a simplification of a similar definition in \cite{GZ23}.

\begin{defn}[Faulty index]\label{defn:faulty}
Let $A\in\F_q(X_1,\dots,X_n)^{m\times s}$ be a matrix, where $m\ge s$, and let $A'$ be the $s\times s$ submatrix consisting of the first $s$ rows of $A$. Suppose the entries of $A$ are in $\F_q[X_1,\dots,X_n]$.
For $\alpha_1,\dots,\alpha_n\in \F_q$, we say $i\in [n]$ is the \emph{faulty index} of $A$ (with respect to $\alpha_1,\dots,\alpha_n$) if   $\det(A'|_{X_1=\alpha_1,\dots,X_{i-1}=\alpha_{i-1}})\neq 0$ but $\det(A'|_{X_1=\alpha_1,\dots,X_{i}=\alpha_{i}})= 0$. Note that the faulty index of $A$ is unique if it exists. 
\end{defn}
Next, we will present an algorithm that, when provided with two increasing subsequences $I,J\in[n]^{\ell}$ of length $\ell=2k-1+\floor{\varepsilon n}$ that agree on at most $k-1$ coordinates, elements $\alpha_1,\dots,\alpha_n\in\F_q$, and a parameter $r\in\mathbb{N}^+$, attempts to verify whether the matrix $V_{k,\ell,I,J}|_{X_1=\alpha_1,\dots,X_n=\alpha_n}$ has full column rank. If it is unable to confirm this, the algorithm produces either a ``FAIL'' message or identifies a sequence of faulty indicies $(i_1,\dots,i_r)\in [n]^r$. 
The algorithm is given as  Algorithm~\ref{alg:algorithm}.

\begin{algorithm}[htb]	
	\caption{
	$\mathtt{CertifyFullColumnRankness}$ 
	}
	\label{alg:algorithm}
    \DontPrintSemicolon
	\KwIn{$n,k,r\in\N^+$, increasing subsequences $I,J\in[n]^{\ell}$ with $\ell=2k-1+\floor{\varepsilon n}$ that agree on at most $k-1$ coordinates, and $\alpha_1,\dots,\alpha_n\in \F_q$.} 
	\KwOut{``SUCCESS'', ``FAIL'', or a sequence $(i_1,\dots,i_r)\in [n]^r$.}  
	\BlankLine
	$B\gets\emptyset$.
 
	\For{$j=1$ \KwTo $r$}{
  \uIf{$\rank(V_{k,\ell,I,J}^{B})<2k-1$}{Output ``FAIL'' and halt.}
    \uElseIf(\tcp*[f]{$i$ is unique if it exists}){the faculty index $i\in [n]$ of $V_{k,\ell,I,J}^{B}$ exists  
    }{
    $i_j\gets i$ and  $B\gets B\cup \{i\}$. 
	}
    \Else{Output ``SUCCESS'' and halt.}}
	Output $(i_1,\dots,i_r)$.
\end{algorithm}
\begin{lemma}[Behavior of Algorithm \ref{alg:algorithm}]\label{lem:twocases}
Let $\varepsilon\geq 0$. Let $I,J\in[n]^{\ell}$ be two increasing subsequences that agree on at most $k-1$ coordinates, where $\ell=2k-1+\floor{\varepsilon n}$. Let $r$ be a positive integer such that $r\leq \ceil{\frac{\varepsilon n}{2}}$.
Then for all  $\alpha_1,\dots,\alpha_n\in \F_q$, running Algorithm \ref{alg:algorithm} on the  input $I,J,$ $ \alpha_1,\dots,\alpha_n$, and $r$ yields one of the following two possible scenarios: 
\begin{enumerate} 
\item Algorithm~\ref{alg:algorithm} outputs ``SUCCESS''. In this case, $V_{k,\ell,I,J}|_{X_1=\alpha_1,\dots,X_n=\alpha_n}$ has full column rank.
\item Algorithm~\ref{alg:algorithm} outputs a sequence of distinct indices $(i_1,\dots,i_r)\in [n]^r$. For each $j\in [r]$, $i_j$ is the faulty index of $V_{k,\ell,I,J}^{B_j}$, where $B_j:=\{i_1,\dots,i_{j-1}\}$.
\end{enumerate}    
\end{lemma}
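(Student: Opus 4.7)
The plan is to split the analysis into two parts: first verify that the \texttt{FAIL} branch on line~4 is never triggered under the hypothesis $r \le \lceil \varepsilon n / 2 \rceil$, and second interpret the two remaining output modes (``SUCCESS'' and a sequence $(i_1,\ldots,i_r)$). The key external tool throughout is Corollary~\ref{roub}: any $(2k-1) \times (2k-1)$ submatrix of a $V$-matrix whose index subsequences agree on at most $k-1$ coordinates has a symbolically nonzero determinant.

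For the no-\texttt{FAIL} step, note that at the start of iteration $j \in [r]$, $|B_j| = j-1$, and each index in $B_j$ appears in at most two rows of $V_{k,\ell,I,J}$ (once among the $I_s$'s and once among the $J_s$'s). So by Definition~\ref{delete_B}, at most $2(j-1)$ rows have been removed, leaving at least $(2k-1) + \lfloor \varepsilon n \rfloor - 2(j-1)$ rows. A short arithmetic check shows that $r \le \lceil \varepsilon n / 2 \rceil$ implies $2(r-1) \le \lfloor \varepsilon n \rfloor$, so at least $2k-1$ rows always survive. The subsequences of $I,J$ indexed by the first $2k-1$ surviving rows still agree on at most $k-1$ coordinates (that property is inherited by any subsequence), so Corollary~\ref{roub} forces the symbolic rank of $V_{k,\ell,I,J}^{B_j}$ to be at least $2k-1$, and the \texttt{FAIL} guard never fires.

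For the output interpretation, suppose the algorithm outputs ``SUCCESS'' at iteration $j$, and let $A'$ be the top $(2k-1) \times (2k-1)$ block of $V_{k,\ell,I,J}^{B_j}$; by the previous paragraph $\det A' \neq 0$ symbolically. Consider the chain $\det A', \det A'|_{X_1=\alpha_1}, \ldots, \det A'|_{X_1=\alpha_1,\ldots,X_n=\alpha_n}$: it starts nonzero, and were it to end at zero it would have to drop from nonzero to zero at some index, producing a faulty index and contradicting the ``SUCCESS'' branch. Hence $\det A'|_\alpha \neq 0$, so $A'|_\alpha$, then $V_{k,\ell,I,J}^{B_j}|_\alpha$, and ultimately $V_{k,\ell,I,J}|_\alpha$ all have full column rank. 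In the remaining output mode, $(i_1,\ldots,i_r)$ is the sequence of faulty indices by construction, and distinctness is automatic: if $i_j \in B_j$, then $X_{i_j}$ does not appear in $V_{k,\ell,I,J}^{B_j}$, so substituting $X_{i_j} = \alpha_{i_j}$ cannot change any determinant, contradicting the defining property of a faulty index. The only genuinely subtle point in the whole argument is confirming that the top $2k-1$ rows of the successively shrinking $V_{k,\ell,I,J}^{B_j}$ remain symbolically non-singular at every step, and this is precisely what Corollary~\ref{roub} provides.
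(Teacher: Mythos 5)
Your proof is correct and follows essentially the same route as the paper's: (i) bound the number of deleted rows by $2(j-1)$ and invoke Corollary~\ref{roub} to rule out the FAIL branch, (ii) observe that in the SUCCESS case the absence of a faulty index means the chain of partial evaluations of the leading $(2k-1)\times(2k-1)$ minor starts nonzero and never transitions to zero, and (iii) argue distinctness of the output indices by noting a variable already in $B_j$ no longer appears in $V^{B_j}_{k,\ell,I,J}$. You are somewhat more explicit than the paper in two places — you carry out the arithmetic $2(r-1)\le\lfloor\varepsilon n\rfloor$ and you spell out the "chain of determinants" argument that the paper leaves implicit in the sentence "Then $\det(M|_{X_1=\alpha_1,\dots,X_n=\alpha_n})\neq 0$" — but the logical skeleton is identical.
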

\begin{proof}
We first show that the algorithm never outputs ``FAIL''.
If the algorithm reaches the $j$-th round of the loop, where $j\in [r]$, then at the beginning of this round, we have $|B|=j-1\leq r-1$. 
For any $j\in B$, the number of indices $i\in [\ell]$ for which $I_i=j$ or $J_i=j$ is at most two. 
Therefore, $V_{k,\ell,I,J}^{B}$ is obtained from $V_{k,\ell,I,J}$ by removing at most $2|B|$ rows.
It follows that the number of rows that $V_{k,\ell,I,J}^{B}$ contains is at least $\ell-2|B|\geq (2k-1)+\floor{\varepsilon n} - 2(r-1)\geq 2k-1$.
Note that the top $(2k-1)\times(2k-1)$ submatrix of $V_{k,\ell,I,J}^{B}$ equals $V_{k,2k-1,I^P,J^P}$ for some $P\subseteq [\ell]$ of size $2k-1$.
So by \cref{roub}, the matrix $V_{k,\ell,I,J}^{B}$ has full column rank, which implies that the algorithm never outputs ``FAIL''. 

Assume the algorithm outputs ``SUCCESS'' and halts in the $j$-th round for some $j\in [r]$, which means the faulty index of $V_{k,\ell,I,J}^{B}$
does not exist in that round. 
Let $M$ be the top $(2k-1)\times (2k-1)$ submatrix of $V_{k,\ell,I,J}^{B}$.
Then $\det(M|_{X_1=\alpha_1,\dots,X_{n}=\alpha_{n}})\neq 0$, which implies that $V_{k,\ell,I,J}|_{X_1=\alpha_1,\dots,X_{n}=\alpha_{n}}$ has full column rank.

On the other hand, suppose that the algorithm does not output ``SUCCESS''. Then it outputs some $(i_1,\dots,i_r)\in [n]^r$ where $i_j$ is the faulty index of $V_{k,\ell,I,J}^{B_j}$ and $B_j=\{i_1,\dots,i_{j-1}\}$ for $j\in [r]$. Note that $i_1,\dots,i_r\in [n]$ must be distinct. Indeed, if an index $i$ is in $B_j$, then $X_i$ does not appear in $V_{k,\ell,I,J}^{B_j}$, and hence $i$ cannot be the faulty index of $V_{k,\ell,I,J}^{B_j}$. 
\end{proof}

The next lemma bounds the probability that Algorithm \ref{alg:algorithm} outputs a particular sequence of faulty indices over random $(\alpha_1,\dots,\alpha_n)$. The proof follows the same approach as \cite[Lemma~4.5]{GZ23}.

\begin{lemma}\label{CGLZ24}\label{lem:certprob}
Under the notation and conditions in Lemma~\ref{lem:twocases}, suppose $q\geq n$ and $(\alpha_1,\dots,\alpha_n)$ is chosen uniformly at random from the set of all $n$-tuples of distinct elements in $\F_q$. Then for any sequence $(i_1,\dots,i_r)\in [n]^r$, the probability that Algorithm~\ref{alg:algorithm} outputs $(i_1,\dots,i_r)$ on the input $I,J$, $\alpha_1,\dots,\alpha_n$, and $r$ is at most $\left(\frac{2(k-1)}{q-n+1}\right)^r$.
\end{lemma}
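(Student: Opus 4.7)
The plan is to bound this probability by processing the $r$ faulty-index events in the order of their indices and applying a one-variable Schwartz--Zippel estimate at each step, combined via the chain rule; this generalizes the proof of \cref{lem:zero-prob}.

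Fix a target sequence $(i_1,\dots,i_r)\in[n]^r$; by \cref{lem:twocases} we may assume its entries are distinct, as the probability is zero otherwise. For $j\in[r]$, set $B_j:=\{i_1,\dots,i_{j-1}\}$ and let $A_j$ be the top $(2k-1)\times(2k-1)$ submatrix of $V_{k,\ell,I,J}^{B_j}$. As in the proof of \cref{lem:twocases}, this submatrix is of the form $V_{k,2k-1,I^{P_j},J^{P_j}}$ for some $P_j\subseteq[\ell]$ of size $2k-1$, so by \cref{roub} the polynomial $Q_j:=\det(A_j)$ is nonzero and satisfies $\deg_{X_i}(Q_j)\leq 2(k-1)$ for every $i$. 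Defining
\[
E_j:=\Bigl\{Q_j|_{X_1=\alpha_1,\dots,X_{i_j-1}=\alpha_{i_j-1}}\neq 0 \text{ and } Q_j|_{X_1=\alpha_1,\dots,X_{i_j}=\alpha_{i_j}}=0\Bigr\},
\]
\cref{alg:algorithm} outputs $(i_1,\dots,i_r)$ exactly on $\bigcap_{j=1}^r E_j$, so it suffices to bound $\Pr[\bigcap_j E_j]$.

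Let $\sigma$ be the permutation of $[r]$ with $i_{\sigma(1)}<\cdots<i_{\sigma(r)}$ and set $\mathcal{G}_j:=\sigma(\alpha_1,\dots,\alpha_{i_{\sigma(j)}-1})$, so that $E_{\sigma(1)},\dots,E_{\sigma(j-1)}$ are all $\mathcal{G}_j$-measurable. The crux of the argument is the pointwise bound
\[
\Pr[E_{\sigma(j)}\mid\mathcal{G}_j]\leq \frac{2(k-1)}{q-n+1}
\]
for every realization of $\mathcal{G}_j$. I expect this step to be the main technical point. The argument: if the first condition of $E_{\sigma(j)}$ fails under the conditioning the bound is trivial; otherwise $\tilde Q:=Q_{\sigma(j)}|_{X_1=\alpha_1,\dots,X_{i_{\sigma(j)}-1}=\alpha_{i_{\sigma(j)}-1}}$ is nonzero, so expanding it as $\tilde Q=\sum_m c_m(X_{i_{\sigma(j)}})\,m$ as a polynomial in the variables with index greater than $i_{\sigma(j)}$ with coefficients $c_m\in\F_q[X_{i_{\sigma(j)}}]$, some $c_m$ is a nonzero univariate polynomial of degree at most $2(k-1)$; the second condition of $E_{\sigma(j)}$ forces $c_m(\alpha_{i_{\sigma(j)}})=0$, and conditional on $\mathcal{G}_j$ the variable $\alpha_{i_{\sigma(j)}}$ is uniform over $\F_q\setminus\{\alpha_t:t<i_{\sigma(j)}\}$, a set of size at least $q-n+1$.

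Finally, combining via the chain rule, since the indicator of $\bigcap_{j<r}E_{\sigma(j)}$ is $\mathcal{G}_r$-measurable,
\[
\Pr\!\left[\bigcap_{j=1}^r E_{\sigma(j)}\right]=\E\!\left[\prod_{j=1}^{r-1}\mathbf{1}_{E_{\sigma(j)}}\cdot \Pr[E_{\sigma(r)}\mid\mathcal{G}_r]\right]\leq \frac{2(k-1)}{q-n+1}\cdot \Pr\!\left[\bigcap_{j=1}^{r-1} E_{\sigma(j)}\right],
\]
and iterating yields the claimed bound $\bigl(\tfrac{2(k-1)}{q-n+1}\bigr)^r$.
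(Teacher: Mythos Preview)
Your proof is correct and follows essentially the same approach as the paper's: define the events $E_j$, reorder them so that the indices $i_{\sigma(j)}$ are increasing, and then bound each conditional probability by a one-variable Schwartz--Zippel estimate of $\tfrac{2(k-1)}{q-n+1}$, combining via the chain rule. The only cosmetic differences are that you phrase the conditioning via $\sigma$-algebras and conditional expectations while the paper fixes the prefix $\alpha_{<i_j}=\beta$ explicitly, and that you (correctly) observe the output event is \emph{exactly} $\bigcap_j E_j$ whereas the paper only uses the containment; neither affects the argument. (One minor notational quibble: you use $\sigma$ for both the permutation and the generated $\sigma$-algebra, which is slightly unfortunate but not confusing in context.)
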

\begin{proof}
For $j\in [r]$, define the following:
\begin{enumerate}
\item $B_j:=\{i_1,\dots,i_{j-1}\}$.
\item Let $A_j$ be the top 
$(2k-1)\times (2k-1)$ submatrix of $V_{k,\ell, I,J}^{B_j}$. (The fact that $\det A_j\neq 0$ symbolically is guaranteed by Corollary \ref{roub}.)
\item Let $E_j$ be the event that $\det(A_j|_{X_1=\alpha_1,\dots,X_{i_j-1}=\alpha_{i_j-1}})\neq 0$ but $\det(A_j|_{X_1=\alpha_1,\dots,X_{i_j}=\alpha_{i_j}})=0$. 
\end{enumerate}
If Algorithm~\ref{alg:algorithm} outputs $(i_1,\dots,i_r)$, then $i_j$ is the faulty index of $V_{k,\ell, I,J}^{B_j}$ for $j\in [r]$ by Lemma~\ref{lem:twocases} and therefore the events $E_1,\dots,E_r$ all occur. So it suffies to verify that $\Pr[E_1\wedge\dots \wedge E_r]\leq \left(\frac{2(k-1)}{q-n+1}\right)^r$.

Let $(j_1, j_2,\dots,j_r)$ be a permutation of $(1,2,\dots,r)$ such that $i_{j_1}<\dots<i_{j_r}$, i.e., $i_{j_\ell}$ is the $\ell$-th smallest index among $i_1,\dots,i_r$ for $\ell\in [r]$. 
For $\ell\in\{0,1,\dots,r\}$, we define $F_\ell:=E_{j_1}\wedge \dots \wedge E_{j_\ell}$, where $F_0$ is the event that always occurs. 
Then $F_r=E_{j_1}\wedge \dots \wedge E_{j_r}=E_1\wedge\dots \wedge E_r$. We may assume $\Pr[F_r]>0$ since otherwise we are done. By definition, if $F_{\ell}$ occurs and $\ell'<\ell$, then $F_{\ell'}$ also occurs. So $\Pr[F_\ell]>0$ for all $\ell\in\{0,1,\dots,r\}$.
We have that
\[
\Pr[E_1\wedge\dots \wedge E_r]=\Pr[F_r]=\prod_{\ell=1}^r \frac{\Pr[F_\ell]}{\Pr[F_{\ell-1}]}\;.
\]
Thus, to prove $\Pr[E_1\wedge\dots \wedge E_r]\leq \left(\frac{2(k-1)}{q-n+1}\right)^r$, it suffices to show that $\frac{\Pr[F_\ell]}{\Pr[F_{\ell-1}]}\leq \frac{2(k-1)}{q-n+1}$ for every $\ell\in[r]$.

Fix $\ell\in [r]$ and let $j=j_\ell$. 
Let  $S$ be the set of all $\beta=(\beta_1,\dots,\beta_{i_{j}-1})\in \F_q^{i_{j}-1}$ such that 
$\Pr\left[\left(\alpha_{<i_{j}}=\beta\right)\wedge F_{\ell-1}\right]>0$,
where $\alpha_{<i_{j}}=\beta$ is a shorthand for  $(\alpha_1=\beta_1)\wedge\dots\wedge (\alpha_{i_{j}-1}=\beta_{i_{j}-1})$.
Note that for $\beta\in S$, the event $\left(\alpha_{<i_{j}}=\beta\right)\wedge F_{\ell-1}$
is simply $\alpha_{<i_{j}}=\beta$ 
since $F_{\ell-1}=E_{j_1}\wedge \dots \wedge E_{j_{\ell-1}}$ depends only on $\alpha_1,\dots,\alpha_{i_{j_{\ell-1}}}$ 
and as $\alpha_{<i_{j}}=\beta\in S$, $F_{\ell-1}$ occurs (by the definition of $S$). 
We then have
\begin{align*}
\frac{\Pr[F_\ell]}{\Pr[F_{\ell-1}]}&=\frac{\sum_{\beta\in S} \Pr\left[\left(\alpha_{<i_{j}}=\beta\right)\wedge F_{\ell}\right]}{\sum_{\beta\in S} \Pr\left[\left(\alpha_{<i_{j}}=\beta\right)\wedge F_{\ell-1}\right]}
=\frac{\sum_{\beta\in S} \Pr\left[\left(\alpha_{<i_{j}}=\beta\right)\wedge  E_{j}\right]}{\sum_{\beta\in S} \Pr\left[ \alpha_{<i_{j}}=\beta \right]}\\
&\leq \max_{\beta\in S}  \frac{ \Pr\left[\left(\alpha_{<i_{j}}=\beta\right)\wedge  E_{j}\right]}{ \Pr\left[ \alpha_{<i_{j}}=\beta \right]}
=\max_{\beta\in S} \Pr\left[  E_{j}\mid \alpha_{<i_{j}}=\beta \right].
\end{align*}
Fix $\beta=(\beta_1,\dots,\beta_{i_{j}-1})\in S$. We just need to prove that $\Pr\left[  E_{j}\mid \alpha_{<i_{j}}=\beta \right]\leq \frac{2(k-1)}{q-n+1}$.
Let \[Q:=\det(A_{j}|_{X_1=\beta_1,\dots,X_{i_j-1}=\beta_{i_{j}-1}})\in\F_q[X_{i_{j}},\dots,X_n].\]
If $Q=0$, then $ E_{j}$ never occurs conditioned on $\alpha_{<i_{j}}=\beta$ and we are done.
So assume  $Q\neq 0$.
View $Q$ as a polynomial in $X_{i_{j}+1},\dots,X_n$ over the ring $\F_q[X_{i_{j}}]$, and let $Q_0\in \F_q[X_{i_{j}}]$ be the coefficient of a nonzero term of $Q$. Then conditioned on $\alpha_{<i_{j}}=\beta$, the event $ E_{j}$ occurs only if $\alpha_{i_{j}}$ is a root of $Q_0\neq 0$.  
Note that $\deg Q_0\leq \deg_{X_{i_{j}}} Q\leq  \deg_{X_{i_{j}}}\left(\det(A_j)\right)\leq 2(k-1)$ by Lemma~\ref{roub}.
Also note that conditioned on $\alpha_{<i_{j}}=\beta$, the random variable $\alpha_{i_{j}}$ is uniformly distributed over the set $\F_q\setminus\{\beta_1,\dots,\beta_{i_{j}-1}\}$, whose size is at least $q-n+1$ since $i_j\leq n$. It follows that $\Pr\left[  E_{j}\mid \alpha_{<i_{j}}=\beta \right]\leq \frac{2(k-1)}{q-n+1}$, as desired.
\end{proof}
By combining Lemma~\ref{lem:twocases} and Lemma~\ref{CGLZ24}  and taking a union bound over the set of possible outputs  $(i_1,\dots,i_r)$ of \cref{alg:algorithm}, we obtain the following corollary.
\begin{cor}\label{cor:fullrankprob}
Under the notation and conditions in Lemma~\ref{lem:twocases}, suppose $q\geq n$ and $(\alpha_1,\dots,\alpha_n)$ is chosen uniformly at random from the set of all $n$-tuples of distinct elements in $\F_q$. 
Then for any positive integer $r\le\ceil{\frac{\varepsilon n}{2}}$, we have 
\[
\Pr\left[\begin{aligned}
    &\text{The matrix }V_{k,\ell,I,J}|_{X_1=\alpha_1,\dots,X_{n}=\alpha_{n}}\\
&\text{does not have full column rank}
\end{aligned}\right]\leq 
\left(\frac{2n(k-1)}{q-n+1}\right)^r.
\]
\end{cor}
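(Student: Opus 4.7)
The plan is to derive the corollary from Lemma~\ref{lem:twocases} and Lemma~\ref{lem:certprob} via a single union bound. The two underlying lemmas already carry all of the structural and probabilistic work; what remains is to package them.

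First, I would translate the event of interest into an event about Algorithm~\ref{alg:algorithm}. For any fixed evaluation vector $(\alpha_1,\dots,\alpha_n)$, Lemma~\ref{lem:twocases} guarantees that the run on input $I,J,\alpha_1,\dots,\alpha_n,r$ never outputs ``FAIL'', and that it outputs ``SUCCESS'' exactly when $V_{k,\ell,I,J}|_{X_1=\alpha_1,\dots,X_n=\alpha_n}$ has full column rank. Consequently, on the event that the evaluated matrix lacks full column rank, the algorithm must output a tuple $(i_1,\dots,i_r)\in[n]^r$ of pairwise distinct indices. Hence
\[
\{V_{k,\ell,I,J}|_{X_1=\alpha_1,\dots,X_n=\alpha_n}\text{ lacks full column rank}\}\ \subseteq\ \bigcup_{(i_1,\dots,i_r)}\{\text{Algorithm~\ref{alg:algorithm} outputs }(i_1,\dots,i_r)\},
\]
where the union ranges over tuples of distinct coordinates in $[n]$.

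Second, I would apply Lemma~\ref{lem:certprob} to each term in the union, bounding the probability that Algorithm~\ref{alg:algorithm} outputs a given sequence $(i_1,\dots,i_r)$ by $\left(\frac{2(k-1)}{q-n+1}\right)^r$. Since the number of length-$r$ sequences of distinct elements of $[n]$ is at most $n^r$, a union bound gives
\[
\Pr\bigl[V_{k,\ell,I,J}|_{X_1=\alpha_1,\dots,X_n=\alpha_n}\text{ lacks full column rank}\bigr]\ \leq\ n^r\cdot\left(\frac{2(k-1)}{q-n+1}\right)^r\ =\ \left(\frac{2n(k-1)}{q-n+1}\right)^r,
\]
which is precisely the claimed estimate.

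There is no real obstacle here: this is a packaging step. The only things to double-check are that the hypotheses $r\leq\lceil\varepsilon n/2\rceil$, $q\geq n$, and the agreement condition on $I,J$ are inherited from the two lemmas (they are), and that pairwise distinctness of the output indices is used honestly in the counting (it is, via Lemma~\ref{lem:twocases}). All of the substantive content --- the deterministic extraction of a certificate of faulty indices, the symbolic nonvanishing of $\det(A_j)$ from Corollary~\ref{roub}, and the Schwartz--Zippel estimate on $\deg_{X_{i_j}}\det(A_j)\leq 2(k-1)$ that controls each conditional probability --- has already been absorbed into Lemmas~\ref{lem:twocases} and~\ref{lem:certprob}, so no new algebraic or probabilistic input is required.
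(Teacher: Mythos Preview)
Your proposal is correct and follows essentially the same approach as the paper: union bound over the at most $n^r$ possible output sequences using Lemma~\ref{lem:certprob}, combined with the implication from Lemma~\ref{lem:twocases} that non-full-column-rank forces the algorithm to output a sequence. One small imprecision: you write that the algorithm outputs ``SUCCESS'' \emph{exactly when} the matrix has full column rank, but Lemma~\ref{lem:twocases} only gives the forward implication (SUCCESS $\Rightarrow$ full column rank); fortunately your containment uses only the contrapositive of this direction, so the argument is unaffected.
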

\begin{proof}
By Lemma~\ref{CGLZ24} and the union bound, the probability that Algorithm~\ref{alg:algorithm} outputs some sequence $(i_1,\dots,i_r)\in [n]^r$ on the input $I,J$, $\alpha_1,\dots,\alpha_n$, and $r$ is at most 
$n^r\cdot\left(\frac{2(k-1)}{q-n+1}\right)^r= \left(\frac{2n(k-1)}{q-n+1}\right)^r.$ By Lemma~\ref{lem:twocases}, whenever this does not occur, the matrix $V_{k,\ell,I,J}|_{X_1=\alpha_1,\dots,X_n=\alpha_n}$ has full column rank, as desired.
\end{proof}
\subsection{Putting it together}
We are now ready to prove a weaker version of our main result, which achieves quadratic-sized alphabets.
\begin{thm}\label{m1}
Let $\varepsilon\in (0,1)$ and $n,k\in\N^+$, where $k\leq n$.  
Let $q$ be a prime power such that $q\geq \left(1+2\cdot 2^{6/\epsilon}k\right)n$. 
Suppose $(\alpha_1,\dots,\alpha_n)$ is chosen uniformly at random from the set of all $n$-tuples of distinct elements in $\F_q$.
Then with probability at least $1-2^{-n}>0$, the code $\text{RS}_{n,k}(\alpha_1, \ldots, \alpha_n)$ over $\Fq$ corrects at least $(1-\epsilon)n-2k+1$ adversarial insdel errors.
\end{thm}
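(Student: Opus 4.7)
The plan is to combine Lemma~\ref{bad} with Corollary~\ref{cor:fullrankprob} via a union bound over pairs of index subsequences. First I would set $\ell\defeq 2k-1+\lfloor\varepsilon n\rfloor$, so that $n-\ell=n-2k+1-\lfloor\varepsilon n\rfloor\ge(1-\varepsilon)n-2k+1$. It therefore suffices to show that, with probability at least $1-2^{-n}$, the code $\text{RS}_{n,k}(\alpha_1,\dots,\alpha_n)$ corrects every pattern of $n-\ell$ insdel errors. By Lemma~\ref{bad}, this reduces to showing that, with probability at least $1-2^{-n}$, the matrix $V_{k,\ell,I,J}|_{X_1=\alpha_1,\dots,X_n=\alpha_n}$ has full column rank for every pair of increasing subsequences $(I,J)\in([n]^{\ell})^2$ that agree on at most $k-1$ coordinates.

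For each such fixed pair $(I,J)$, I would invoke Corollary~\ref{cor:fullrankprob} with $r=\lceil\varepsilon n/2\rceil$, which is admissible by the corollary's hypothesis. The assumption $q\ge(1+2\cdot 2^{6/\varepsilon}k)n$ gives $q-n+1\ge 2\cdot 2^{6/\varepsilon}kn$, so
\[
\frac{2n(k-1)}{q-n+1}\le \frac{2nk}{2\cdot 2^{6/\varepsilon}kn}=2^{-6/\varepsilon},
\]
and hence the per-pair failure probability is at most $\bigl(2^{-6/\varepsilon}\bigr)^{\lceil\varepsilon n/2\rceil}\le 2^{-3n}$. A union bound over the at most $\binom{n}{\ell}^2\le 2^{2n}$ relevant pairs $(I,J)$ then yields a total failure probability of at most $2^{2n}\cdot 2^{-3n}=2^{-n}$, finishing the argument.

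All of the real content lives in Lemma~\ref{bad} and Corollary~\ref{cor:fullrankprob}, so there is no substantive obstacle remaining here; the proof is essentially exponent bookkeeping. The only delicate point is the calibration of the constant $2^{6/\varepsilon}$, which is chosen precisely so that the per-pair bound $(2^{-6/\varepsilon})^{\varepsilon n/2}=2^{-3n}$ absorbs the $2^{2n}$ loss from enumerating $V$-matrices and leaves a clean $2^{-n}$ slack for the failure probability.
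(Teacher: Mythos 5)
Your proposal is correct and follows essentially the same route as the paper's proof: set $\ell=2k-1+\lfloor\varepsilon n\rfloor$, invoke Lemma~\ref{bad} to reduce the failure event to the existence of a bad $V$-matrix, apply Corollary~\ref{cor:fullrankprob} with $r=\lceil\varepsilon n/2\rceil$ for each fixed pair $(I,J)$, and union bound over the at-most-$\binom{n}{\ell}^2\le 2^{2n}$ pairs. Your arithmetic checking that $q\ge(1+2\cdot 2^{6/\varepsilon}k)n$ forces the per-pair bound below $2^{-3n}$ is exactly what the paper does (the paper writes the exponent as $\varepsilon n/2$ rather than $\lceil\varepsilon n/2\rceil$, which is harmless since the base is below $1$).
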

\begin{proof}
Let $\ell=(2k-1) + \floor{\varepsilon n}$. 
By Lemma \ref{bad}, if $\mathsf{RS}_{n,k}(\alpha_1,\dots,\alpha_n)\subseteq\F_q^n$ fails to correct $n-\ell$  adversarial insdel errors, then we will have two increasing subsequences $I,J\in[n]^{\ell}$ which agree on at most $k-1$ coordinates, such that the matrix $V_{k,\ell,I, J}|_{X_{1}=\alpha_{1},\dots,X_{n}=\alpha_{n}}$ does not have full column rank. On the other hand, applying Corollary \ref{cor:fullrankprob} with $r=\left\lceil\frac{\varepsilon n}{2}\right\rceil$, for fixed $I,J$, we have that
\[\Pr\left[\begin{aligned}
    &\text{The matrix }V_{k,\ell,I,J}|_{X_1=\alpha_1,\dots,X_{n}=\alpha_{n}}\\
&\text{does not have full column rank}
\end{aligned}\right]\leq 
\left(\frac{2n(k-1)}{q-n+1}\right)^{\frac{\epsilon n}{2}}.\]
By the union bound over all $(I,J)$ and the fact that $q\ge \left(1+2\cdot 2^{6/\epsilon}k\right)n$, 
the probability that $\mathsf{RS}_{n,k}(\alpha_1,\dots,\alpha_n)\subseteq\F_q^n$ cannot recover from $(1-\epsilon)n-2k+1$ adversarial insdel errors is at most $\binom{n}{\ell}^2\cdot \left(\frac{2n(k-1)}{q-n+1}\right)^{\frac{\epsilon n}{2}}\leq 2^{2n}\cdot\left(\frac{2n(k-1)}{q-n+1}\right)^{\frac{\epsilon n}{2}}\leq2^{-n}$, as desired. 
\end{proof}

\section{Achieving Linear Alphabet Size}

In this section, we present an improved analysis that achieves a linear alphabet size. We begin by defining the notion of ``chains'' and establishing related structural results about common subsequences in \cref{sec:chain}. Building on these foundations, we then prove \cref{main2} in \cref{sec:main}.

\subsection{Chain decomposition}\label{sec:chain}

\tikzstyle{chainnode}=[circle,draw=none,fill=black,scale=0.5]
\tikzstyle{chain}=[line width=3]
\tikzstyle{chain2}=[dotted,line width=1]
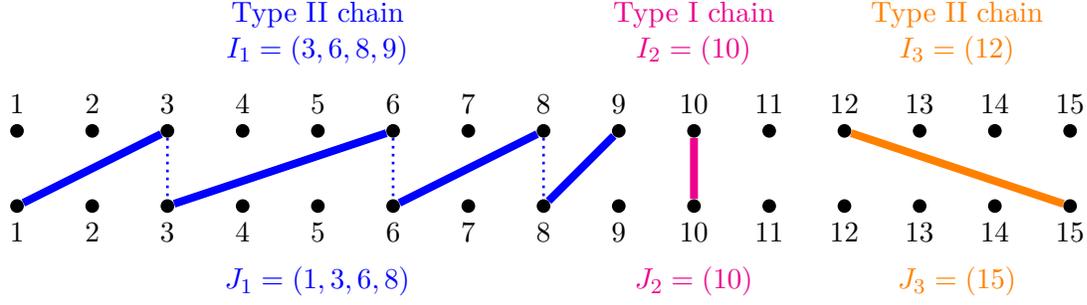
\begin{figure}
  \begin{tikzpicture}[every text node part/.style={align=center}]
        \foreach \i in {1,...,15} {
        \node[chainnode,label=below:{$\i$}] (j\i) at (1*\i,0) {};
        \node[chainnode,label={$\i$}] (i\i) at (1*\i,1) {};
        }

        \draw[chain, color=blue] (j1)--(i3);
        \draw[chain, color=blue] (j3)--(i6);
        \draw[chain, color=blue] (j6)--(i8);
        \draw[chain, color=blue] (j8)--(i9);
        \draw[chain2, color=blue] (i3)--(j3);
        \draw[chain2, color=blue] (i6)--(j6);
        \draw[chain2, color=blue] (i8)--(j8);

        \draw[chain, color=magenta] (j10)--(i10);

        \draw[chain, color=orange] (j15)--(i12);

        \node[color=blue] () at (5,2.3) {Type II chain\\$I_1=(3,6,8,9)$};
        \node[color=blue] () at (5,-1) {$J_1=(1,3,6,8)$};
        
        \node[color=magenta] () at (10,2.3) {Type I chain\\$I_2=(10)$};
        \node[color=magenta] () at (10,-1) {$J_2=(10)$};

        \node[color=orange] () at (13.5,2.3) {Type II chain \\ $I_3=(12)$};
        \node[color=orange] () at (13.5,-1) {$J_3=(15)$};
    \end{tikzpicture}
   
    \caption{Example of chains}\label{exa} 
\end{figure}

For convenience, we first introduce some definitions. 

\begin{defn}
For an increasing subsequence $I=(I_1,\dots,I_{\ell})\in [n]^{\ell}$, define
\[
\range(I):=\{I_1,\dots, I_{\ell}\}.
\]
\end{defn}

\begin{defn}[Chain]
\label{defn:chain}
For two increasing subsequences $I,J\in[n]^\ell$, where $n,\ell\in\N^+$, we call the pair $(I,J)$ a \textit{chain} if either (i) $I_{i} = J_{i+1}$ for all $i=1,\dots,\ell-1$, or (ii) $I_{i+1} = J_{i}$ for all $i=1,\dots,\ell-1$.
If $\ell=1$ and $I=J$, we call $(I,J)$ a Type I chain. Otherwise, we call $(I,J)$ a Type II chain. See Figure \ref{exa} for some examples. 
\end{defn}

Recall \cref{defn:restriction} that for an increasing subsequence $I\in [n]^{\ell}$ and $P=\{i_1,\dots,i_{\ell'}\}\subseteq [\ell]$ with $i_1<\dots<i_{\ell'}$, we let $I^P=(I_{i_1},\dots,I_{i_{\ell'}})$.

\begin{defn}[Maximal chain]\label{defn:maximal-chain}
Let $I,J\in[n]^{\ell}$ be two increasing subsequences.
Let $P$ be a subset of $[\ell]$ of size $\ell'\geq 1$.
Let $I'=I^P$ and $J'=J^P$. Suppose $(I',J')$ is a chain. 
We say the chain $(I',J')$ is \emph{maximal} with respect to $(I,J)$ if 
\begin{enumerate}
\item $(I',J')$ is a Type I chain, or 
\item $(I',J')$ is a Type II chain, and both $\min(I'_1,J'_1)$ and $\max(I'_{\ell'},J'_{\ell'})$ is in exactly one of $\range(I)$ and $\range(J)$.
\end{enumerate}
\end{defn}

Intuitively, a chain $(I', J')$ is maximal with respect to $(I,J)$ if it cannot be extended to a longer chain in $(I,J)$ by adding matches from either side.

\begin{defn}[Indices of variables]
For increasing subsequences $I,J\in [n]^{\ell}$ and $P\subseteq [\ell]$, define
\[
\var(I,J,P):=\range(I^P)\cup\range(J^P)\subseteq [n].
\]
Note that $\var(I,J,P)$ is the set of indices of the variables involved in a $V$-matrix $V_{k,|P|,I^P, J^P}$.\footnote{If $k=1$, these variables do not really appear as $X_i^{k-1}=X_i^0=1$. But we still consider $\var(I,J,P)$ as the set of indices of variables involved in $V_{k,|P|,I^P, J^P}$ in this case.}
\end{defn}

\begin{defn}[$(I,J)$-disjointness]\label{defn:disjointness}
Let $I,J\in[n]^\ell$ be increasing subsequences.
We say two sets $P,P'\subseteq [\ell]$ are \emph{$(I,J)$-disjoint} if $\var(I,J,P)$ and $\var(I,J,P')$ are disjoint. Note that $(I,J)$-disjointness implies disjointness.
\end{defn}

\begin{lemma}\label{lemma:disjointness}
Let $I,J\in[n]^{\ell}$ be two increasing subsequences.
Let $P'\subseteq P\subseteq [\ell]$. 
Suppose $(I^{P'}, J^{P'})$ is a maximal chain with respect to $(I^P,J^P)$.
Then $P'$ and $P\setminus P'$ are $(I,J)$-disjoint.
\end{lemma}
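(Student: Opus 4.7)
The plan is to show that for every $v \in \var(I, J, P')$ and every $i \in P \setminus P'$, neither $I_i$ nor $J_i$ equals $v$, which directly gives $\var(I, J, P') \cap \var(I, J, P \setminus P') = \emptyset$ as required. Write $P' = \{p_1 < p_2 < \cdots < p_{\ell'}\}$ and use freely that $I$ and $J$ are strictly increasing, hence injective on $[n]$. The Type I case is immediate: then $P' = \{p\}$ with $I_p = J_p$, so $\var(I, J, P') = \{I_p\}$, and injectivity gives $I_i \neq I_p$ and $J_i \neq J_p = I_p$ for every $i \in P \setminus P'$.

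The real work is the Type II case, and by symmetry I would treat only the subtype $I^{P'}_s = J^{P'}_{s+1}$ for $s = 1, \ldots, \ell'-1$; the other subtype follows after interchanging the roles of $I$ and $J$. Here the elements of $\var(I, J, P')$ are totally ordered as
\[
J^{P'}_1 < I^{P'}_1 = J^{P'}_2 < I^{P'}_2 = J^{P'}_3 < \cdots < I^{P'}_{\ell'-1} = J^{P'}_{\ell'} < I^{P'}_{\ell'},
\]
so each ``interior'' value $I^{P'}_s = J^{P'}_{s+1}$ (for $1 \le s < \ell'$) lies in both $\range(I^{P'})$ and $\range(J^{P'})$, while the two ``endpoint'' values $J^{P'}_1$ and $I^{P'}_{\ell'}$ each appear in exactly one of them. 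An interior value satisfies $v = I_{p_s} = J_{p_{s+1}}$, so $I_i = v$ or $J_i = v$ would force $i \in \{p_s, p_{s+1}\} \subseteq P'$, contradicting $i \in P \setminus P'$; thus interior values are handled by injectivity alone.

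The endpoint values are where the maximality hypothesis enters. Maximality of $(I^{P'}, J^{P'})$ with respect to $(I^P, J^P)$ says that the chain-minimum $J^{P'}_1$ and the chain-maximum $I^{P'}_{\ell'}$ each belong to exactly one of $\range(I^P)$ and $\range(J^P)$; combined with the trivial containments $J^{P'}_1 \in \range(J^P)$ and $I^{P'}_{\ell'} \in \range(I^P)$, this yields $J^{P'}_1 \notin \range(I^P)$ and $I^{P'}_{\ell'} \notin \range(J^P)$. Since $I_i \in \range(I^P)$ for every $i \in P$, this forbids $I_i = J^{P'}_1$, while $J_i = J^{P'}_1 = J_{p_1}$ is ruled out by injectivity; the argument for $I^{P'}_{\ell'}$ is symmetric. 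I do not anticipate any significant obstacle: the proof reduces to bookkeeping expressing that the maximality condition ``seals off'' each chain at its two extreme indices, precisely preventing the residual overlaps that injectivity alone cannot handle. The only point requiring care is the degenerate case $\ell' = 1$ of Type II (where there are no interior values), but this is handled uniformly by the endpoint analysis above.
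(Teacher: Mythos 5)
Your proof is correct and follows essentially the same strategy as the paper's: injectivity of $I$ and $J$ handles the interior values of a Type II chain, and the maximality hypothesis seals off the two endpoint values, with Type I being immediate. The paper phrases this via the set identity $\range(I^{P\setminus P'})=\range(I^P)\setminus\range(I^{P'})$ rather than checking element by element, but the underlying argument is the same.
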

\begin{proof}
By definition, we have
\begin{equation}\label{eq:complement}
\range(I^{P\setminus P'})=\range(I^P)\setminus \range(I^{P'}) \quad\text{and}\quad \range(J^{P\setminus P'})=\range(J^P)\setminus \range(J^{P'}).
\end{equation}
If $(I^{P'},J^{P'})$ is a Type I chain, the claim holds by \eqref{eq:complement} and the fact the that $I^{P'}=J^{P'}$.

Now suppose $(I^{P'},J^{P'})$ is a Type II chain.
Let $\ell'=|P'|$.
By definition, either $(I^{P'})_{i} = (J^{P'})_{i+1}$ for $i=1,\dots,\ell'-1$, or $(I^{P'})_{i+1} = (J^{P'})_{i}$ for $i=1,\dots,\ell'-1$.
This implies that all elements in $\var(I,J,P')$ except $a:=\min((I^{P'})_1,(J^{P'})_1)$ and $b:=\max((I^{P'})_{\ell'},(J^{P'})_{\ell'})$ are in both $\range(I^{P'})$ and $\range(J^{P'})$. So by \eqref{eq:complement}, these elements are excluded from $\var(I,J,P\setminus P')$.
As $(I^{P'}, J^{P'})$ is maximal with respect to $(I^P,J^P)$, both $a$ and $b$ are in exactly one of $\range(I^P)$ and $\range(J^P)$. 
And they are in $\range(I^P)$ (resp. $\range(J^P)$) iff they are in $\range(I^{P'})$ (resp. $\range(J^{P'})$).
By \eqref{eq:complement}, $a$ and $b$ are excluded from $\var(I,J,P\setminus P')$ as well.
\end{proof}
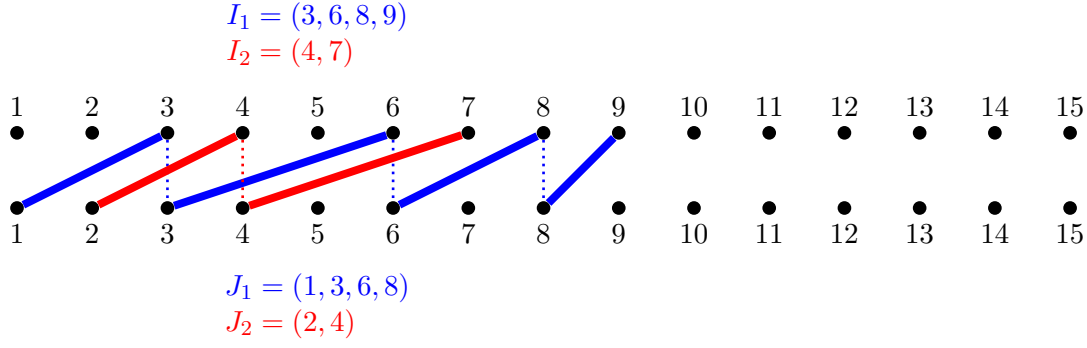
\begin{figure}
      \begin{tikzpicture}[every text node part/.style={align=left}]
        \foreach \i in {1,...,15} {
        \node[chainnode,label=below:{$\i$}] (j\i) at (1*\i,0) {};
        \node[chainnode,label={$\i$}] (i\i) at (1*\i,1) {};
        }
        \draw[chain, color=blue] (j1)--(i3);
        \draw[chain, color=blue] (j3)--(i6);
        \draw[chain, color=blue] (j6)--(i8);
        \draw[chain, color=blue] (j8)--(i9);
        \draw[chain2, color=blue] (i3)--(j3);
        \draw[chain2, color=blue] (i6)--(j6);
        \draw[chain2, color=blue] (i8)--(j8);

        \draw[chain, color=red] (j2)--(i4);
        \draw[chain, color=red] (j4)--(i7);
        \draw[chain2, color=red] (i4)--(j4);
        
        \node[] () at (5,2.3) {{\color{blue}$I_1=(3,6,8,9)$}\\{\color{red}$I_2=(4,7)$}};
        \node[] () at (5,-1.3) {{\color{blue}$J_1=(1,3,6,8)$}\\{\color{red}$J_2=(2,4)$}};
    \end{tikzpicture}
    \caption{A decomposition of $(I,J)$, where $I=(3,4,6,7,8,9)$ and $J=(1,2,3,4,6,8)$, into maximal chains $(I_1,J_1)$ and $(I_2,J_2)$. Note that the maximal chains are $(I,J)$-disjoint but can interleave.} 
    \label{fig:decomposition}
\end{figure}

\begin{thm}\label{thm:structure-general}
Let $I,J\in [n]^\ell$ be increasing subsequences.
Let $P\subseteq [\ell]$.
Then there exists a partition $P_1\sqcup P_2\sqcup \dots\sqcup P_s$ of $P$ into nonempty sets $P_i$ such that $(I^{P_i},J^{P_i})$ is a maximal chain with respect to $(I^P,J^P)$ for all $i\in[s]$. 
\end{thm}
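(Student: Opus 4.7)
The plan is to extract the chain decomposition from a natural auxiliary graph $H$ built from $(I,J,P)$. Let $H$ have vertex set $\var(I,J,P)\subseteq [n]$, and for each $i\in P$ include an edge $e_i=\{I_i,J_i\}$, taken as a self-loop when $I_i=J_i$. Because both $I^P$ and $J^P$ are strictly increasing, each value in $[n]$ appears at most once in $\range(I^P)$ and at most once in $\range(J^P)$, so every vertex of $H$ has degree at most $2$. The proposed partition will be $P_c:=\{j\in P: e_j \text{ lies in the } c\text{th connected component of } H\}$ for $c=1,\dots,s$; since each edge lies in a unique component, this is automatically a partition of $P$.

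The key claim is that $H$ has no nontrivial cycle. Suppose one exists and let $v$ be its minimum vertex. Both cycle-neighbors of $v$ are strictly larger, and $v$ cannot carry a self-loop (a self-loop at $v$ saturates both the $I$- and $J$-slot at $v$, leaving no room for any other incident edge). So $v$ has degree $2$ with two distinct incident edges, and, appearing at most once as an $I$-coordinate and once as a $J$-coordinate, it must occur as $v=I_{i_1}$ on one edge and $v=J_{i_2}$ on the other, with $i_1\ne i_2$. The other endpoint of $e_{i_1}$ is $J_{i_1}>v=J_{i_2}$, forcing $i_1>i_2$ by monotonicity of $J$; symmetrically the other endpoint of $e_{i_2}$ is $I_{i_2}>v=I_{i_1}$, giving $i_2>i_1$, a contradiction. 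Hence $H$ is acyclic up to self-loops, and combined with maximum degree $\le 2$ this implies that every connected component is either a single vertex with a self-loop, or a simple path $v_1-v_2-\cdots-v_t$ whose edges I label $e_{i_1},\dots,e_{i_{t-1}}$. Rerunning the same ``minimum vertex'' argument inside a path rules out internal local extrema, so after possibly reversing the path I may assume $v_1<v_2<\cdots<v_t$.

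With the structure pinned down, the chain and maximality conditions fall out almost mechanically. The self-loop components give Type I chains of length one. For a path component, every internal $v_j$ has degree $2$ and therefore plays the role of an $I$-coordinate on one incident edge and a $J$-coordinate on the other; starting from either $v_1=I_{i_1}$ (case A) or $v_1=J_{i_1}$ (case B) and propagating these alternating roles, I obtain $v_j=I_{i_j}$ and $v_{j+1}=J_{i_j}$ for all $j$ in case A, and the symmetric identities in case B. Monotonicity $v_1<v_2<\cdots<v_t$ then forces $i_1<i_2<\cdots<i_{t-1}$, so the elements of $P_c$ listed in their natural order are exactly $(i_1,\dots,i_{t-1})$, and the identity $v_{j+1}=J_{i_j}=I_{i_{j+1}}$ is exactly the Type II chain equation $I_{s_{a+1}}=J_{s_a}$ for $(I^{P_c},J^{P_c})$ (or $I_{s_a}=J_{s_{a+1}}$ in case B). Maximality holds because the endpoints $v_1,v_t$ equal $\min((I^{P_c})_1,(J^{P_c})_1)$ and $\max((I^{P_c})_{\ell'},(J^{P_c})_{\ell'})$ respectively, and having degree $1$ in $H$ they each appear in exactly one of $\range(I^P)$ and $\range(J^P)$ --- the precise condition of Definition~\ref{defn:maximal-chain}. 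The main obstacle I anticipate is making the role-tracking step fully rigorous without a proliferation of cases; I plan to isolate it as a short observation that every degree-$2$ vertex of $H$ is the endpoint of exactly one $I$-edge and one $J$-edge, which then makes both the monotonicity-to-index-ordering and the chain equation immediate.
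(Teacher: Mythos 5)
Your proof is correct, and it takes a genuinely different route from the paper. The paper argues by induction on $|P|$: it greedily grows a single maximal chain starting from the smallest index in $P$, shows that this chain is $(I,J)$-disjoint from the rest of $P$ (via \cref{lemma:disjointness}), and recurses on the remainder. You instead build the decomposition all at once by introducing an auxiliary graph $H$ on vertex set $\var(I,J,P)$ with an edge $\{I_i,J_i\}$ for each $i\in P$, observing that strict monotonicity of $I$ and $J$ gives maximum degree~$2$, and using a ``minimum vertex'' argument to rule out cycles and internal local extrema. The connected components of $H$ are then exactly the maximal chains: isolated self-loops give Type~I chains, monotone paths give Type~II chains with the correct chain equations by the alternating-slot bookkeeping, and the degree-one path endpoints verify the maximality condition of \cref{defn:maximal-chain} directly. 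A small point of care that you already anticipate: for a degree-$2$ interior vertex $v_j$, one should note explicitly that it cannot occur as $I_{i}$ on both incident edges (or as $J_i$ on both), since $I^P$ and $J^P$ are each injective; this is what forces the alternating $I$/$J$ role and makes the chain relation immediate.

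Compared with the paper's inductive proof, your argument has two structural advantages: the $(I,J)$-disjointness of the blocks $P_c$ is automatic (distinct components have disjoint vertex sets), so \cref{lemma:disjointness} is not needed at this step, and the decomposition you produce is visibly canonical --- the components of $H$ --- which in passing establishes the uniqueness remark the paper states without proof. The paper's approach, on the other hand, directly produces the greedy chain-by-chain process that the subsequent sections implicitly use, and avoids introducing graph-theoretic machinery. Both are short; your argument is somewhat more illuminating about \emph{why} the decomposition exists.
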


\begin{proof}
Induct on $t:=|P|$. When $t=0$, i.e., $P=\emptyset$, the theorem holds trivially. Now assume $t>0$ and the theorem holds for $t'<t$.

We first construct a nonempty set $P_1\subseteq P$ such that $(I^{P_1},J^{P_1})$ is a maximal chain with respect to $(I^P,J^P)$. Initially, let $i_1$ be the smallest integer in $P$, and let $P_1=\{i_1\}$.
If $I_{i_1}=J_{i_1}$, then $(I^{P_1},J^{P_1})$ is a Type I chain and hence a maximal chain with respect to $(I^P,J^P)$ by \cref{defn:maximal-chain}.

Now assume $I_{i_1}<J_{i_1}$. In this case, run the following process on $P_1=\{i_1\}$:  
{
\setlength{\interspacetitleruled}{0pt}%
\setlength{\algotitleheightrule}{0pt}%
\begin{algorithm}[htb]	
$s\gets 1$.

\While(\tcp*[f]{$i$ is unique if it exists}){there exists $i\in P$ such that $I_{i}=J_{i_s}$}{
$s\gets s+1$.

$i_s\gets i$.

$P_1\gets P_1\cup\{i_s\}$.
}
\end{algorithm}
}

The above process iteratively finds indices $i_1<\dots<i_s$ in $P$ such that $I_{i_1}<J_{i_1}=I_{i_2}<J_{i_2}=\dots=I_{i_s}<J_{i_s}$, and $J_{i_s}\not\in\range(I^{P})$. The resulting set $P_1$ is $\{i_1,\dots,i_s\}$. 
Note that both $I_{i_1}=\min(I_{i_1},J_{i_1})$ and $J_{i_s}=\max(I_{i_s}, J_{i_s})$ are in exactly one of $\range(I^{P})$ and $\range(J^{P})$. 
So by \cref{defn:maximal-chain}, $(I^{P_1}, J^{P_1})$ is a maximal chain with respect to $(I^P, J^P)$.

Finally, in the case where $I_{i_1}>J_{i_1}$, we construct $P_1$ via a symmetric process, where $I$ and $J$ and swapped. 

In all cases, we obtain a nonempty set $P_1\subseteq P$ such that $(I^{P_1},J^{P_1})$ is a maximal chain with respect to $(I^P,J^P)$. 
By \cref{lemma:disjointness}, $P_1$ and $P\setminus P_1$ are $(I,J)$-disjoint.

By the induction hypothesis, there exists a partition $P_2\sqcup \dots\sqcup P_s$ of $P\setminus P_1$ such that $(I^{P_i},J^{P_i})$ is a maximal chain with respect to $(I^{P\setminus P_1},J^{P\setminus P_1})$ for $i=2,\dots,s$. 
As $P_1$ and $P\setminus P_1$ are $(I,J)$-disjoint, it follows from \cref{defn:maximal-chain} that a maximal chain with respect to $(I^{P\setminus P_1},J^{P\setminus P_1})$ is also maximal with respect to $(I^P,J^P)$.
So $(I^{P_i},J^{P_i})$ is a maximal chain with respect to $(I^P,J^P)$ for $i=1,2,\dots,s$, as desired. 
\end{proof}

By choosing $P=[\ell]$, we obtain the following corollary, which states that for increasing subsequences $I,J\in [n]^\ell$, $(I,J)$ can always be decomposed into maximal chains.

\begin{cor}\label{thm:structure}
For any two increasing subsequences $I,J\in[n]^{\ell}$, there exists a partition $P_1\sqcup P_2\sqcup \dots\sqcup P_s$ of $[\ell]$ such that $(I^{P_i},J^{P_i})$ is a maximal chain with respect to $(I,J)$ for all $i\in[s]$. 
\end{cor}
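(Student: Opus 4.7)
The plan is to obtain Corollary~\ref{thm:structure} as an immediate specialization of Theorem~\ref{thm:structure-general}, which has already been proved in the excerpt. Specifically, I would take $P = [\ell]$ in Theorem~\ref{thm:structure-general}. With this choice, the restriction notation from Definition~\ref{defn:restriction} gives $I^{[\ell]} = I$ and $J^{[\ell]} = J$, so the theorem directly produces a partition $P_1 \sqcup P_2 \sqcup \dots \sqcup P_s$ of $[\ell]$ into nonempty subsets such that each $(I^{P_i}, J^{P_i})$ is a maximal chain with respect to $(I^{[\ell]}, J^{[\ell]}) = (I, J)$, which is exactly the conclusion needed.

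Since this is a one-line deduction, there is no substantive obstacle; the real work was already completed in the proof of Theorem~\ref{thm:structure-general} (the inductive construction of the first maximal chain $P_1$ by greedily following matchings $I_{i_1} < J_{i_1} = I_{i_2} < J_{i_2} = \dots$, or its symmetric variant, together with the $(I,J)$-disjointness of $P_1$ and $P \setminus P_1$ provided by Lemma~\ref{lemma:disjointness}). Thus my only remaining task is to make the specialization explicit and verify that no additional condition is hidden in passing from the general $P$ to $P = [\ell]$.

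The reason the reduction is clean is that the definitions of ``chain'' and ``maximal chain'' depend only on the restricted subsequences $(I^P, J^P)$ as an ambient pair, and setting $P = [\ell]$ simply makes that ambient pair equal to the original $(I, J)$. In particular, the notion of $(I,J)$-disjointness (Definition~\ref{defn:disjointness}) and the key closure property used in the induction step (``a maximal chain with respect to $(I^{P\setminus P_1}, J^{P\setminus P_1})$ remains maximal with respect to $(I^P, J^P)$ when $P_1$ and $P \setminus P_1$ are $(I,J)$-disjoint'') both carry through unchanged. I would therefore present the proof in a single sentence invoking Theorem~\ref{thm:structure-general} with $P = [\ell]$, possibly with a brief parenthetical reminder that $I^{[\ell]} = I$ and $J^{[\ell]} = J$ to orient the reader, and move on.
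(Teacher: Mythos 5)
Your proposal is exactly the paper's own proof: the paper states Corollary~\ref{thm:structure} immediately after Theorem~\ref{thm:structure-general} with the remark that it follows ``by choosing $P=[\ell]$,'' which is precisely the specialization you carry out. The deduction is correct and no additional verification is needed.
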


See \cref{fig:decomposition} for an example. In fact, one can prove that the decomposition into maximal chains is unique, but we do not need this result.

\paragraph{Splitting long chains into short ones.} 
The maximal chains in \cref{thm:structure} can be very long. The next lemma states that, by removing a small fraction of matchings from $(I,J)$, we can split the long chains into very short ones while maintaining their $(I,J)$-disjointness. (See also \cref{fig:split}.)

\begin{lemma}
\label{lem:split}
Let $I, J\in [n]^\ell$ be increasing subsequences. Let $\epsilon\in (0,1)$.
Then there exist a subset $P\subseteq [\ell]$ of size at least $(1-\eps)\ell$ and a partition 
$P_1\sqcup P_2\sqcup \dots\sqcup P_s$ of $P$ 
such that,
\begin{enumerate}
    \item $P_1,\dots,P_s$ are mutually $(I,J)$-disjoint.
    \item $|P_i|\leq 1/\eps$ for $i\in [s]$.
\end{enumerate}
\end{lemma}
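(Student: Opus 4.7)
The plan is to first apply Corollary~\ref{thm:structure} to decompose $(I,J)$ into mutually $(I,J)$-disjoint maximal chains, and then shorten each long chain by discarding a small number of matchings. Set $b := \lfloor 1/\eps \rfloor \geq 1$, so that $b+1 > 1/\eps$ and hence $1/(b+1) < \eps$. By Corollary~\ref{thm:structure}, partition $[\ell] = Q_1 \sqcup \cdots \sqcup Q_t$ with each $(I^{Q_j}, J^{Q_j})$ maximal with respect to $(I,J)$. Applying Lemma~\ref{lemma:disjointness} with $P = [\ell]$ and $P' = Q_j$ shows that $Q_j$ is $(I,J)$-disjoint from $[\ell]\setminus Q_j$, so by monotonicity of $\var(I,J,\cdot)$ under inclusion, the $Q_j$'s are pairwise $(I,J)$-disjoint.

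Next, for each $Q_j$ with $m_j := |Q_j| > b$ (necessarily a Type II chain, since Type I chains have length one), enumerate $Q_j = \{q_1 < \cdots < q_{m_j}\}$ and discard the matchings at positions $b+1, 2(b+1), \ldots, \lfloor m_j/(b+1) \rfloor \cdot (b+1)$. The surviving indices then form maximal consecutive runs of length at most $b$, each a sub-chain of $(I^{Q_j}, J^{Q_j})$. Keep any $Q_j$ with $|Q_j| \leq b$ as a single piece. Let $P_1,\ldots,P_s$ denote all the resulting pieces and set $P := P_1 \sqcup \cdots \sqcup P_s$. The size bound follows from $\ell - |P| \leq \sum_j \lfloor m_j/(b+1) \rfloor \leq \ell/(b+1) < \eps \ell$, and every piece has size at most $b \leq 1/\eps$.

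Finally, I verify $(I,J)$-disjointness of the pieces. Across different $Q_j$'s this is immediate from the first step. For two sub-blocks $P' = \{q_a, \ldots, q_{b'}\}$ and $P'' = \{q_c, \ldots, q_d\}$ inside a single Type II chain $Q_j$, assume $I_{q_s} = J_{q_{s+1}}$ for all relevant $s$ (the other orientation is symmetric). The identity $J_{q_{s+1}} = I_{q_s}$ collapses $\var(I,J,P')$ to $\{I_{q_s} : a \leq s \leq b'\} \cup \{J_{q_a}\}$, and similarly $\var(I,J,P'') = \{I_{q_s} : c \leq s \leq d\} \cup \{J_{q_c}\}$. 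When $c \geq b'+2$, which is guaranteed since the matching at $q_{b'+1}$ was removed, the two $I$-index ranges are disjoint, while $J_{q_a}$ (equal to $I_{q_{a-1}}$ for $a \geq 2$, or the fresh element $J_{q_1}$ for $a=1$) and $J_{q_c} = I_{q_{c-1}}$ have $I$-indices strictly outside the opposite block's range.

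I expect the main obstacle to be this last bookkeeping: one must confirm that removing the single matching $q_{b'+1}$ eliminates the unique shared variable $I_{q_{b'+1}} = J_{q_{b'+2}}$ between adjacent sub-blocks, and that the boundary variables $J_{q_a}$ and $J_{q_c}$ do not accidentally coincide with interior variables of the opposite block. Everything else is a counting argument once the chain decomposition and the one-removal-per-split strategy are in place.
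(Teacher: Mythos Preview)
Your proposal is correct and follows essentially the same approach as the paper: apply the chain decomposition from Corollary~\ref{thm:structure}, then within each long Type~II chain remove every $(\lfloor 1/\eps\rfloor+1)$-st element to split it into short sub-blocks, losing at most an $\eps$-fraction of~$[\ell]$. The only notable difference is that the paper simply asserts that splitting a chain into consecutive sub-blocks preserves $(I,J)$-disjointness, whereas you spell out the bookkeeping explicitly; your verification is correct (and in fact clarifies a point the paper glosses over).
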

\begin{figure}
\centering
      \begin{tikzpicture}[scale=0.5,every text node part/.style={align=left}]
      
        \foreach \i in {1,...,30} {
            \node[chainnode] (j\i) at (1*\i,0) {};
            \node[chainnode] (i\i) at (1*\i,1) {};
        }       
        \draw[chain, color=blue] (1,0)  -- (2,1);
        \foreach \i in {2,...,29} {
            \draw[chain2, color=blue] (1*\i,0)  -- (1*\i,1);
            \draw[chain, color=blue] (1*\i,0)  -- (1*\i+1,1);
        }
        \node[] at (15.5,-1) {$\downarrow$};
        \foreach \i in {1,...,30} {
            \node[chainnode] (j\i) at (1*\i,-2) {};
            \node[chainnode] (i\i) at (1*\i,-3) {};
        }       
        \draw[chain, color=blue] (1,0)  -- (2,1);
        \foreach \i in {0,...,4} {
            \tikzmath{\x = 100;}
            \draw[chain, color=blue!\x] (6*\i+1,-3)  -- (6*\i+2,-2);    
            \foreach \j in {2,...,5} {
                \draw[chain2, color=blue!\x] (6*\i+\j,-3)  -- (6*\i+\j,-2);
                \draw[chain, color=blue!\x] (6*\i+\j,-3)  -- (6*\i+\j+1,-2);
            }
            \node[rotate=90] at (3.5+6*\i,-3.6) {\Huge \scalebox{1}[3]{$\{$}};
            \node[] at (3.5+6*\i,-4.5) {$1/\varepsilon$};
        }

    \end{tikzpicture}
    \caption{\cref{lem:split}: splitting long chains into short ones. Ensure that all chains have a length of at most $1/\eps$ by removing at most $\eps$ fraction of pairs from each chain.}
    \label{fig:split}
\end{figure}
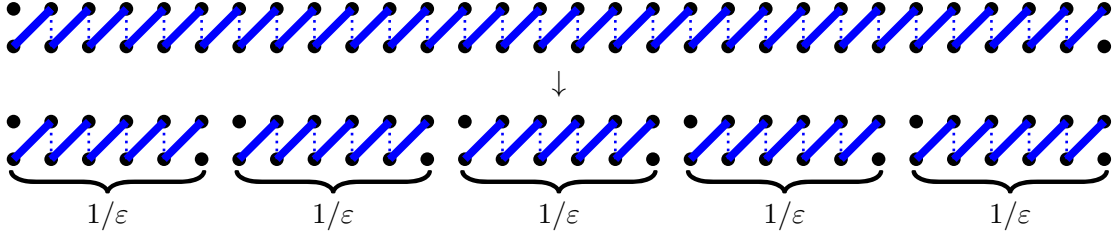

\begin{proof}
By \cref{thm:structure}, there exists a partition $P_1\sqcup P_2\sqcup \dots\sqcup P_{s}$ of $[\ell]$ such that $(I^{P_i},J^{P_i})$ is a maximal chain with respect to $(I,J)$ for all $i\in[s]$. Fix such a partition.
Then Item~1 holds by \cref{lemma:disjointness}.
Consider any $i\in [s]$ such that $|P_i|>1/\eps$.
Note that $(I^{P_i},J^{P_i})$ must be a Type II chain (since otherwise we would have $|P_i|=1\leq 1/\eps$).
Let $t=\lfloor 1/\eps\rfloor+1\geq 1/\eps$.
We remove the $j$-th smallest element of $P_i$ for $j=t, 2t,\dots,\lfloor P_i/t\rfloor t$.
The remaining elements split into consecutive blocks $P_{i,j}$, each of size at most $t-1\leq 1/\eps$. 
Splitting $P_i$ into $P_{i,j}$ preserves the $(I,J)$-disjointness of the sets. See \cref{fig:split} for an illustration. 

Perform the above operations for each $i$ with $|P_i|>1/\eps$, and we obtain the desired partition. The set $P$ consists of the remaining elements of $[\ell]$. Its size is at least $(1-\eps)\ell$ since we have removed at most $(1/t)$-fraction of elements from each $P_i$, where $1/t\leq \eps$.
\end{proof}
\subsection{Proof of Theorem \ref{main2}}\label{sec:main}

Fix $I,J\in [n]^\ell$ to be increasing subsequences that agree on at most $k-1$ coordinates.
For convenience, we introduce the following notation.

\begin{defn}\label{defn:blocks}
Let $n,k,\ell\in\N^+$. Let $P_1,\dots,P_s\subseteq [\ell]$ be nonempty subsets of $[\ell]$.
Define the block matrix
\[
\widetilde{V}_{k,I,J,(P_i)_{i=1}^s}(X_1,\dots,X_n)=\begin{pmatrix}
    V_{k,|P_1|,I^{P_1},J^{P_1}}\\
    V_{k,|P_2|,I^{P_2},J^{P_2}}\\
    \vdots\\
    V_{k,|P_s|,I^{P_s},J^{P_s}}
\end{pmatrix}
\]
which is a $\left(\sum_{i=1}^s |P_i|\right) \times (2k-1)$ matrix over $\F_q(X_1,\dots,X_n)$.
\end{defn}
\begin{lemma}\label{lem:submatrix}
Let $P_1,\dots,P_s\subseteq [\ell]$ and $\alpha_1,\dots,\alpha_n\in\F_q$.
If $\widetilde{V}_{k,I,J,(P_i)_{i=1}^s}$ has full column rank under the assignment $X_1=\alpha_1,\dots,X_n=\alpha_n$, then $V_{k,\ell,I,J}$ also has full column rank under the same assignment.
\end{lemma}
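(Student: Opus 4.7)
The statement is essentially that passing to a row-submatrix (possibly with repetitions) cannot increase the column rank. My plan is to prove the contrapositive. Suppose $V_{k,\ell,I,J}|_{X_1=\alpha_1,\dots,X_n=\alpha_n}$ does not have full column rank; then, since it has $2k-1$ columns, there exists a nonzero vector $v\in\F_q^{2k-1}$ with $V_{k,\ell,I,J}|_{X_1=\alpha_1,\dots,X_n=\alpha_n}\cdot v=0$.

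The key observation is that every row of each block $V_{k,|P_i|,I^{P_i},J^{P_i}}$ is, by construction, literally a row of $V_{k,\ell,I,J}$ (namely the row indexed by the corresponding element of $P_i$, written in exactly the same variables $X_1,\dots,X_n$). Therefore each row of $\widetilde{V}_{k,I,J,(P_i)_{i=1}^s}$ is also a row of $V_{k,\ell,I,J}$. Applying $v$ to either matrix row-by-row then gives the same per-row results, and in particular $\widetilde{V}_{k,I,J,(P_i)_{i=1}^s}|_{X_1=\alpha_1,\dots,X_n=\alpha_n}\cdot v=0$, which contradicts the assumption that the block matrix has full column rank. Equivalently, the column space of $\widetilde{V}_{k,I,J,(P_i)_{i=1}^s}$ is a coordinate-projection of the column space of $V_{k,\ell,I,J}$, so its dimension is at most $\rank V_{k,\ell,I,J}$.

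There is no real obstacle here; the only thing to be careful about is that the $P_i$'s are not assumed to be disjoint, so rows of $V_{k,\ell,I,J}$ may appear more than once in $\widetilde{V}_{k,I,J,(P_i)_{i=1}^s}$. This does not matter for the argument above: duplicating rows can only preserve or decrease the column rank, never increase it. So the one-line contrapositive argument suffices, and I would write the proof in just a few lines.
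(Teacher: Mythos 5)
Your proof is correct and takes essentially the same approach as the paper: both hinge on the observation that every row of $\widetilde{V}_{k,I,J,(P_i)_{i=1}^s}$ is a row of $V_{k,\ell,I,J}$. You phrase it as the contrapositive (a nonzero kernel vector of $V_{k,\ell,I,J}$ is also annihilated by $\widetilde{V}$), while the paper argues directly that the row space of $\widetilde{V}$ having dimension $2k-1$ forces the same for $V_{k,\ell,I,J}$; these are the same argument in two equivalent guises.
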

\begin{proof}
By definition, each row of $\widetilde{V}_{k,I,J,(P_i)_{i=1}^s}|_{X_1=\alpha_1,\dots,X_n=\alpha_n}$ is a row of $V_{k,\ell,I,J}|_{X_1=\alpha_1,\dots,X_n=\alpha_n}$.
So if the former matrix has full column rank, i.e., its row space has dimension $2k-1$, then so does the latter matrix. 
\end{proof}

\paragraph{Choosing sets $P_1,\dots,P_s$.}
Fix a parameter $\eps_0\in (0,1)$, whose exact value will be determined later. 
We note that by 
\cref{lem:split}, there exist mutually $(I,J)$-disjoint nonempty sets $P_1,\dots,P_s\subseteq [\ell]$, each of size at most $1/\eps_0$, such that $\sum_{i=1}^s |P_i|\geq (1-\eps_0)\ell$ and each $(I^{P_i}, J^{P_i})$ is a chain.
Fix such $P_1,\dots,P_s$.
We further make the following assumption:
\begin{assumption}\label{order-on-Pi}
$|P_1|\leq |P_2|\leq \dots \leq |P_s|$. Moreover, for $i,j\in [s]$ with $i<j$, if $(I^{P_j}, J^{P_j})$ is a Type I chain, so is $(I^{P_i}, J^{P_i})$. In other words, the Type I chains $(I^{P_i}, J^{P_i})$ have the smallest indices $i$.
\end{assumption}
Note that \cref{order-on-Pi} can be guaranteed by permuting the sets $P_i$. 

We will show that $\widetilde{V}_{k,I,J,(P_i)_{i=1}^s}|_{X_1=\alpha_1,\dots,X_n=\alpha_n}$ has full column rank with high probability over random $(\alpha_1,\dots,\alpha_n)$. By \cref{lem:submatrix}, this would imply that $V_{k,\ell,I,J}|_{X_1=\alpha_1,\dots,X_n=\alpha_n}$ also has full column rank with high probability.

\begin{algorithm}[htbp]
\caption{$\mathtt{CertifyFullColumnRankness2}$}
	\label{alg:algorithm-linear}
    \DontPrintSemicolon
	\KwIn{$n,k,\ell,r\in\N^+$, $\eps_0\in (0,1)$, increasing subsequences $I,J\in[n]^{\ell}$, $P_1,\dots,P_s\subseteq [\ell]$, and $\alpha_1,\dots,\alpha_n\in \F_q$.}  
	\KwOut{``SUCCESS'' or a sequence $(i_1,\dots,i_r)\in \{0,\dots,2k-2\}^r$.}  
	\BlankLine
    $S\gets \emptyset$. \tcp*[f]{set of indices of assigned variables} \label{line:1}
    
	$c\gets 0$. \tcp*[f]{number of assigned rows} \label{line:2}
    
    Pick the smallest $m\in [s]$ such that $\sum_{i=s-m+1}^s |P_i|\in [r/\eps_0, (r+1)/\eps_0]$. \tcp*[f]{last $m$ chains form the bank} \label{line:3}

    $a\gets 1$. \tcp*[f]{current chain to assign} \label{line:4}

    $b\gets s-m+1$. \tcp*[f]{current chain in the bank} \label{line:5}

    $j\gets 0$. \tcp*[f]{number of failed attempts} \label{line:6}

    \lFor{$i=1$ to $s$}{$Q_i\gets P_i$.} \label{line:7}
 
    \While{$c<2k-1$\label{line:8}} 
    {
    $M\gets$ the top $(2k-1)\times (2k-1)$ submatrix of $\widetilde{V}_{k,I,J,(Q_i)_{i=1}^{s-m}}$. \label{line:9}

    $S'\gets S\cup\var(I,J,Q_a)$. \label{line:10}

    $\overline{M}\gets M|_{X_i=\alpha_i \text{ for } i\in S'}$. \label{line:11}

    \uIf(\tcp*[f]{assign variables in the $a$-th chain}){$\overline{M}$ is nonsingular\label{line:12}}
        {
        $S\gets S'$. \label{line:13}

        $c\gets c + |Q_a|$. \label{line:14}

        $a\gets a+1$. \label{line:15}
        }
    \Else(\tcp*[f]{replace the $a$-th chain by (part of) the $b$-th chain in the bank})
        {

        $Q_a\gets$ set of the smallest $|Q_a|$ elements in $Q_{b}$.
        \label{line:17} 

        $Q_b\gets \emptyset$. \label{line:18}
        
        $b\gets b+1$. \label{line:increase-b}

        $j\gets j+1$. \label{line:22}

        $i_j\gets c$. \label{line:23}
        
        \lIf{$j=r$}{output $(i_1,\dots,i_r)$ and halt. \label{line:24}}
        }
    }
    Output ``SUCCESS''. \label{line:27}
\end{algorithm}
To bound the probability that $\widetilde{V}_{k,I,J,(P_i)_{i=1}^s}|_{X_1=\alpha_1,\dots,X_n=\alpha_n}$ does not have full column rank, we use \cref{alg:algorithm-linear}, whose pseudocode is given above, to assign the variables chain by chain. We sketch how the algorithm works.
\begin{itemize}
\item First, choose the smallest $m\in [s]$ such that $\sum_{i=s-m+1}^s |P_i|\in [r/\eps_0, (r+1)/\eps_0]$ (\cref{line:3}). We will soon prove that such $m$ exists for $\ell$ slightly larger than $2k-1$. The last $m$ chains $(I^{P_i}, J^{P_i})$, $i=s-m+1,\dots,s$, form a ``bank.'' At a high level, \cref{alg:algorithm-linear} attempts to assign the variables in the first $s-m$ chains, i.e., those not in the bank, while keeping the top $(2k-1)\times (2k-1)$ submatrix of $\widetilde{V}_{k,I,J,(P_i)_{i=1}^{s-m}}$ nonsingular even after the assignment.
If assigning the variables in some chain violates this property, the algorithm will use part of a chain in the bank to replace that chain, and continue. Such a replacement is done by updating the sets $P_i$. To avoid confusion, we create a copy $Q_i=P_i$ for $i\in [s]$ (\cref{line:7}) and update the sets $Q_i$ instead.

\item The variables are assigned chain by chain. That is, for $a=1,2,\dots$, \cref{alg:algorithm-linear} assigns $\alpha_i$ to $X_i$ for all $i\in \var(I,J,Q_a)$.
The algorithm terminates when the top $2k-1$ rows of $\widetilde{V}_{k,I,J,(Q_i)_{i=1}^{s-m}}$ have been fully assigned. This condition is checked at \cref{line:8}. The number of assigned rows is maintained as a variable $c$. That is, if the first $i$ chains $(I^{Q_1}, J^{Q_1}),\dots, (I^{Q_i}, J^{Q_i})$ have been assigned, then $c=\sum_{j=1}^{i} |Q_j|$.

\item 
When attempting to assign the variables in the $a$-th chain $(I^{Q_a}, J^{Q_a})$, we check (\cref{line:12}) whether the top $(2k-1)\times (2k-1)$ submatrix of $\widetilde{V}_{k,I,J,(Q_i)_{i=1}^{s-m}}$ remains nonsingular after the assignment. If so, we perform the assignment, and then move to the next chain (\crefrange{line:13}{line:15}). If not, we use a prefix of a chain $(I^{Q_{b}}, J^{Q_{b}})$ in the bank to replace the chain $(I^{Q_a}, J^{Q_a})$, and then throw away $(I^{Q_{b}}, J^{Q_{b}})$.\footnote{Alternatively, one can remove the used portion of $(I^{Q_{b}}, J^{Q_{b}})$ and continue using the remaining part. However, this approach does not qualitatively improve our parameters.}
In the algorithm, the updates to the chains $(I^{Q_a}, J^{Q_a})$ and $(I^{Q_{b}}, J^{Q_{b}})$ are done by simply updating the sets $Q_a$ and $Q_{b}$ (\crefrange{line:17}{line:18}). 

\item Finally, the algorithm maintains a variable $j$, which is the number of times the nonsingularity condition at \cref{line:12} fails to hold.
It also maintains a sequence $(i_1,\dots,i_j)$. 
When the nonsingularity condition fails for the $j$-th time, the number of successfully assigned rows is stored into $i_j$ (\cref{line:23}). We have $i_j\leq 2k-2$ due to the condition checked at \cref{line:8}.
If $j$ reaches a given integer $r$, the algorithm outputs the sequence $(i_1,\dots,i_r)$ and halts (\cref{line:24}).
We will eventually choose $r=\Theta(\eps^2 n)$.
\end{itemize}
From now on, assume $\ell$ is chosen such that
\begin{equation}\label{eq:condition}
(1-\eps_0)\ell\geq 2k-1+(r+1)/\eps_0. 
\end{equation}

As promised, we now show that \cref{line:3} of \cref{alg:algorithm-linear} is valid by proving the following lemma.
\begin{lemma}\label{lem:sum-of-Pi}
There exists $m\in [s]$ such that $\sum_{i=s-m+1}^s |P_i|\in [r/\eps_0, (r+1)/\eps_0]$. Moreover, for such $m$, it holds that $\sum_{i=1}^{s-m}|P_i|\geq 2k-1$ and $m\geq r$. 
\end{lemma}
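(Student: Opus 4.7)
The plan is to treat the suffix sums $T_m := \sum_{i=s-m+1}^{s}|P_i|$ for $m=0,1,\dots,s$ as a non-decreasing sequence starting at $0$, and observe that consecutive differences are small. Since $|P_1|\le|P_2|\le\cdots\le|P_s|\le 1/\varepsilon_0$ by \cref{order-on-Pi} and the choice of the $P_i$'s from \cref{lem:split}, we have $T_m - T_{m-1} = |P_{s-m+1}| \le 1/\varepsilon_0$ for each $m\in[s]$. Moreover, $T_s = \sum_{i=1}^s |P_i| \ge (1-\varepsilon_0)\ell$, and condition \eqref{eq:condition} forces $T_s \ge 2k-1 + (r+1)/\varepsilon_0 \ge (r+1)/\varepsilon_0$.

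For existence, I would let $m$ be the smallest index in $[s]$ with $T_m \ge r/\varepsilon_0$; such $m$ exists because $T_s \ge (r+1)/\varepsilon_0 > r/\varepsilon_0$ while $T_0 = 0$. By minimality, $T_{m-1} < r/\varepsilon_0$, and the bounded increment gives
\[
T_m = T_{m-1} + |P_{s-m+1}| < \frac{r}{\varepsilon_0} + \frac{1}{\varepsilon_0} = \frac{r+1}{\varepsilon_0},
\]
so $T_m \in [r/\varepsilon_0,(r+1)/\varepsilon_0]$, as required.

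For the bound $\sum_{i=1}^{s-m}|P_i|\ge 2k-1$, I would just subtract:
\[
\sum_{i=1}^{s-m}|P_i| = \Big(\sum_{i=1}^s|P_i|\Big) - T_m \ge (1-\varepsilon_0)\ell - \frac{r+1}{\varepsilon_0} \ge 2k-1,
\]
where the last inequality is exactly \eqref{eq:condition}. For the bound $m\ge r$, I would use the per-chain size cap: $T_m \le m\cdot \max_i|P_i|\le m/\varepsilon_0$, and combining this with $T_m\ge r/\varepsilon_0$ yields $m\ge r$.

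There is no serious obstacle here; the entire argument is a two-line accounting calculation once one spots that the suffix sums form a bounded-increment sequence that must cross the window $[r/\varepsilon_0,(r+1)/\varepsilon_0]$. The only care needed is to verify, using \cref{order-on-Pi} and the output of \cref{lem:split}, that $|P_i|\le 1/\varepsilon_0$ holds uniformly so the increment bound applies, and then to invoke \eqref{eq:condition} in exactly the right place for both the existence claim and the lower bound on $\sum_{i=1}^{s-m}|P_i|$.
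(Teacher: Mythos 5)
Your proof is correct and takes essentially the same approach as the paper: pick the smallest $m$ with suffix sum $\ge r/\eps_0$, bound the overshoot using $|P_i|\le 1/\eps_0$, and subtract to get the $2k-1$ bound via \eqref{eq:condition}. The only cosmetic difference is that you package the argument in terms of the bounded-increment suffix-sum sequence $T_m$, and you invoke the ordering from \cref{order-on-Pi}, which is not actually needed here (only the uniform bound $|P_i|\le 1/\eps_0$ from \cref{lem:split} is used).
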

\begin{proof}
Recall that $P_1,\dots,P_s$ that we choose satisfy
$\sum_{i=1}^s |P_i|\geq (1-\eps_0)\ell$.
Combining this with \eqref{eq:condition}, we have that $\sum_{i=1}^s |P_i|\geq (1-\eps_0)\ell\geq r/\eps_0$. Thus, there
exists $m\in [s]$ such that $\sum_{i=s-m+1}^s |P_i|\geq r/\eps_0$.
Pick the smallest such $m$. Then $\sum_{i=s-m+1}^s |P_i|\leq (r+1)/\eps_0$ by the minimality of $m$ and the fact that the size of each $P_i$ is at most $1/\eps_0$.
Therefore,
\[
\sum_{i=1}^{s-m}|P_i|=\left(\sum_{i=1}^{s}|P_i|\right)-\left(\sum_{i=s-m+1}^{s}|P_i|\right)\geq (1-\eps_0)\ell - (r+1)/\eps_0 \stackrel{\eqref{eq:condition}}{\geq} 2k-1. 
\]
Finally, as $\sum_{i=s-m+1}^s |P_i|\geq r/\eps_0$ and each $P_i$ has size at most $1/\eps_0$, we have $m\geq r$. 
\end{proof}

The next lemma ensures the validity of the other parts of \cref{alg:algorithm-linear}. Particularly, \cref{item:b} guarantees that \cref{line:9} is valid, while \cref{item:c,item:d} guarantees that \crefrange{line:17}{line:18} are valid.

\begin{lemma}\label{lem:basic-claims}
During \cref{alg:algorithm-linear}, after Line~7:
\begin{enumerate}[(a)]
\item\label{item:a}
 $|Q_i|=|P_i|$ for $i\in [s-m]$. In particular, $|Q_i|\leq 1/\eps_0$ for $i\in [s-m]$.
\item\label{item:b}
$\sum_{i=1}^{s-m}|Q_i|\geq 2k-1$.
\item\label{item:c0}
$1\leq a\leq s-m$ always holds at \cref{line:9}.
\item\label{item:c}
$s-m+1\leq b\leq s$ always holds at \cref{line:9}.
\item\label{item:d}
$|Q_{b}|\geq |Q_a|$ always holds at \cref{line:9}, and consequently, also at \cref{line:17}.

\end{enumerate}
\end{lemma}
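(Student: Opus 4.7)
The plan is to establish all five items simultaneously by induction on the iterations of the while loop, showing that they hold whenever control reaches Line~9. For the base case (the first visit to Line~9) we have $Q_i=P_i$ for all $i\in[s]$ by \cref{line:7}, together with $a=1$, $b=s-m+1$, and $j=0$. Item~(a) is immediate. Items~(b) and the bound $a\leq s-m$ in~(c) follow from \cref{lem:sum-of-Pi}, which in particular gives $s-m\geq 1$. Item~(d) reduces to $m\geq r\geq 1$, also from \cref{lem:sum-of-Pi}. For item~(e), \cref{order-on-Pi} makes the sequence $|P_i|$ non-decreasing, so $|Q_1|=|P_1|\leq|P_{s-m+1}|=|Q_{s-m+1}|$.

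The key structural observation driving the inductive step is that the sets $\{Q_i:i\leq s-m\}$ and $\{Q_i:i>s-m\}$ are modified in essentially disjoint ways. A set $Q_i$ with $i\leq s-m$ is only touched when $i$ equals the current $a$, and the assignment at \cref{line:17} (well-defined by the inductive (e), which guarantees $|Q_b|\geq|Q_a|$) replaces $Q_a$ by a subset of $Q_b$ of the same cardinality, so $|Q_i|$ is preserved; this gives~(a), and (b) follows by summing. A set $Q_i$ with $i>s-m$ is only emptied when $i=b$, at which point $b$ is immediately incremented; hence $Q_b=P_b$ whenever control is at \cref{line:9}.

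For Case~1 (the if-branch at \cref{line:12}), only $a$ and $c$ change. If the new value of $a$ would exceed $s-m$, then $c=\sum_{i=1}^{s-m}|Q_i|\geq 2k-1$ by~(b), the while-condition at \cref{line:8} fails, and control does not return to \cref{line:9}; otherwise~(c) is preserved. Item~(d) is unchanged, and for~(e) we use $|Q_{a+1}|=|P_{a+1}|$ and $|Q_b|=|P_b|$ with $a+1\leq s-m<b$, so \cref{order-on-Pi} gives $|P_{a+1}|\leq|P_b|$. For Case~2 (the else-branch), $b$ and $j$ each increase by one; if the updated $j$ equals $r$ the algorithm halts at \cref{line:24}, so assume $j\leq r-1$. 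Then the new $b=s-m+1+j\leq s-m+r\leq s$ since $m\geq r$, giving~(d). Item~(c) is unchanged, and~(e) for the new $b$ again follows from \cref{order-on-Pi} via $|Q_a|=|P_a|\leq|P_b|=|Q_b|$ using $a\leq s-m<b$.

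The main technical point is item~(e): all the other claims are bookkeeping about the counters $a,b,c,j$ once one has the invariant that $|Q_i|=|P_i|$ for $i\leq s-m$. The proof of~(e) rests crucially on \cref{order-on-Pi}, which was imposed precisely so that the size comparison $|Q_a|\leq|Q_b|$ reduces to $|P_a|\leq|P_b|$, a consequence of $a\leq s-m<b$ being maintained throughout the algorithm.
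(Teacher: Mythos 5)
Your proof is correct and rests on the same key facts as the paper's: $m\geq r$ and $\sum_{i\le s-m}|P_i|\ge 2k-1$ from \cref{lem:sum-of-Pi}, the observation that $Q_b=P_b$ at \cref{line:9} because a bank set is only ever emptied in the same step that $b$ is incremented past it, the fact that \cref{line:17} preserves $|Q_a|$ while \cref{line:18} only touches bank sets, and the implicit invariant $c=\sum_{i<a}|Q_i|$ which gives item~(c0) from (b). The only difference is organizational: you run a simultaneous induction over all five items across iterations of the while loop, whereas the paper first proves item~(c) directly (bounding the number of bank replacements by $r\le m$), then observes (a)$\Rightarrow$(b)$\Rightarrow$(c0), then proves (a) is preserved, and finally deduces (d); this is a presentational choice, not a genuinely different argument.
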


\begin{proof} 
First, note that the algorithm runs \crefrange{line:17}{line:24} at most $r$ times due to \cref{line:24} and the fact that each time it runs, $b$ increases by one at \cref{line:increase-b}.
Initially, $b=s-m+1$.
As $r\leq m$ by \cref{lem:sum-of-Pi},  \cref{item:c} holds.

Note that $\sum_{j=1}^{s-m}|P_i|\geq 2k-1$ holds by \cref{lem:sum-of-Pi}. Thus, if \ref{item:a} is true, so is \ref{item:b}. 
And if \ref{item:b} is true,  \ref{item:c0} must also be true.
This is because if $a>s-m$ ever holds at \cref{line:9}, then by \ref{item:b}, we have $c=\sum_{i=1}^{a-1} |Q_i|\geq 2k-1$ and the algorithm would have exited the while loop before at \cref{line:8}.

We know \ref{item:a} holds initially after \cref{line:7}. It continues to hold in each iteration of the while loop since \cref{line:17} does not change the size of $Q_a$ and \cref{line:18} only changes $Q_b$ for $b>s-m$.

Finally, to see that \ref{item:d} holds, note that at \cref{line:9}, we have $Q_b=P_b$. This is because $Q_b$ is only changed at \cref{line:18}, followed by  \cref{line:increase-b} which increases $b$ by one. So $Q_b=P_b$ still holds after \cref{line:increase-b}.
Then by \ref{item:a}, \ref{item:c0}, \ref{item:c}, and the fact that $|P_1|\leq |P_2|\leq \cdots\leq |P_s|$, we have $|Q_b|=|P_b|\geq |P_a|=|Q_a|$ at  \cref{line:9}, proving \ref{item:d}.
\end{proof}

The output of \cref{alg:algorithm-linear} can be used to certify the full-column-rankness of the matrix $V_{k,\ell,I,J}|_{X_1=\alpha_1,\dots,X_n=\alpha_n}$, as stated by the following lemma.

\begin{lemma}\label{lem:terminate} 
\cref{alg:algorithm-linear} terminates, outputting either ``SUCCESS'' or $(i_1,\dots,i_r)\in\{0,\dots,2k-2\}^r$. In the former case, $V_{k,\ell,I,J}|_{X_1=\alpha_1,\dots,X_n=\alpha_n}$ has full column rank. 
\end{lemma}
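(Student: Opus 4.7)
The plan is to prove the claim in three stages: termination with the correct output format, the range bound $i_j \in \{0, \ldots, 2k-2\}$, and the SUCCESS correctness claim.

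For termination, I will observe that each pass through the while loop strictly increments exactly one of two counters. In the "if" branch at \cref{line:12}, the counter $a$ is incremented at \cref{line:15} and $c$ grows by $|Q_a| \ge 1$. In the "else" branch, $b$ is incremented at \cref{line:increase-b} and $j$ at \cref{line:22}. By \cref{lem:basic-claims}, both $a$ and $b$ remain in bounded ranges throughout, and the algorithm exits whenever $c \ge 2k-1$ (via \cref{line:27}) or $j = r$ (via \cref{line:24}). Since $c$ and $j$ are both monotone nondecreasing, one of these exit conditions must fire after finitely many steps. The bound $i_j \in \{0, \ldots, 2k-2\}$ is immediate from \cref{line:23}: $i_j$ is set to the current value of $c$, which satisfies $0 \le c < 2k-1$ because the while guard held at the start of that iteration.

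The core of the proof is the SUCCESS case. My key subclaim is that at \cref{line:9} of every iteration, the sets $Q_1, \ldots, Q_{s-m}$ remain mutually $(I,J)$-disjoint. I will prove this by induction on the number of completed iterations, using that the original $P_1, \ldots, P_s$ are mutually $(I,J)$-disjoint by hypothesis and that the "else" branch replaces $Q_a$ by a subset of the current $Q_b$, which still equals $P_b$ (since $b$ starts at $s-m+1$ and is incremented after each use at \cref{line:increase-b}, so the current $Q_b$ has never been modified). Since $P_b$ lies in the bank with $b > s-m$, the replacement remains $(I,J)$-disjoint from every $Q_{i'}$ with $i' \in [s-m] \setminus \{a\}$.

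Given this invariant, when the algorithm outputs "SUCCESS" at \cref{line:27}, the immediately preceding iteration was an "if" iteration in which $\overline{M}$ was found nonsingular at \cref{line:12}. Because the $Q_i$'s are mutually $(I,J)$-disjoint, every indeterminate appearing in the top $2k-1$ rows of $\widetilde V_{k,I,J,(Q_i)_{i=1}^{s-m}}$ lies in $S' = \bigcup_{i'=1}^{a} \var(I,J,Q_{i'})$. Hence $\overline{M}$ contains no symbolic entries and is a nonsingular matrix in $\F_q^{(2k-1)\times(2k-1)}$. Each row of $\overline{M}$ is, by construction of $\widetilde V$, a row of $V_{k,\ell,I,J}$ indexed by some element of $\bigcup_i Q_i \subseteq [\ell]$, with the variables evaluated at the assignment in $S'$. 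Since these rows involve no variables outside $S'$, they coincide with the corresponding rows of $V_{k,\ell,I,J}|_{X_1 = \alpha_1, \ldots, X_n = \alpha_n}$. Therefore $V_{k,\ell,I,J}|_{X_1 = \alpha_1, \ldots, X_n = \alpha_n}$ has $2k-1$ linearly independent rows and full column rank.

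The main obstacle will be rigorously maintaining the $(I,J)$-disjointness invariant across the "else" updates, since $Q_a$ is actively being modified and one must check that each replacement draws from a fresh bank chain; once this invariant is in place, the rest is a bookkeeping translation between rows of the block matrix $\widetilde V$ and rows of $V_{k,\ell,I,J}$.
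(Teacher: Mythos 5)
Your termination and range-bound arguments are correct and match the paper's: one of $j$, $a$ strictly increases each pass, \cref{lem:basic-claims} bounds $a$ and $b$, and $i_j=c<2k-1$ because the while-guard held.

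The SUCCESS case has a genuine gap. Your pivotal deduction is: ``Because the $Q_i$'s are mutually $(I,J)$-disjoint, every indeterminate appearing in the top $2k-1$ rows of $\widetilde V_{k,I,J,(Q_i)_{i=1}^{s-m}}$ lies in $S'$.'' This is a non sequitur. Mutual $(I,J)$-disjointness says nothing about \emph{which} blocks the top $2k-1$ rows come from; it is entirely possible for disjoint blocks $Q_1,\dots,Q_{s-m}$ to be arranged so that blocks past index $a$ contribute rows among the first $2k-1$, in which case their (unassigned) variables would appear in $\overline{M}$. What actually forces the top $2k-1$ rows to be fully assigned is a \emph{size-counting} fact, not a disjointness fact: in the final iteration (an if-iteration), after \cref{line:14} one has $c=\sum_{i=1}^{a_{\mathrm{old}}}|Q_i|\ge 2k-1$ (where $a_{\mathrm{old}}$ is the value of $a$ during that iteration), so the first $2k-1$ rows of $\widetilde V_{k,I,J,(Q_i)_{i=1}^{s-m}}$ lie entirely within blocks $1,\dots,a_{\mathrm{old}}$, whose variables are exactly $\bigcup_{i=1}^{a_{\mathrm{old}}}\var(I,J,Q_i)=S'$. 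You do state $S'=\bigcup_{i'\le a}\var(I,J,Q_{i'})$ but never connect it to the exit condition $c\ge 2k-1$, which is the load-bearing step. The paper's proof makes exactly this connection (via \cref{lem:basic-claims} for existence of such an iteration) and then applies \cref{lem:submatrix}, never invoking disjointness. Disjointness is the content of \cref{lem:disjoint}, and it \emph{is} needed in the paper, but for \cref{lem:nonsingular}, not here. So your detour through the disjointness invariant is both unnecessary and, as used, logically insufficient; replace the offending sentence with the accounting argument and the rest of your proof (inlining \cref{lem:submatrix} row by row) goes through.
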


\begin{proof}
In each iteration of the while loop, either $a$ or $j$ increases by one. If $j$ reaches $r$, the algorithm outputs $(i_1,\dots,i_r)$ and halts, where $i_1,\dots,i_r\leq 2k-2$ due to the condition checked at \cref{line:8}.
On the other hand, if $a$ gets large enough, the number of assigned rows $c=\sum_{i=1}^{a-1} |Q_i|$ becomes at least $2k-1$ by \cref{lem:basic-claims}, in which case the algorithm outputs ``SUCCESS'' and terminates. 
Suppose this is the case. Then the condition at \cref{line:12} is satisfied in the last round of the while loop, i.e., the top $(2k-1)\times (2k-1)$ submatrix of $\widetilde{V}_{k,I,J,(Q_i)_{i=1}^{s-m}}|_{X_i=\alpha_i \text{ for } i\in S'}$, denoted by $\overline{M}$, is nonsingular.
As the top $2k-1$ rows have been fully assigned, $\overline{M}$ is also the top $(2k-1)\times (2k-1)$ submatrix of
$\widetilde{V}_{k,I,J,(Q_i)_{i=1}^{s-m}}|_{X_1=\alpha_1,\dots,X_n=\alpha_n}$.
By \cref{lem:submatrix}, $V_{k,\ell,I,J}|_{X_1=\alpha_1,\dots,X_n=\alpha_n}$ has full column rank.
\end{proof}

As the variable $c$ in \cref{alg:algorithm-linear} never decreases, we have the following lemma.

\begin{lemma}\label{lem:output}
Any sequence $(i_1,\dots,i_r)$ output by \cref{alg:algorithm-linear} satisfies $i_1\leq \dots\leq i_r$.
\end{lemma}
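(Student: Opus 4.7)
The plan is to prove the lemma by a straightforward monotonicity argument on the counter variable $c$ in \cref{alg:algorithm-linear}. The key observation is that $c$ never decreases during the run, and each $i_j$ is simply the value of $c$ at the moment the $j$-th failure at \cref{line:12} is recorded.

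To make this precise, I would first verify that $c$ is never decreased. It is initialized to $0$ at \cref{line:2} and subsequently modified only at \cref{line:14}, where $c \gets c + |Q_a|$. Whenever the algorithm reaches \cref{line:9}, the set $Q_a$ is nonempty: initially $Q_a = P_a \neq \emptyset$, and if $Q_a$ has been updated via \cref{line:17}, then it consists of the $|Q_a|$ smallest elements of $Q_b$ with $|Q_b| \geq |Q_a| \geq 1$ guaranteed by \cref{lem:basic-claims}. Hence $|Q_a| \geq 1$ at the moment of the update, so the update can only increase $c$.

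Next, each $i_j$ is assigned at \cref{line:23} to be the current value of $c$. Between the iteration of the while loop that records $i_j$ and the iteration that records $i_{j+1}$, the algorithm may execute any number of intermediate iterations, each of which either enters the if branch (which can only increase $c$, by the above) or enters the else branch (which does not touch $c$ at all before and up to \cref{line:23}). Therefore the value of $c$ at the moment $i_{j+1}$ is recorded is at least its value at the moment $i_j$ is recorded, yielding $i_j \leq i_{j+1}$ for every $j \in [r-1]$.

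There is no real obstacle here; the lemma is an immediate structural feature of the algorithm. Its significance, as flagged in the proof overview, is that it constrains the possible outputs of \cref{alg:algorithm-linear} to non-decreasing sequences $0 \leq i_1 \leq \cdots \leq i_r \leq 2k-2$, of which there are only $\binom{2k-1+r}{r}$ rather than the naive $(2k-1)^r$. This combinatorial saving is precisely what will eventually enable a union bound over certificates to go through with only a linear-sized alphabet.
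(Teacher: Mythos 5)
Your proof is correct and follows essentially the same approach as the paper, which simply observes that the counter $c$ never decreases and that each $i_j$ is set to the current value of $c$; you have merely spelled out the (easy) verification that $c$ is monotone. (One small nit, not in the proof itself: the number of nondecreasing sequences in $\{0,\dots,2k-2\}^r$ is $\binom{2k-2+r}{r}$, not $\binom{2k-1+r}{r}$, though this does not affect the argument.)
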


Recall that for $i\in [s]$, the set of indices of the variables involved in $V_{k,|Q_i|,I^{Q_i},J^{Q_i}}$ is $\var(I,J,Q_i)$.
The next lemma states that these sets are mutually $(I,J)$-disjoint (see \cref{defn:disjointness}).

\begin{lemma}\label{lem:disjoint}
$Q_1,\dots,Q_s$ are mutually $(I,J)$-disjoint before and after \crefrange{line:17}{line:18}.
\end{lemma}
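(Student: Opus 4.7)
My plan is to prove the lemma by induction on the number of iterations of the while loop in \cref{alg:algorithm-linear}. The base case is immediate: right after \cref{line:7}, we have $Q_i = P_i$ for each $i \in [s]$, and the sets $P_1, \ldots, P_s$ were chosen at the start of \cref{sec:main} to be mutually $(I,J)$-disjoint (by invoking \cref{lem:split}). This also handles the ``before \crefrange{line:17}{line:18}'' part of any iteration, since the ``if'' branch (\crefrange{line:13}{line:15}) does not touch any $Q_i$.

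For the inductive step, suppose the $(I,J)$-disjointness holds before an iteration reaches \cref{line:17}. Write $Q_1, \dots, Q_s$ for the state right before \cref{line:17} and $Q_1', \dots, Q_s'$ for the state right after \cref{line:18}. The update only modifies entries $a$ and $b$: we set $Q_a'$ to be the set of the smallest $|Q_a|$ elements of $Q_b$, and $Q_b' = \emptyset$; all other entries are unchanged. The key observation is that $Q_a' \subseteq Q_b$. Because $I^{Q_a'}$ is a subsequence of $I^{Q_b}$ and likewise for $J$, we get $\range(I^{Q_a'}) \subseteq \range(I^{Q_b})$ and $\range(J^{Q_a'}) \subseteq \range(J^{Q_b})$, and therefore
\[
\var(I,J,Q_a') \subseteq \var(I,J,Q_b).
\]

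Now verify mutual $(I,J)$-disjointness of $Q_1', \dots, Q_s'$ by cases. For $i, j \notin \{a, b\}$, the sets $Q_i' = Q_i$ and $Q_j' = Q_j$ are disjoint in the $\var$ sense by the inductive hypothesis. For the pair $(Q_a', Q_i')$ with $i \notin \{a, b\}$, the inclusion $\var(I,J,Q_a') \subseteq \var(I,J,Q_b)$ reduces the claim to $(I,J)$-disjointness of $Q_b$ and $Q_i$ in the old state, which is precisely the inductive hypothesis. Finally, $Q_b' = \emptyset$ gives $\var(I,J,Q_b') = \emptyset$, so the pair $(Q_b', Q_i')$ is trivially $(I,J)$-disjoint for every $i \ne b$. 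This completes the induction.

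There is no real obstacle here; the entire argument rests on the set-theoretic fact that replacing $Q_a$ by a subset of $Q_b$ (and then emptying $Q_b$) cannot create any new $\var$-overlap with the other chains, because any potential overlap of $Q_a'$ with some $Q_i$ ($i \ne a,b$) would already have been an overlap of $Q_b$ with $Q_i$, contradicting the inductive hypothesis. The only subtlety worth flagging is the monotonicity ``$Q_a' \subseteq Q_b$ implies $\var(I,J,Q_a') \subseteq \var(I,J,Q_b)$,'' which follows directly from the definition of $\var$ and the fact that restriction to a subset of indices can only shrink the ranges of $I^{(\cdot)}$ and $J^{(\cdot)}$.
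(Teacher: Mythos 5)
Your proof is correct and follows essentially the same approach as the paper's: establish the base case from the $(I,J)$-disjointness of the $P_i$, and observe that replacing $Q_a$ by a subset of $Q_b$ (while emptying $Q_b$) cannot create new overlaps in the $\var$ sense. The paper states this more tersely, but the underlying argument is identical; your extra detail on the monotonicity of $\var$ under subset inclusion is a fine, if not strictly necessary, elaboration.
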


\begin{proof}
Initially, the $(I,J)$-disjointness holds since it holds for the sets $P_1,\dots,P_s$ and $Q_i=P_i$ for $i\in [s]$.
We just need to verify that \crefrange{line:17}{line:18} preserve the $(I,J)$-disjointness.
At \cref{line:17}, $Q_a$ is replaced by a subset of $Q_b$, and at \cref{line:18}, we let $Q_b=\emptyset$. So the $(I,J)$-disjointness still holds.
\end{proof}

The next lemma states that the
values $\alpha_i$ with $i\in\var(I,J,Q_a)$ used in one iteration of the while loop have not been used in the previous iterations.

\begin{lemma}\label{lem:new-elements}
At \cref{line:9} of \cref{alg:algorithm-linear}, none of the elements in $\var(I,J,Q_a)$ has been added to $S'$ in the previous iterations of the while loop.
\end{lemma}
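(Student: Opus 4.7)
The plan is to characterize $S$ as a union of $\var(I,J,Q_{a'})$ over committed chains $a'<a$, and then invoke \cref{lem:disjoint} to conclude disjointness from $\var(I,J,Q_a)$.

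First, observe from the pseudocode that $S$ is modified only at \cref{line:13}, where it is updated to $S\cup \var(I,J,Q_a)$, and immediately afterwards the active-chain pointer $a$ is incremented at \cref{line:15}. Hence, tracking the value of $a$ when each update occurs, we obtain the invariant
\begin{equation*}
S \;=\; \bigcup_{a'\in C}\var(I,J,Q_{a'}),
\end{equation*}
where $C\subseteq\{1,\dots,a-1\}$ is the set of chain-indices that have already been ``committed'' (i.e., passed the nonsingularity test at \cref{line:12}) in earlier iterations, and the $Q_{a'}$ on the right-hand side refers to the state at the moment of commitment.

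Next, I would argue that for every $a'\in C$, the set $Q_{a'}$ has \emph{not} been altered between the commitment iteration and the current iteration. The only lines that modify any $Q_i$ are \cref{line:17} and \cref{line:18}. \cref{line:17} writes into $Q_a$, where $a$ is the current active-chain index; since $a$ only ever increases (via \cref{line:15}), and commitment happens before this increment, $Q_{a'}$ for $a'<a$ is never overwritten by \cref{line:17}. \cref{line:18} writes into $Q_b$ with $b\geq s-m+1$ by \cref{lem:basic-claims}\ref{item:c}, whereas a committed chain satisfies $a'\leq s-m$ by \cref{lem:basic-claims}\ref{item:c0}. Hence the $Q_{a'}$ appearing in the invariant above coincides with the \emph{current} value of $Q_{a'}$.

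Combining these two observations, at \cref{line:9} we have $S=\bigcup_{a'\in C}\var(I,J,Q_{a'})$ where the $Q_{a'}$ are the current values. By \cref{lem:disjoint}, the sets $Q_1,\dots,Q_s$ are mutually $(I,J)$-disjoint at this moment, which by \cref{defn:disjointness} means $\var(I,J,Q_a)\cap\var(I,J,Q_{a'})=\emptyset$ for every $a'\in C$ (since $a'\neq a$). Therefore $\var(I,J,Q_a)\cap S=\emptyset$, as required. The main subtlety in carrying this out is the bookkeeping in the second step—making sure that committed chains are genuinely frozen and that the states used to populate $S$ match the states appearing in the current $(I,J)$-disjointness assertion; everything else is a direct appeal to \cref{lem:disjoint} and \cref{lem:basic-claims}.
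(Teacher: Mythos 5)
Your proof establishes that $\var(I,J,Q_a)\cap S=\emptyset$ at line 9, and your bookkeeping for that (the invariant $S=\bigcup_{a'\in C}\var(I,J,Q_{a'})$, the argument that committed chains are frozen, and the appeal to \cref{lem:disjoint}) is correct. But this is strictly weaker than what \cref{lem:new-elements} asserts. The lemma is about $S'$, not $S$: it says $\var(I,J,Q_a)$ is disjoint from everything that was ever placed in $S'$ at line 10 of an earlier iteration. These two sets differ precisely when the nonsingularity test at line 12 \emph{fails}: in that case $S'=S\cup\var(I,J,Q_a)$ was formed at line 10, but the subsequent branch (lines 17--24) does not commit $S'$ back into $S$, and instead replaces $Q_a$ by part of a bank chain and discards the old one. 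So the old $\var(I,J,Q_a)$ was "used" (it entered $S'$, and the test outcome at line 12 — hence the algorithm's trajectory — depends on the corresponding $\alpha_i$'s) yet it never appears in $S$. Your invariant, which only tracks committed chains, cannot see these discarded sets, and \cref{lem:disjoint} cannot help either because after the replacement the discarded elements no longer lie in any $Q_i$, so current $(I,J)$-disjointness says nothing about them.

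This gap matters downstream: the stronger statement is exactly what \cref{lastlem} needs, since the conditioning event $F'_t$ there depends on the failed tests as well, and one must know that $(\alpha_i)_{i\in\var(I,J,Q_a)}$ is independent of \emph{all} of them. The paper closes the gap by a different invariant: it tracks the ``still available'' set $U:=\bigcup_{i=a}^s\var(I,J,Q_i)$ and shows $U$ shrinks across iterations and, after each iteration (pass or fail), excludes the old $\var(I,J,Q_a)$. Thus any element added to $S'$ for the first time at iteration $t'$ (necessarily lying in the then-current $\var(I,J,Q_a)$) is gone from $U$ forever after, and in particular cannot be in $\var(I,J,Q_a)$ at any later line 9. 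To repair your argument you would need to augment your invariant to also account for the discarded $\var(I,J,Q_a)$ sets from failed iterations — for instance by observing that every $Q_a$ encountered comes from a \emph{distinct} index among $P_1,\dots,P_s$ (passes advance through $P_1,P_2,\dots$ while failures draw from fresh bank indices), so all the $\var(I,J,Q_a^{(t)})$ over iterations $t$ are pairwise disjoint.
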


\begin{proof}
In each iteration of the while loop, either $a$ increases by one, or we replace $Q_a$ by a subset of $Q_b$ and let $Q_b=\emptyset$. Either case, the set $\bigcup_{i=a}^{s} \var(I,J,Q_i)$ shrinks to a proper subset and excludes the elements in $\var(I,J,Q_a)$ (for the old $a$). 

Suppose an element $x\in [n]$ is added to $S'$ for the first time at \cref{line:10}. Then $x\in \var(I,J,Q_a)$. (This is because, if $x\in S$ at \cref{line:10}, then $x$ must have been added to $S'$ before.) 
By the first paragraph, $x$ will not appear in $\var(I,J,Q_a)$ in the later iterations. The contrapositive of this statement is the lemma.
\end{proof}

Next, we make the crucial observation that the behavior of the algorithm is mostly determined by the values of $j$, $(i_1,\dots,i_j)$, and $c$, regardless of $\alpha_1,\dots,\alpha_n$. 

\begin{lemma}\label{lem:determined}
At \cref{line:12} of \cref{alg:algorithm-linear}, the values of $Q_1,\dots,Q_s$ and $a$ are determined by the values of $j$, $i_1,\dots,i_j$, and $c$, together with the input excluding $\alpha_1,\dots,\alpha_n$. 
\end{lemma}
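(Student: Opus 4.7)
The plan is to exploit a key invariant of \cref{alg:algorithm-linear}: the sizes $|Q_i|$ are preserved throughout execution. Indeed, for each $i \le s-m$ we have $|Q_i^{(t)}| = |P_i|$ at every visit to \cref{line:12} (this is \cref{lem:basic-claims}\ref{item:a}), because \cref{line:17} replaces $Q_a$ by a subset of $Q_b$ of size exactly $|Q_a|$, which is well defined thanks to \cref{lem:basic-claims}\ref{item:d}; bank chains ($i > s-m$) are only ever either left untouched or emptied.

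Using this invariant, I would first show that $a$ is determined by $c$. Each pass in the while loop increments $c$ by $|Q_a| = |P_a|$ and $a$ by $1$ (\crefrange{line:14}{line:15}), while each fail leaves both untouched, so a short induction on the iteration number gives
\[
c \;=\; \sum_{i=1}^{a-1} |P_i|
\]
at every visit to \cref{line:12}. Since each $|P_i| \ge 1$, these prefix sums are strictly increasing, so $a$ is recovered uniquely from $c$ together with the input.

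Next I would reconstruct $(Q_1,\dots,Q_s)$ from $(j, i_1,\dots,i_j)$ and the input. For each $\ell \in [j]$, let $a_\ell^\star$ denote the value of $a$ at the moment of the $\ell$-th failure; applying the identity above at that moment yields $i_\ell = \sum_{i=1}^{a_\ell^\star - 1} |P_i|$, so $a_\ell^\star$ is uniquely determined by $i_\ell$ and the input. Moreover, $b$ starts at $s-m+1$ (\cref{line:5}) and is incremented by one at each failure (\cref{line:increase-b}), so at the $\ell$-th failure $b = s-m+\ell$; and since bank chains are only ever modified when they are emptied, $Q_{s-m+\ell}$ still equals $P_{s-m+\ell}$ just before that failure. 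Hence the $\ell$-th failure deterministically sets $Q_{a_\ell^\star}$ to the smallest $|P_{a_\ell^\star}|$ elements of $P_{s-m+\ell}$ and sets $Q_{s-m+\ell}$ to $\emptyset$. Composing these effects in order, with later failures overriding earlier ones whenever two failures share the same $a_\ell^\star$, expresses every current $Q_i$ as an explicit function of $(j, i_1, \dots, i_j)$ and the input.

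The main subtlety to verify carefully is the size-invariance of each $Q_i$, since the identity $c = \sum_{i=1}^{a-1}|P_i|$, the identification of $a_\ell^\star$ from $i_\ell$, and the bank bookkeeping all rest on it. Once this is secured, combining the recovery of $a$ from $c$ with the recovery of $(Q_1,\dots,Q_s)$ from $(j, i_1,\dots,i_j)$ completes the proof.
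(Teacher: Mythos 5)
Your proof is correct but takes a genuinely different route from the paper's. The paper argues by \emph{simulation}: fix the target tuple $(j^*,i_1^*,\dots,i_{j^*}^*,c^*)$, rerun the algorithm, and show that the outcome of each nonsingularity check at \cref{line:12} --- which naively requires knowing $\alpha_1,\dots,\alpha_n$ --- is in fact forced by comparing the current $(j,c)$ with the target data: while $j<j^*$ the check must fail exactly when $c=i_{j+1}^*$, and once $j=j^*$ it must succeed until $c=c^*$ (using only that $c$ is nondecreasing and that $i_{j+1}$ records the value of $c$ at the $(j{+}1)$-st failure). You instead prove the explicit invariant $c=\sum_{i=1}^{a-1}|P_i|$ from \cref{lem:basic-claims}\ref{item:a} and the fact that $a-1<s-m$ always holds, invert it (strict monotonicity of the prefix sums since each $P_i\neq\emptyset$) to recover $a$ from $c$ and each $a_\ell^\star$ from $i_\ell$, note that at the $\ell$-th failure the bank index is $b=s-m+\ell$ with $Q_b$ still equal to $P_b$, and then reconstruct $(Q_1,\dots,Q_s)$ by replaying the deterministic effect of the $j$ failures in order (later failures overriding earlier ones when they hit the same $a_\ell^\star$). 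Both arguments lean on \cref{lem:basic-claims}, but yours is more constructive --- it yields a closed-form description of the reached state --- whereas the paper's resolves each branch of the simulation operationally without ever needing the prefix-sum identity.
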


\begin{proof}
Fix the input excluding $\alpha_1,\dots,\alpha_n$ and the values of $j$, $i_1,\dots,i_j$, and $c$.
We use $j^*$, $i_1^*,\dots, i_{j^*}^*$, and $c^*$ to denote these values and distinguish them from the variables.

Simulate the algorithm until it reaches \cref{line:12} with $j=j^*$, $(i_1,\dots,i_j)=(i_1^*,\dots, i_j^*)$, and $c=c^*$. Then read off $Q_1,\dots, Q_s$ and $a$. Doing the simulation naively requires knowing $\alpha_1,\dots,\alpha_n$ to check the condition at \cref{line:12}.
However, we will show that the outcome of any condition check at \cref{line:12} performed during the simulation can be determined by our fixed values.

Suppose the simulation has reached \cref{line:12}, where $i_1,\dots,i_{j}$ have been chosen and $j\leq j^*$. 
First consider the case $j<j^*$.
In this case, as $j$ has not yet reached $j^*$, the values $i_{j+1}^*, \ldots,i_{j^*}^*$ have not yet been assigned to $i_{j+1},\ldots, i_{j^*}$ by the algorithm. 
Then there are two subcases:
\begin{enumerate}
    \item $c\neq i_{j+1}^*$. 
    In this case, the condition at \cref{line:12} must be true, since otherwise the algorithm would store $c$, rather than $i_{j+1}^*$, into  $i_{j+1}$. 
    \item $c=i_{j+1}^*$. In this case, the condition at \cref{line:12} must be false, since otherwise the algorithm would increase $c$ at \cref{line:14} and then either choose $i_{j+1}$ to be greater than $i_{j+1}^*$ or not choose it at all.
\end{enumerate}

Now consider the case $j=j^*$. In this situation, $j$ has reached $j^*$. However, recall that the simulation ends when $c=c^*$. Again, we have two subcases to consider.
\begin{enumerate}
    \item $c\neq c^*$. In this case, the condition at \cref{line:12} must be true, so that the algorithm can continue to increase $c$ to $c^*$ before choosing $i_{j^*+1}$.
    \item $c=c^*$. This is the end of the simulation where we read off $Q_1,\dots,Q_s$ and $a$. Note that the next time the algorithm reaches \cref{line:12}, either $j$ will be greater than $j^*$ or $c$ will be greater than $c^*$.
\end{enumerate}
As we can determine $Q_1,\dots,Q_s$ and $a$ without knowing $\alpha_1,\dots,\alpha_n$, the claim follows.
\end{proof}

\begin{figure}
    \centering
    
        \begin{align*}
       \left[\,\,\,\,
        \begin{matrix}
            \vdots &&&\vdots&\vdots &&\vdots\\
            \rowcolor{blue!10} 1 & X_{i+1} & \cdots & X_{i+1}^{k-1} & X_{i} & \cdots & X_{i}^{k-1} \\
            \rowcolor{blue!20} 1 & X_{i+2} & \cdots & X_{i+2}^{k-1} & X_{i+1} & \cdots & X_{i+1}^{k-1} \\
            \rowcolor{blue!30} 1 & X_{i+3} & \cdots & X_{i+3}^{k-1} & X_{i+2} & \cdots & X_{i+2}^{k-1} \\
            \rowcolor{blue!40} 1 & X_{i+4} & \cdots & X_{i+4}^{k-1} & X_{i+3} & \cdots & X_{i+3}^{k-1} \\
            \vdots &&&\vdots&\vdots &&\vdots\\
        \end{matrix}
        \,\,\,\,
        \right]
        \rightarrow
        \left[\,\,\,\,
        \begin{matrix}
            \vdots &&&\vdots&\vdots &&\vdots\\
            \rowcolor{gray!30} 1 & X_{j+1} & \cdots & X_{j+1}^{k-1} & X_{j} & \cdots & X_{j}^{k-1} \\
            \rowcolor{gray!45} 1 & X_{j+2} & \cdots & X_{j+2}^{k-1} & X_{j+1} & \cdots & X_{j+1}^{k-1} \\
            \rowcolor{gray!60} 1 & X_{j+3} & \cdots & X_{j+3}^{k-1} & X_{j+2} & \cdots & X_{j+2}^{k-1} \\
            \rowcolor{gray!75} 1 & X_{j+4} & \cdots & X_{j+4}^{k-1} & X_{j+3} & \cdots & X_{j+3}^{k-1} \\
            \vdots &&&\vdots&\vdots &&\vdots\\
        \end{matrix}
        \,\,\,\,
        \right]
    \end{align*} 

     \begin{tikzpicture}[scale=0.5]
       \foreach \i in {1,...,30} {
            \node[chainnode] (j\i) at (1*\i,0) {};
            \node[chainnode] (i\i) at (1*\i,1) {};
        }       
        \draw[chain, color=blue!10] (1,0)  -- (2,1);
        \foreach \i in {2,...,4} {
           \pgfmathtruncatemacro{\ii}{10+(\i-1)*10};   
            \draw[chain2, color=blue] (1*\i,0)  -- (1*\i,1);
            \draw[chain, color=blue!\ii] (1*\i,0)  -- (1*\i+1,1);
        }
        \node at (1,-1) {$i$};
        \node at (5,-1) {$i+4$};

        \draw[chain, color=gray!30] (24,0)  -- (25,1);
        \foreach \i in {25,...,28} {
            \pgfmathtruncatemacro{\ii}{30+(\i-24)*15};
            \draw[chain2, color=gray] (1*\i,0)  -- (1*\i,1);
            \draw[chain, color=gray!\ii] (1*\i,0)  -- (1*\i+1,1);
        }
        \node at (24,-1) {$j$};
        \node at (28,-1) {$j+4$};
    \end{tikzpicture}

        \begin{align*}
       \left[\,\,\,\,
        \begin{matrix}
            \vdots &&&\vdots&\vdots &&\vdots\\
            \rowcolor{blue!10} 1 & X_{i+1} & \cdots & X_{i+1}^{k-1} & X_{i} & \cdots & X_{i}^{k-1} \\
            \rowcolor{blue!20} 1 & X_{i+2} & \cdots & X_{i+2}^{k-1} & X_{i+1} & \cdots & X_{i+1}^{k-1} \\
            \rowcolor{blue!30} 1 & X_{i+3} & \cdots & X_{i+3}^{k-1} & X_{i+2} & \cdots & X_{i+2}^{k-1} \\
            \rowcolor{blue!40} 1 & X_{i+4} & \cdots & X_{i+4}^{k-1} & X_{i+3} & \cdots & X_{i+3}^{k-1} \\
            \vdots &&&\vdots&\vdots &&\vdots\\
        \end{matrix}
        \,\,\,\,
        \right]
        \rightarrow
        \left[\,\,\,\,
        \begin{matrix}
            \vdots &&&\vdots&\vdots &&\vdots\\
            \rowcolor{gray!75} 1 & X_{j+3} & \cdots & X_{j+3}^{k-1} & X_{j+4} & \cdots & X_{j+4}^{k-1} \\
            \rowcolor{gray!60} 1 & X_{j+2} & \cdots & X_{j+2}^{k-1} & X_{j+3} & \cdots & X_{j+3}^{k-1} \\
            \rowcolor{gray!45} 1 & X_{j+1} & \cdots & X_{j+1}^{k-1} & X_{j+2} & \cdots & X_{j+2}^{k-1} \\
            \rowcolor{gray!30} 1 & X_{j} & \cdots & X_{j}^{k-1} & X_{j+1} & \cdots & X_{j+1}^{k-1} \\
            \vdots &&&\vdots&\vdots &&\vdots\\
        \end{matrix}
        \,\,\,\,
        \right]
    \end{align*} 
            
         \begin{tikzpicture}[scale=0.5]
       \foreach \i in {1,...,30} {
            \node[chainnode] (j\i) at (1*\i,0) {};
            \node[chainnode] (i\i) at (1*\i,1) {};
        }       
        \draw[chain, color=blue!10] (1,0)  -- (2,1);
        \foreach \i in {2,...,4} {
            \pgfmathtruncatemacro{\ii}{10+(\i-1)*10};
            \draw[chain2, color=blue] (1*\i,0)  -- (1*\i,1);
            \draw[chain, color=blue!\ii] (1*\i,0)  -- (1*\i+1,1);
        }
        \node at (1,-1) {$i$};
        \node at (5,-1) {$i+4$};

        \draw[chain, color=gray!30] (25,0)  -- (24,1);
        \foreach \i in {25,...,28} {
            \pgfmathtruncatemacro{\ii}{30+(\i-24)*15};
            \draw[chain2, color=gray] (1*\i,0)  -- (1*\i,1);
            \draw[chain, color=gray!\ii] (1*\i+1,0)  -- (1*\i,1);
        }
        \node at (24,-1) {$j$};
        \node at (28,-1) {$j+4$};
    \end{tikzpicture}
        \begin{align*}
       \left[\,\,\,\,
        \begin{matrix}
            \vdots &&&\vdots&\vdots &&\vdots\\
            \vdots &&&\vdots&\vdots &&\vdots\\
            \rowcolor{blue!30} 1 & X_{i} & \cdots & X_{i}^{k-1} & X_{i} & \cdots & X_{i}^{k-1} \\
            \vdots &&&\vdots&\vdots &&\vdots\\
            \vdots &&&\vdots&\vdots &&\vdots
        \end{matrix}
        \,\,\,\,
        \right]
        \rightarrow
        \left[\,\,\,\,
        \begin{matrix}
            \vdots &&&\vdots&\vdots &&\vdots\\
            \vdots &&&\vdots&\vdots &&\vdots\\
            \rowcolor{gray!30} 1 & X_{j} & \cdots & X_{j}^{k-1} & X_{j+1} & \cdots & X_{j+1}^{k-1} \\
            \vdots &&&\vdots&\vdots &&\vdots\\
            \vdots &&&\vdots&\vdots &&\vdots
        \end{matrix}
        \,\,\,\,
        \right]
    \end{align*}
    
         \begin{tikzpicture}[scale=0.5]
       \foreach \i in {1,...,30} {
            \node[chainnode] (j\i) at (1*\i,0) {};
            \node[chainnode] (i\i) at (1*\i,1) {};
        }       
        \draw[chain, color=blue!30] (2,0)  -- (2,1);
        \node at (2,-1) {$i$};

        \draw[chain, color=gray!30] (25,0)  -- (24,1);
        \foreach \i in {25,...,28} {
            \pgfmathtruncatemacro{\ii}{30+(\i-24)*15};
            \draw[chain2, color=gray] (1*\i,0)  -- (1*\i,1);
            \draw[chain, color=gray!\ii] (1*\i+1,0)  -- (1*\i,1);
        }
        \node at (24,-1) {$j$};
    \end{tikzpicture}

    \caption{\cref{lem:nonsingular}. Re-indeterminating a faulty (blue) chain with (part of) a (gray) chain from the bank. Three cases: 
    (1) The faulty chain and the bank's chain are both type II with the same orientation.  The new $V$-matrix is equivalent to the original one. 
    (2) The faulty chain and the bank's (gray) chain are both type II with different orientations. The new $V$-matrix is equivalent to the original one when we view the bank's gray-chain in \emph{reverse}.
    (3) The faulty chain is type I and the bank's chain is type II. The new $V$-matrix is not exactly equivalent to the original one, but is similar enough: since the old matrix is full rank (before setting $X_i$), the new matrix certainly is full rank as well, which is what we need.
    }
    \label{fig:replace}
\end{figure}

We also need the following crucial lemma.

\begin{lemma}\label{lem:nonsingular}
At \cref{line:12} of \cref{alg:algorithm-linear}, $M|_{X_i=\alpha_i \text{ for } i\in S}$ is nonsingular.
\end{lemma}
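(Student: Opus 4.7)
The plan is to strengthen the claim into an invariant suitable for induction on iterations of \cref{alg:algorithm-linear}'s while loop: each time \cref{line:12} is reached, the matrix $M|_{X_i=\alpha_i,\,i\in S}$ is nonsingular over $\F_q(X_i : i\notin S)$. The base case $S=\emptyset$ is immediate from \cref{roub}, since $M$ has the form $V_{k,2k-1,I^P,J^P}$ for some $P\subseteq[\ell]$ of size $2k-1$, and $I^P,J^P$ inherit the ``agree on at most $k-1$ coordinates'' condition from $(I,J)$.

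For the inductive step, I will split by which branch of the conditional at \cref{line:12} is taken. The successful branch is immediate: $M$ is unchanged since no $Q_i$ with $i\le s-m$ is modified, and the new $S$ equals the old $S'$, so the matrix checked at \cref{line:12} of the next iteration equals $\overline{M}$, which was just verified nonsingular. The failed branch is the substantive case: $S$ is unchanged, but $Q_a$ is swapped for the length-$|Q_a|$ prefix of $Q_b$ (which is long enough by part~(d) of \cref{lem:basic-claims}) and $Q_b$ is emptied. I will exploit the fact that the rows of the updated matrix $M_{\mathrm{new}}$ outside the block $R$ of rows contributed by chain $Q_a$ are identical to the corresponding rows of the old $M$, while the rows in $R$ now involve fresh variables from $\var(I,J,Q_a^{\mathrm{new}})$ which, by \cref{lem:disjoint} and \cref{lem:new-elements}, are disjoint from $S$ and from every other variable appearing in $M_{\mathrm{new}}$.

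The heart of the argument is to exhibit a polynomial specialization of $M_{\mathrm{new}}|_{X_i=\alpha_i,\,i\in S}$, applied only to variables outside $S$, that reproduces $M|_{X_i=\alpha_i,\,i\in S}$ up to a row permutation. By \cref{order-on-Pi}, if $Q_a^{\mathrm{old}}$ is Type~II then so is $Q_b$, so the case analysis is (i) both Type~II with the same orientation, (ii) both Type~II with opposite orientations, (iii) both Type~I, and (iv) $Q_a^{\mathrm{old}}$ Type~I and the new chain Type~II. In (i) and (iii) the ``specialization'' is a bijective relabeling of $\var(I,J,Q_a^{\mathrm{old}})$ onto $\var(I,J,Q_a^{\mathrm{new}})$ obtained by matching corresponding rows of the two chains; case (ii) is the same after reversing the rows in $R$; case (iv) is handled by specializing the single new row $(1,X_j,\dots,X_j^{k-1},X_{j'},\dots,X_{j'}^{k-1})$ (where $j\neq j'$ because the length-$1$ prefix of a Type~II chain of length $\geq 2$ is itself Type~II) to the single old row $(1,X_i,\dots,X_i^{k-1},X_i,\dots,X_i^{k-1})$ via $X_j,X_{j'}\mapsto X_i$. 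In every case the specialized matrix equals $M|_{X_i=\alpha_i,\,i\in S}$ (possibly after permuting rows in $R$), so $\det M_{\mathrm{new}}|_{X_i=\alpha_i,\,i\in S}$ admits a specialization to a nonzero polynomial and is therefore itself nonzero.

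The main obstacle I anticipate is the bookkeeping in cases (i) and (ii): I must verify that matching rows one-by-one between $Q_a^{\mathrm{old}}$ and the prefix of $Q_b$ yields an honest bijection $\var(I,J,Q_a^{\mathrm{old}})\to\var(I,J,Q_a^{\mathrm{new}})$. This uses the Type~II chain identity $I_{Q_s}=J_{Q_{s+1}}$ (or $I_{Q_{s+1}}=J_{Q_s}$), which forces consecutive rows to share one of their two indices, so that the two $\var$-sets look like two paths of the same length traversing the same sequence of row shapes. I also need to confirm that the ``missing'' case (Type~II old, Type~I new) is genuinely excluded by \cref{order-on-Pi}. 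Once these structural points are checked, the specialization argument closes the induction.
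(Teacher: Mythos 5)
Your proposal matches the paper's proof essentially step for step: the same induction on while-loop iterations, the same trivial treatment of the successful branch, and in the failed branch the same idea of exhibiting a variable specialization $\sigma$ (acting only on variables outside $S$) that sends the new matrix to the old one up to a row permutation, so that $\det(M') = \pm\sigma(\det(M''))$ and hence $\det(M'') \neq 0$. The two structural points you flag as remaining work --- the orientation-matching bijection between var-sets for Type~II chains, and the exclusion of ``old Type~II, new Type~I'' via \cref{order-on-Pi} combined with the observation that the current $Q_a$ always originates as a prefix of $P_{b'}$ for some $b' < b$ --- are precisely the points the paper checks to close the argument; the only cosmetic difference is that the paper merges your cases (iii) and (iv) into a single ``$(I',J')$ is Type~I'' case, noting the substitution is valid whether or not the replacement chain is Type~I.
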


\begin{proof}
We prove by induction on the iteration of the while loop. When the algorithm first reaches \cref{line:12}, we have $S=\emptyset$, and hence $M|_{X_i=\alpha_i \text{ for } i\in S}$ is simply $M$, which is the top $(2k-1)\times (2k-1)$ submatrix of $\widetilde{V}_{k,I,J,(Q_i)_{i=1}^{s-m}}=\widetilde{V}_{k,I,J,(P_i)_{i=1}^{s-m}}$.
Note that by \cref{defn:blocks}, this matrix equals, up to reordering of rows, the $V$-matrix $V_{k,2k-1,I^P,J^P}$ for some $P\subseteq \bigcup_{i=1}^{s-m} P_i$ of size $2k-1$.
So by \cref{roub}, $M|_{X_i=\alpha_i \text{ for } i\in S}$ is nonsingular.

Next, assume that for some $t\in\N^+$, $M|_{X_i=\alpha_i \text{ for } i\in S}$ is nonsingular at \cref{line:12} in the $t$-th iteration of the while loop. It suffices to show that $M|_{X_i=\alpha_i \text{ for } i\in S}$ is still nonsingular at \cref{line:12} in the $(t+1)$-th iteration. There are two cases:

Case 1: the matrix $\overline{M}=M|_{X_i=\alpha_i \text{ for } i\in S'}$ is nonsingular at \cref{line:12} in the $t$-th iteration.
The algorithm then runs \crefrange{line:13}{line:15} and replaces $S$ by $S'$.
So in the $(t+1)$-th iteration, $M|_{X_i=\alpha_i \text{ for } i\in S}$ is nonsingular at \cref{line:12}.

Case 2: the matrix $\overline{M}=M|_{X_i=\alpha_i \text{ for } i\in S'}$ is singular at \cref{line:12} in the $t$-th iteration.
In this case, the algorithm runs \crefrange{line:17}{line:24}, which do not change the set $S$. However, $Q_a$ is replaced by a subset of $Q_b$ at \cref{line:17}, which we denote by $Q_b'$. This replacement, together with \cref{line:9} in the $(t+1)$-th iteration, changes the value of $M$. We need to verify that it does not affect the nonsingularity of $M|_{X_i=\alpha_i \text{ for } i\in S}$. 

At \cref{line:12} in the $t$-th iteration, we have $\var(I,J,Q_a)\cap S=\emptyset$. This is because if some $x\in\var(I,J, Q_a)$ is in $S$, then it must have been added to $S'$ before the $t$-th iteration, contradicting \cref{lem:new-elements}.
Similarly, we have  $\var(I,J,Q_b')\cap S=\emptyset$ by the same argument and the fact that $Q_b'$ becomes the new $Q_a$ at \cref{line:9} in the $(t+1)$-th iteration. 
The conclusion is that even under the assignment $X_i=\alpha_i$ for $i\in S$, the variables $X_{i'}$ with $i'\in \var(I,J,Q_a)\cup \var(I,J,Q_b')$ remain free variables.

Let $e=|Q_a|$.
Suppose $Q_a=\{j_1,\dots,j_e\}\subseteq [\ell]$ with $j_1<\dots<j_e$, which is replaced by $Q_b'=\{j_1',\dots,j_e'\}\subseteq Q_b$ with $j_1'<\dots<j_e'$ at \cref{line:17} in the $t$-th iteration.
As $M$ only contains the first $2k-1$ rows of $\widetilde{V}_{k,I,J,(Q_i)_{i=1}^{s-m}}$,
it only contains the first $e'$ rows of $V_{k,e,I^{Q_a},J^{Q_a}}$ for some $e'\in\{0,\dots,e\}$. If $e'=0$, then replacing $Q_a$ by $Q_b'$ does not change $M$ and we are done. So assume $e'>0$.

Define $M'$ as the matrix $M|_{X_i=\alpha_i \text{ for } i\in S}$
at \cref{line:12} in the $t$-th iteration,
and define $M''$ as $M|_{X_i=\alpha_i \text{ for } i\in S}$ at \cref{line:12} in the $(t+1)$-th iteration. By assumption, $M'$ is nonsingular, and we want to show that $M''$ is also nonsingular.
Define the increasing subsequences $I',J',I'',J''\in [n]^{e'}$
by $I'_i=I_{j_i}$, $J'_i=J_{j_i}$, $I''_i=I_{j_i'}$, and $J''_i=J_{j_i'}$ for $i\in [e']$, i.e., $I'=I^{Q_a}$, $J'=J^{Q_a}$, $I''=I^{Q_b'}$, and $J''=J^{Q_b'}$.
Then replacing $Q_a$ by $Q_b'$ has the effect that the submatrix $V_{k,e',I',J'}$ of $M'$ is replaced by 
the submatrix $V_{k,e',I'',J''}$ of
$M''$, turning $M'$ into $M''$.

Note that $(I',J')$ is a chain of the same type as $(I^{Q_a}, I^{Q_a})$, and $(I'',J'')$ is a chain of the same type as $(I^{Q_b},J^{Q_b})$.
There are several cases (see Figure~\ref{fig:replace}):
\begin{enumerate}
\item $(I', J')$ and $(I'', J'')$ are both Type-II chains, and $I'_1<J'_1$ iff $I''_1<J_1''$. 
In this case, if we substitute $X_{I'_i}$ for $X_{I''_i}$ and $X_{J'_i}$ for $X_{J''_i}$ in $M''$ for $i\in [e']$, then we recover $M'$. That is, $M'=M''|_{X_{I''_i}=X_{J'_i}, X_{J''_i}=X_{J'_i} \text{ for } i\in [e']}$. Note that either $I''_i=J''_{i+1}$ for $i=1,\dots,e'-1$ or $I''_{i+1}=J''_i$ for $i=1,\dots,e'-1$, but the same relations hold for the coordinates of $I'$ and $J'$ as well since $(I',J')$ and $(I'',J'')$ have the same type and the same orientation. So the substitutions of variables are valid.

\item $(I', J')$ and $(I'', J'')$ are both Type-II chains, and $I'_1<J'_1$ iff $I''_1>J_1''$. 
In this case, if we substitute $X_{I'_{e-i+1}}$ for $X_{I''_i}$ and $X_{J'_{e-i+1}}$ for $X_{J''_i}$ in $M''$ for $i\in [e']$, then we recover $M'$ up to reordering of rows. That is, $M'$ is, up to reordering of rows, equal to the matrix $M''|_{X_{I''_i}=X_{J'_{e'-i+1}}, X_{J''_i}=X_{J'_{e'-i+1}} \text{ for } i\in [e']}$.
Note that if $I''_i=J''_{i+1}$, then $I'_{e-i+1}=J'_{e-i}$. Similarly, if $I''_{i+1}=J''_{i}$, then $I'_{e-i}=J'_{e-i+1}$. That is, the relations satisfied by the coordinates of $I''$ and $J''$ are also satisfied by the coordinates of the \emph{reversals} of $I'$ and $J'$. So the substitutions of variables are valid.

\item $(I', J')$ is a Type-I chain. In this case, $e'=1$ and $I_1'=J_1'$.
Then $M'=M''|_{X_{I_1''}=X_{I_1'}, X_{J_1''}=X_{J_1'}}$. Note that $(I'',J'')$ may or may not be a Type-I chain, but the substitutions of variables are valid regardless.
\end{enumerate}

Finally, if $(I'', J'')$ is a Type-I chain, then by \cref{order-on-Pi}, the chain $(I',J')$, which comes from $(I^{P_{b'}}, J^{P_{b'}})$ for some $b'<b$, is also a Type-I chain. So the above three cases cover all the possibilities.
In each case, $M'$ is, up to reordering rows, equal to $\sigma(M'')$, where $\sigma$ is a ring endomorphism of $\Fq[X_i: i\in [n]\setminus S]$ induced by the substitutions of variables described above and applied to $M''$ entrywise.
So $\det(M')=\pm \sigma(\det(M''))$.
As $\det(M')\neq 0$, we have $\det(M'')\neq 0$, i.e., $M''$ is nonsingular. 
\end{proof}

Now we are ready to bound the probability that \cref{alg:algorithm-linear} outputs a particular sequence.

\begin{lemma}\label{lastlem}
Suppose $q\geq n$ and $(\alpha_1,\dots,\alpha_n)$ is chosen uniformly at random from the set of all $n$-tuples of distinct elements in $\F_q$. 
The probability that \cref{alg:algorithm-linear} outputs a fixed sequence $(i_1^*,\dots,i_r^*)\in \{0,\dots,2k-2\}^r$ is at most $p^r$, where
\[
p=\frac{((1/\eps_0)+1)\cdot 2(k-1)}{q-n+1}.
\]
\end{lemma}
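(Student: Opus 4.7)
The plan is to adapt the proof of \cref{lem:certprob} to handle the block structure of chains. For each $j \in [r]$, \cref{lem:determined} determines from $(i_1^*, \ldots, i_j^*)$ and the fixed input the matrix $M_j$ appearing at \cref{line:9} in the iteration whose $j$-th failure has $c = i_j^*$, the set $S_j$ of variables that have been absorbed into $S$ at that point, and the ``attempted'' set $T_j := \var(I, J, Q_a^{(j)})$. Define $E_j$ to be the event that $\det(M_j|_{X_i = \alpha_i, i \in S_j}) \neq 0$ as a polynomial in $\F_q[X_i : i \notin S_j]$ but $\det(M_j|_{X_i = \alpha_i, i \in S_j \cup T_j}) = 0$ as a polynomial in $\F_q[X_i : i \notin S_j \cup T_j]$. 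By \cref{lem:nonsingular} (for the first conjunct) and the failure criterion at \cref{line:12} (for the second), the event that \cref{alg:algorithm-linear} outputs $(i_1^*, \ldots, i_r^*)$ is contained in $E_1 \wedge \cdots \wedge E_r$, so it suffices to bound the probability of this intersection by $p^r$.

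Three structural inputs underpin the analysis: (i) by \cref{roub}, $\deg_{X_i}(\det M_j) \leq 2(k-1)$ for every $i$; (ii) since each chain $Q_a^{(j)}$ has size at most $1/\eps_0$ and involves at most $|Q_a^{(j)}| + 1$ variables, $|T_j| \leq 1/\eps_0 + 1$; (iii) the sets $T_1, \ldots, T_r$ are pairwise disjoint, and $T_j \cap S_{j'} = \emptyset$ for all $j, j'$. Part (iii) follows from \cref{lem:new-elements} for the pairwise disjointness and for $j' \leq j$, while for $j' > j$ the variables of $T_j$ are discarded at the $j$-th failure (since the chain $Q_a^{(j)}$ is replaced at \cref{line:17,line:18}) and therefore never appear in any future $\var(I, J, Q_a^{(j')})$ and are never added to $S$.

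Next, condition on $(\alpha_i)_{i \in [n] \setminus T}$, where $T := T_1 \cup \cdots \cup T_r$. Since $\bigcup_j S_j \subseteq [n] \setminus T$ by (iii), for each $j$ the polynomial $R_j'' \in \F_q[X_i : i \in T]$ obtained from $\det(M_j)$ by substituting these conditioned values is fully determined. If some $R_j'' = 0$ then the first conjunct of $E_j$ fails and the conditional probability of the output event is zero. Otherwise, write
\[
R_j'' \;=\; \sum_{\bm{e}} Q_{\bm{e}}(X_i : i \in T_j) \prod_{i \in T \setminus T_j} X_i^{e_i},
\]
choose $\bm{e}^*$ with $Q_{\bm{e}^*} \neq 0$, and set $Q_j := Q_{\bm{e}^*} \in \F_q[X_i : i \in T_j]$, which inherits $\deg_{X_i}(Q_j) \leq 2(k-1)$ from (i). The second conjunct of $E_j$ forces $R_j''|_{X_i = \alpha_i, i \in T_j}$ to vanish identically as a polynomial in $X_i$ for $i \in T \setminus T_j$, so $Q_{\bm{e}}(\alpha) = 0$ for every $\bm{e}$, and in particular $Q_j(\alpha) = 0$.

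It remains to bound $\Pr[\bigwedge_{j \in [r]} Q_j(\alpha) = 0 \mid (\alpha_i)_{i \notin T}] \leq p^r$ via a block-wise chain rule. Using that marginals of uniform-without-replacement distributions remain uniform-without-replacement, for each $j$ and any realization of $(\alpha_i)_{i \in T_1 \cup \cdots \cup T_{j-1}}$ consistent with $Q_1 = \cdots = Q_{j-1} = 0$, the conditional distribution of $(\alpha_i)_{i \in T_j}$ is uniform over $|T_j|$-tuples of distinct elements drawn from a pool of size at least $q - n + |T_j|$. \cref{lem:zero-prob} then yields
\[
\Pr[Q_j(\alpha) = 0 \mid \text{above conditioning}] \;\leq\; \frac{|T_j| \cdot 2(k-1)}{q - n + 1} \;\leq\; \frac{((1/\eps_0) + 1) \cdot 2(k-1)}{q - n + 1} \;=\; p.
\]
Multiplying over $j \in [r]$ and then averaging over the outer conditioning yields the claimed bound of $p^r$. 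The main subtlety I anticipate is verifying (iii), which is what allows the outer conditioning to fully determine each $R_j''$, and arguing that \cref{lem:nonsingular} indeed guarantees $R_j'' \neq 0$ on all conditionings whose conditional contribution to the output event is nonzero.
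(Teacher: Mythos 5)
Your high-level plan — fix the target certificate, define events $E_j$ from \cref{lem:determined}, condition once on $(\alpha_i)_{i \notin T}$, and then run a block-wise chain rule over the pairwise $(I,J)$-disjoint sets $T_j$ — is a sound reorganization of the argument, and the structural facts (i)--(iii) and the final Schwartz--Zippel computation are all correct. However, there is a genuine gap in the middle step.

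You claim: ``If some $R_j'' = 0$ then the first conjunct of $E_j$ fails.'' This does not follow. Recall $R_j'' = \det(M_j)|_{X_i = \alpha_i,\ i \notin T}$, whereas the first conjunct of $E_j$ is the nonvanishing of $\det(M_j)|_{X_i=\alpha_i,\ i\in S_j}$, a polynomial in $\F_q[X_i : i\notin S_j]$. Since $S_j \subsetneq [n]\setminus T$ in general (the matrix $M_j$ also involves variables from chains $Q_{a+1},\dots$ that cover the remaining rows of the top $(2k-1)\times(2k-1)$ block, and these indices may lie outside both $S_j$ and $T$), $R_j''$ is obtained from the first-conjunct polynomial by \emph{further} substitutions in $X_i$ for $i\in ([n]\setminus T)\setminus S_j$. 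A nonzero polynomial can become zero under such a substitution, so $R_j'' = 0$ does not contradict the first conjunct — and therefore does not force the conditional output probability to vanish. As written, your argument has no nonzero coefficient $Q_j$ to hand to \cref{lem:zero-prob} in that case, and the bound does not go through.

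The fix is to extract the coefficient from the less-substituted polynomial. Under your conditioning, the first conjunct is a fixed truth value; if it fails you are done, and if it holds then $P_j := \det(M_j)|_{X_i = \alpha_i,\ i \in S_j}$ is a \emph{guaranteed} nonzero polynomial in $\F_q[X_i : i\notin S_j]$ (this is exactly what \cref{lem:nonsingular} provides). View $P_j$ as a polynomial in the variables $X_i$, $i \notin S_j\cup T_j$, over the ring $\F_q[X_i : i\in T_j]$, and take $Q_j$ to be a nonzero coefficient. Then the second conjunct of $E_j$ forces $Q_j(\alpha_i : i\in T_j)=0$ exactly as you argue, the degree bound $\deg_{X_i}Q_j\le 2(k-1)$ still holds, and your chain rule goes through unchanged. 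This is precisely the coefficient-extraction performed in the paper's proof (which conditions iteratively rather than all at once, but uses the same device of pulling a nonzero coefficient out of $\det(M^*) = \det(M|_{X_i=\alpha_i,\ i\in S})$, never out of a further-specialized $R_j''$). One smaller nit: your attribution of $T_j \cap T_{j'} = \emptyset$ to \cref{lem:new-elements} is imprecise — that lemma speaks about membership in $S'$, not about disjointness of discarded chains; the correct justification is the discard argument you give for $j'>j$ together with \cref{lem:disjoint}.
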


\begin{proof}
For $0\leq t\leq r$, let $F_{t}$ be the event that the algorithm chooses $(i_1,\dots,i_t)=(i_1^*,\dots,i_{t}^*)$. 
We will prove by induction that $\Pr[F_t]\leq p^{t}$. The lemma follows by choosing $t=r$.

For $t=0$, the claim hold trivially. Assume the claim holds for some $t<r$. We now prove that it holds for $t+1$ as well.

Let $F'_t$ be the sub-event of $F_t$ that the algorithm reaches \cref{line:12} with $j=t$, $(i_1,\dots,i_t)=(i^*_1,\dots,i^*_t)$, and $c=i^*_{t+1}$.
Note that $F'_t$ must happen if $F_{t+1}$ happens, i.e., $F_{t+1}$ is a sub-event of $F'_t$. 
If $\Pr[F'_t]=0$, then $\Pr[F_{t+1}]=0$ and we are done. So assume this is not the case.
By the induction hypothesis, we have $\Pr[F'_t]\leq \Pr[F_t]\leq p^{t}$.
So it suffices to bound the conditional probability $\Pr[F_{t+1}|F_t']$ by $p$.

Condition on the event $F_t'$. Consider the moment when algorithm reaches \cref{line:12} with $j=t$, $(i_1,\dots,i_t)=(i^*_1,\dots,i^*_t)$, and $c=i^*_{t+1}$.
By \cref{lem:determined}, $Q_1,\dots,Q_s$ and $a$ are determined.
The set $S$ of assigned variables
is also determined via $S=\bigcup_{i=1}^{a-1} \var(I,J, Q_i)$.

We further fix arbitrary values of $\alpha_i$ for $i\in [n]\setminus \var(I,J,Q_a)$ consistent with $F_t'$.
It suffices to show that the probability that $F_{t+1}$ occurs (conditioned on $F'_t$ and the fixed $\alpha_i$ for $i\in [n]\setminus \var(I,J,Q_a)$) is at most $p$.

By \cref{lem:new-elements}, the elements $\alpha_i$ for $i\in \var(I,J,Q_a)$ have not been used in the previous iterations of the while loop.
Thus, the (conditional) distribution of $(\alpha_i)_{i\in\var(I,J,Q_a)}$ is the uniform distribution over the set of $|\var(I,J,Q_a)|$-tuples with distinct coordinates in the set $T:=\Fq\setminus \{\alpha_i: i\in [n]\setminus \var(I,J,Q_a)\}$. 

Let $M^*:=M|_{X_i=\alpha_i \text{ for } i\in S}$.
Then $\overline{M}=M^*|_{X_i=\alpha_i \text{ for } i\in \var(I,J,Q_a)}$.
By \cref{lem:nonsingular}, $M^*$ is nonsingular, i.e., $\det(M^*)\neq 0$.
View $\det(M^*)$ as a polynomial in the variables $X_i$, $i\in [n]\setminus (S\cup \var(I,J,Q_a))$, over the polynomial ring $\Fq[X_i: i\in \var(I,J,Q_a)]$. Let $Q\in \Fq[X_i: i\in \var(I,J,Q_a)]$ be the coefficient of a nonzero term of $\det(M^*)$. Then $Q\neq 0$. We also have 
\[
\deg_{X_i}(Q)\leq \deg_{X_i}(\det(M^*))\leq 2(k-1)
\]
for $i\in \var(I,J,Q_a)$, where the last inequality holds since $X_i$ appears in at most two rows of $M^*$ and every entry of $M^*$ containing $X_i$ has the form $X_i^d$ with $d\leq k-1$. By \cref{lem:zero-prob}, we have 
\[
\Pr[Q(\alpha_i: i\in \var(I,J,Q_a))=0]\leq \frac{|\var(I,J,Q_a)|\cdot 2(k-1)}{|T|-|\var(I,J,Q_a)|+1}
\leq \frac{((1/\eps_0)+1)\cdot 2(k-1)}{q-n+1}=p,
\]
where the second inequality uses the facts that $|\var(I,J,Q_a)|\leq |Q_a|+1\leq (1/\eps_0)+1$ (as $(I^{Q_a},J^{Q_a})$ is a chain) and that the size of $T=\Fq\setminus \{\alpha_i: i\in [n]\setminus \var(I,J,Q_a)\}$ is $q-n+|\var(I,J,Q_a)|$.

Note that $Q(\alpha_i: i\in \var(I,J,Q_a))$ is the coefficient of a term of $\det(M^*)|_{X_i=\alpha_i \text{ for } i\in\var(I,J,Q_a)}=\det(\overline{M})$.
So if $\det(\overline{M})$ is zero, so is $Q(\alpha_i: i\in \var(I,J,Q_a))$.
This implies that the probability that $\overline{M}$ is singular is at most $p$.
Note that $F_{t+1}$ occurs only if $\overline{M}$ is singular.
(This is because, assuming $\overline{M}$ is not singular, the algorithm will not run \crefrange{line:17}{line:24} and store $c=i_{t+1}^*$ into $i_{t+1}$. Instead, it will run \crefrange{line:13}{line:15} and increase $c$.) Therefore, the (conditional) probability that $F_{t+1}$ occurs is at most $p$, as desired.
\end{proof}

Taking the union bound over all possible output sequences, we obtain the following corollary.

\begin{cor}\label{cor:prob-bound} 
Suppose $q\geq n$ and $(\alpha_1,\dots,\alpha_n)$ is chosen uniformly at random from the set of all $n$-tuples of distinct elements in $\F_q$.
Also suppose $\ell, r\in\N^+$ and $\eps_0\in (0,1)$ satisfy \eqref{eq:condition}.
Then the probability that $V_{k,\ell,I,J}|_{X_1=\alpha_1,\dots,X_n=\alpha_n}$ does not have full column rank is at most $2^{2k+r-2}p^r$, where $p=\frac{((1/\eps_0)+1)\cdot 2(k-1)}{q-n+1}$.
\end{cor}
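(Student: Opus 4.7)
The plan is to deduce this corollary by combining \cref{lem:terminate}, \cref{lem:output}, and \cref{lem:lastlem} via a union bound, where the key gain over a naive argument comes from \cref{lem:output} restricting the output sequences to non-decreasing ones.

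First I would set up the reduction. By \cref{lem:terminate}, if \cref{alg:algorithm-linear} (run with the given parameters $\eps_0,\ell,r$ and the inputs $I,J,\alpha_1,\dots,\alpha_n$) outputs ``SUCCESS'', then $V_{k,\ell,I,J}|_{X_1=\alpha_1,\dots,X_n=\alpha_n}$ has full column rank. Contrapositively, whenever the matrix fails to have full column rank, the algorithm must output some sequence $(i_1,\dots,i_r)\in\{0,\dots,2k-2\}^r$. Hence
\[
\Pr\bigl[V_{k,\ell,I,J}|_{X_1=\alpha_1,\dots,X_n=\alpha_n}\text{ not full col.\ rank}\bigr]
\;\le\; \sum_{(i^*_1,\dots,i^*_r)} \Pr\bigl[\text{Alg.\ outputs }(i^*_1,\dots,i^*_r)\bigr],
\]
where the sum is over \emph{all} possible output sequences.

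Second, I would invoke \cref{lem:output}, which guarantees that every sequence actually produced by the algorithm satisfies $i_1\le i_2\le\dots\le i_r$. Consequently the sum above only needs to range over non-decreasing tuples in $\{0,\dots,2k-2\}^r$. The number of such tuples equals $\binom{(2k-1)+r-1}{r}=\binom{2k+r-2}{r}$, which is at most $2^{2k+r-2}$ by the standard bound $\binom{N}{r}\le 2^N$.

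Third, I would apply \cref{lastlem}, which bounds the probability that any particular fixed non-decreasing sequence is produced by at most $p^r$ (this is where the hypothesis $q\ge n$ and \eqref{eq:condition} are used — the latter is needed so that \cref{alg:algorithm-linear} can legitimately be instantiated). Combining these two facts gives
\[
\Pr\bigl[V_{k,\ell,I,J}|_{X_1=\alpha_1,\dots,X_n=\alpha_n}\text{ not full col.\ rank}\bigr]
\;\le\; \binom{2k+r-2}{r}\cdot p^r
\;\le\; 2^{2k+r-2}\cdot p^r,
\]
which is exactly the claimed bound. I do not anticipate a real obstacle: the work has been done in the preceding lemmas, and the only substantive content here is the observation that the monotonicity from \cref{lem:output} shrinks the union bound from $(2k-1)^r$ to the much smaller $\binom{2k+r-2}{r}$, which is crucial for getting a linear (rather than quadratic) alphabet size.
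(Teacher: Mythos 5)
Your proposal matches the paper's proof exactly: reduce to bounding the probability the algorithm outputs a sequence (via \cref{lem:terminate}), observe by \cref{lem:output} that the output sequences are non-decreasing so they number $\binom{2k+r-2}{r}\le 2^{2k+r-2}$, apply \cref{lastlem} to bound each such output by $p^r$, and union bound. Correct and same approach.
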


\begin{proof}
By \cref{lem:output}, any sequence $(i_1,\dots,i_r)\in\{0,\dots,2k-2\}^r$ output by \cref{alg:algorithm-linear} satisfies $i_1\leq \dots\leq i_r$.
The number of such sequences is $\binom{(2k-1)+r-1}{r}\leq 2^{2k+r-2}$.
By \cref{lastlem}, any fixed sequence is output with probability at most $p^r$.
So by the union bound,
the probability that \cref{alg:algorithm-linear} outputs a sequence $(i_1,\dots,i_r)$ is at most $2^{2k+r-2} p^r$. 

By \cref{lem:terminate}, 
\cref{alg:algorithm-linear} outputs a sequence $(i_1,\dots,i_r)$ whenever $V_{k,\ell,I,J}|_{X_1=\alpha_1,\dots,X_n=\alpha_n}$ does not have full column rank. 
So the probability of the latter event is at most $2^{2k+r-2}\cdot p^r$. 
\end{proof}
Finally, we prove our main theorem below.
\begin{thm}[Detailed version of \cref{main2}]\label{main}
Let $\varepsilon\in (0,1)$ and $n,k\in\N^+$, where $k\leq n$.  
Let $q$ be a prime power such that $q\geq n+2^{c/\eps^2} k$, where $c>0$ is a large enough absolute constant.
Suppose $(\alpha_1,\dots,\alpha_n)$ is chosen uniformly at random from the set of all $n$-tuples of distinct elements in $\F_q$.
Then with probability at least $1-2^{-n}>0$, the code $\text{RS}_{n,k}(\alpha_1, \ldots, \alpha_n)$ over $\Fq$ corrects at least $(1-\epsilon)n-2k+1$ adversarial insdel errors.
\end{thm}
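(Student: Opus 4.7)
The plan is to combine Lemma~\ref{bad}, Corollary~\ref{cor:prob-bound}, and a union bound over pairs of increasing subsequences, instantiating the free parameters $\varepsilon_0$ and $r$ of Corollary~\ref{cor:prob-bound} carefully.

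Set $\ell := (2k-1) + \lfloor \varepsilon n \rfloor$, so that correcting $n - \ell \ge (1-\varepsilon)n - 2k + 1$ adversarial insdel errors is the stated target. By the contrapositive of Lemma~\ref{bad}, it suffices to show that, with probability at least $1 - 2^{-n}$ over the random $(\alpha_1,\ldots,\alpha_n)$, for every pair of increasing subsequences $I, J \in [n]^\ell$ agreeing on at most $k-1$ coordinates, the matrix $V_{k,\ell,I,J}|_{X_i = \alpha_i}$ has full column rank. Since the number of such pairs is at most $\binom{n}{\ell}^2 \le 2^{2n}$, it will be enough to drive the per-pair failure probability below $2^{-3n}$.

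The key step is the choice of parameters for Corollary~\ref{cor:prob-bound}. Take $\varepsilon_0 := \varepsilon/C_1$ for a suitable absolute constant $C_1$ (e.g.\ $C_1 = 8$) and $r := \lfloor c_0 \varepsilon^2 n \rfloor$ for a small absolute constant $c_0$ (e.g.\ $c_0 = 1/64$). Substituting $\ell = 2k - 1 + \lfloor \varepsilon n \rfloor$ and using $k \le n$, the hypothesis \eqref{eq:condition} reduces to roughly $(1-\varepsilon_0)\lfloor \varepsilon n \rfloor \ge (r+1)/\varepsilon_0 + \varepsilon_0(2k-1)$, which is satisfied by the above choices. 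Corollary~\ref{cor:prob-bound} then yields a per-pair failure bound of $2^{2k+r-2} p^r$ with $p = ((1/\varepsilon_0)+1)\cdot 2(k-1)/(q-n+1)$. Under the hypothesis $q \ge n + 2^{c/\varepsilon^2}k$, this gives $p = O(k/(\varepsilon(q-n))) \le 2^{-c/(2\varepsilon^2)}$ once $c$ absorbs the $O(\log(1/\varepsilon))$ slack, so $p^r \le 2^{-\Omega(cn)}$. Combined with the $2^{2n}$ union-bound factor and the $2^{2k+r-2} \le 2^{2n+O(1)}$ prefactor, the overall failure probability is at most $2^{(4 + o(1) - \Omega(c))n}$, which is below $2^{-n}$ once $c$ is a sufficiently large absolute constant.

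The main technical point is the two-sided tension on $\varepsilon_0$: inequality \eqref{eq:condition} forces $\varepsilon_0$ small enough to admit $r = \Omega(\varepsilon^2 n)$, but $p$ scales as $1/\varepsilon_0$ and therefore grows as $\varepsilon_0$ shrinks. The balanced scaling $\varepsilon_0 = \Theta(\varepsilon)$ and $r = \Theta(\varepsilon^2 n)$ is what ultimately necessitates the $2^{\Theta(1/\varepsilon^2)}k$ term in the alphabet size; pushing $r$ larger would either violate \eqref{eq:condition} or require a smaller $p$ and hence a larger $q$.
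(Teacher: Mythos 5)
Your overall structure matches the paper's proof closely: choose $\ell = (2k-1) + \lfloor \varepsilon n\rfloor$, instantiate Corollary~\ref{cor:prob-bound} with $\varepsilon_0 = \Theta(\varepsilon)$ and $r = \Theta(\varepsilon^2 n)$, and union-bound over the at most $\binom{n}{\ell}^2 \le 2^{2n}$ pairs $(I,J)$, driving the per-pair failure bound below $2^{-3n}$. The parameter scalings and the asymptotics you compute are all on target, and your closing remark about why $\varepsilon_0 = \Theta(\varepsilon)$ is forced is a correct explanation of where the $2^{\Theta(1/\varepsilon^2)}$ factor in $q$ comes from.

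However, there is a genuine gap: you do not handle the regime of small $\varepsilon^2 n$. With your choice $r = \lfloor c_0 \varepsilon^2 n\rfloor$, if $\varepsilon^2 n < 1/c_0$ then $r = 0$, which is not a valid input to Corollary~\ref{cor:prob-bound} (which needs $r\in\N^+$), and the whole amplification $p^r \le 2^{-\Omega(cn)}$ collapses to $p^0 = 1$. Even replacing the floor with a ceiling does not rescue the argument, because hypothesis~\eqref{eq:condition} also fails when $\varepsilon^2 n$ is below a fixed constant --- the RHS contains a term $(r+1)/\varepsilon_0 = \Theta(1/\varepsilon)$ that is not dominated by the $(1-\varepsilon_0)\lfloor\varepsilon n\rfloor = O(\varepsilon n)$ on the LHS once $\varepsilon^2 n = O(1)$. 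The theorem statement quantifies over all $\varepsilon\in(0,1)$ and all $n,k\in\N^+$ with $k\le n$, so this regime is not excluded. The paper resolves it with an explicit case split: when $\varepsilon^2 n \le c_0$, the hypothesis $q \ge n + 2^{c/\varepsilon^2}k$ with $c = c_0^2$ forces $q \ge 2^{c_0 n}$, i.e., an exponential alphabet, at which point the stronger exact-half-Singleton result of \cite[proof of Theorem 16]{con2023reed} (sketched in \cref{sec:proof-overview}) applies directly. You should add this case split, and when $\varepsilon^2 n \ge c_0$ explicitly verify \eqref{eq:condition} rather than asserting it ``roughly'' --- note that the check uses $\ell \le n$ (so $2k-1 \le n - \lfloor\varepsilon n\rfloor$, making the term $\varepsilon_0(2k-1)$ manageable) and $\varepsilon^2 n \ge c_0$ with $c_0$ large enough to absorb the $O(1/\varepsilon)$ slack.
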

\begin{proof}
Let $c_0>0$ be a large enough absolute constant, and let $c=c_0^2$.
First assume $\eps^2 n\leq c_0$. Since $c/\eps^2\geq c_0 n$, it suffices to prove the claim for $q\geq n+2^{c_0 n}k=2^{\Theta(n)}$.
In this case, the code can, in fact, correct at least $n-2k+1$ adversarial insdel errors. A proof was implicitly given in \cite[proof of Theorem 16]{con2023reed} and also sketched in our \cref{sec:proof-overview}.

Next, assume $\eps^2 n\geq c_0$.
Choose $\ell=(2k-1) + \floor{\varepsilon n}$, $\eps_0=\eps/4$, and $r=\lceil\frac{\eps^2 n}{16}\rceil$. It is easy to verify that that \eqref{eq:condition} holds given that the constant $c_0$ is large enough.

By Lemma \ref{bad}, if $\mathsf{RS}_{n,k}(\alpha_1,\dots,\alpha_n)\subseteq\F_q^n$ fails to correct $n-\ell$ adversarial insdel errors, then we will have two increasing subsequences $I,J\in[n]^{\ell}$ which agree on at most $k-1$ coordinates, such that the matrix $V_{k,\ell,I, J}|_{X_{1}=\alpha_{1},\dots,X_{n}=\alpha_{n}}$ does not have full column rank. On the other hand, applying \cref{cor:prob-bound} with the chosen parameters $\ell$, $\eps_0$, and $r$, 
for fixed $I,J$, we have that
\begin{align*}
\Pr\left[\begin{aligned}
    &\text{The matrix }V_{k,\ell,I,J}|_{X_1=\alpha_1,\dots,X_{n}=\alpha_{n}}\\
&\text{does not have full column rank}
\end{aligned}\right]&\leq 
2^{2k+\lceil \frac{\varepsilon^2 n}{16}\rceil-2}\cdot\left(\frac{((4/\eps)+1)\cdot 2(k-1)}{q-n+1}\right)^{\lceil\frac{\varepsilon^2 n}{16}\rceil}\\
&\leq 2^{3n}\cdot \left(\frac{((4/\eps)+1)\cdot 2(k-1)}{q-n+1}\right)^{\lceil\frac{\varepsilon^2 n}
{16}\rceil}\\
&\leq 2^{-3n},
\end{align*}
where the last inequality uses the facts that $q\geq n+2^{c/\eps^2} k$ and that $c>0$ is a large enough constant.
By the union bound over all $(I,J)$,
the probability that $\mathsf{RS}_{n,k}(\alpha_1,\dots,\alpha_n)$ cannot correct $(1-\epsilon)n-2k+1$ adversarial insdel errors is at most $\binom{n}{\ell}^2\cdot 2^{-3n}\leq 2^{-n}$, as desired.
\end{proof}

\bibliographystyle{alpha}
\bibliography{ref}

\end{document}